\providecommand{\U}[1]{\protect\rule{.1in}{.1in}}
\newtheorem{theorem}{Theorem}
\newtheorem{conclusion}[theorem]{Conclusion}
\newtheorem{definition}[theorem]{Definition}
\newtheorem{proposition}[theorem]{Proposition}
\newtheorem{remark}[theorem]{Remark}
\newenvironment{proof}[1][Proof]{\noindent\textbf{#1.} }{\ \rule{0.5em}{0.5em}}
\begin{document}

\title{Notes on Conservation Laws, Equations of Motion of Matter and Particle Fields
in Lorentzian and Teleparallel de Sitter Spacetime Structures}
\author{Waldyr A. Rodrigues Jr. and Samuel A. Wainer\\IMECC-UNICAMP\\{\footnotesize walrod@ime.unicamp.br~~~samuelwainer@ime.unicamp.br}}
\maketitle
\tableofcontents

\begin{abstract}
In this paper we discuss the physics of interacting tensor fields and
particles living in a de Sitter manifold $M=\mathrm{S0}(1,4)/\mathrm{S0}%
(1,3)\simeq\mathbb{R}\times S^{3}$ interpreted as a submanifold of
$(\mathring{M}=\mathbb{R}^{5},\boldsymbol{\mathring{g}})$, with
$\boldsymbol{\mathring{g}}$ a metric of signature $(1,4)$. The pair
$(M,\boldsymbol{g})$ \ where $\boldsymbol{g}$ is the pullback metric of
$\boldsymbol{\mathring{g}}$ $(\boldsymbol{g=i}^{\ast}\boldsymbol{\mathring{g}%
})$ is a Lorentzian manifold that is oriented by $\tau_{\boldsymbol{g}}$ and
time oriented by $\uparrow$. It is the structure $(M,\boldsymbol{g,}%
\tau_{\boldsymbol{g}},\uparrow)$ that is primely used to study\ the
energy-momentum conservation law for a system of physical fields (and
particles) living in $M$ and to get the equations of motion of the fields and
also the equations of motion describing the behavior of free particles. To
achieve our objectives we construct two different de Sitter spacetime
structures $M^{dSL}=(M,\boldsymbol{g,D},\tau_{\boldsymbol{g}},\uparrow)$ and
$M^{dSTP}=(M,\boldsymbol{g,\nabla},\tau_{\boldsymbol{g}},\uparrow)$, where
$\boldsymbol{D}$ is the Levi-Civita connection of $\boldsymbol{g}$ and
$\boldsymbol{\nabla}$ is a metric compatible parallel connection. Both
connections are introduced in our study only as mathematical devices, no
special physical meaning is attributed to these objects. In particular
$M^{dSL}$ is not supposed to be the model of any gravitational field in the
General Relativity Theory (\textbf{GRT}). Our approach permit to clarify some
misconceptions appearing in the literature, in particular one claiming that
free particles in the de Sitter structure $(M,\boldsymbol{g})$ do not follows
timelike geodesics. The paper makes use of the Clifford and spin-Clifford
bundles formalism recalled in one of the appendices, something needed for a
thoughtful presentation of the concept of a Komar current $\boldsymbol{J}_{A}$
(in \textbf{GRT}) associated to any vector field $\mathbf{A}$ generating a one
parameter group of diffeomorphisms. The explicit formula for $\boldsymbol{J}%
_{\emph{A}}$ in terms of the energy-momentum tensor of the fields and its
physical meaning is given. Besides that we show how $F=dA$ ($A=\boldsymbol{g}%
(\mathbf{A},~)$ satisfy a Maxwell like equation $\boldsymbol{\partial
}F=\boldsymbol{J}_{\emph{A}}$ which encodes the contents of Einstein equation.
Our results shows that in \textbf{GRT} there are infinitely many conserved
currents, independently of the fact that the Lorentzian spacetime
(representing a gravitational field) possess or not Killing vector fields.
Moreover our results also show that even when the appropriate timelike and
spacelike Killing vector fields exist it is not possible to define a conserved
energy-momentum \emph{covector }(not covector field) as in Special
Relativistic Theories.

\end{abstract}

\section{Introduction}

In this paper we study some aspects of Physics of fields living and
interacting in a manifold $M=\mathrm{SO}(1,4)\mathrm{/SO}(1,3)\simeq
\mathbb{R}^{3}\times S^{3}$. We introduce two different geometrical spacetime
structures that we can form starting from the manifold $M$ which is supposed
to be a vector manifold, i.e., a submanifold of ($\mathring{M}%
,\boldsymbol{\mathring{g}})$ with $\mathring{M}=\mathbb{R}^{5}$ and
$\boldsymbol{\mathring{g}}$ a metric of signature $(1,4)$. If $\boldsymbol{i}%
:M\rightarrow\mathring{M}$ is the inclusion map the structures that will be
studied are the Lorentzian de Sitter spacetime $M^{dSL}=(M,\boldsymbol{g}%
,\boldsymbol{D},\tau_{\boldsymbol{g}},\uparrow)$ and teleparallel de Sitter
spacetime $M^{dSTP}=(M,\boldsymbol{g},\boldsymbol{\nabla},\tau_{\boldsymbol{g}%
},\uparrow)$ where $\boldsymbol{g=i}^{\ast}\boldsymbol{\mathring{g}}$,
$\boldsymbol{D}$ is the Levi-Civita connection of $\boldsymbol{g}$ and
$\boldsymbol{\nabla}$ is a metric compatible teleparallel connection (see
Section 4.1). Our main objective is the following: taking $(M,\boldsymbol{g})$
as the arena where physical fields live and interact\ how do we formulate
conservation laws of energy-momentum and angular momentum for the system of
physical fields. In order to give a meaningful meaning to this question we
recall the fact that in Lorentzian spacetime structures that are models of
gravitational fields in the \textbf{GRT} there are no genuine conservation
laws of energy-momentum (and also angular momentum) for a closed system of
fields and moreover there are no genuine energy-momentum and angular momentum
conservation laws for the system consisting of non gravitational plus the
gravitational field. We discuss in Section 2.1 a pure mathematical result,
namely when there exists some conserved currents $\mathcal{J}_{V}$ $\in\sec
T^{\ast}M$ in a Lorentzian spacetime associated to a tensor field
$\boldsymbol{W\in}\sec T_{1}^{1}M$ and a vector field $\boldsymbol{V}\in\sec
TM$. In Section 2.2\ we briefly recall how a conserved energy-momentum tensor
for the matter fields is constructed in Special Relativity theories and how in
that theory it is possible to construct a conserved energy-momentum
\emph{covector}\footnote{The energy-momentum \emph{covector }is an element of
a vector space and is not a covector field.}\emph{ f}or the matter fields.
After that we recall that in \textbf{GRT} we have a covariantly
\textquotedblleft conserved\textquotedblright\ energy-momentum tensor
$\boldsymbol{T\in}\sec T_{1}^{1}M$ (i.e., $\boldsymbol{D\bullet T}=0$) and so,
using the results of Section 2.1 we can immediately construct conserved
currents when the Lorentzian spacetime modelling the gravitational field
generated by $\boldsymbol{T}$ possess Killing vector fields. However, we show
that is not possible in general in \textbf{GRT} even when some special
conserved currents exist (associated to one timelike and three spacelike
Killing vector fields) to build a conserved \emph{covector} for the system of
fields, as it is the case in special relativistic theories. Immediately after
showing that we ask the question:\smallskip

Is it necessary to have Killing vector fields in a Lorentzian spacetime
modelling a given gravitational field in order to be possible to construct
conserved currents?\smallskip

Well,\ we show that the answer is \emph{no}. In \textbf{GRT} there are an
infinite number of conserved currents. This is showed in Section
2.4\footnote{This section is an improvement of results first presented in
\cite{rrr2012}.} were we introduce the so called Komar currents in a
Lorentzian spacetime modelling a gravitational field generated by a given
(symmetric) energy momentum tensor $\boldsymbol{T}$\ and show how any
diffeomorphism associated to a one parameter group generated by a vector field
$\mathbf{A}$ lead to a conserved current. We show moreover using the Clifford
bundle formalism recalled in Appendix A that $F=dA$ $\boldsymbol{\in}\sec%
{\textstyle\bigwedge\nolimits^{2}}
T^{\ast}M$ where $A=\boldsymbol{g}(\mathbf{A},~)\in\sec%
{\textstyle\bigwedge\nolimits^{1}}
T^{\ast}M$ satisfy, (with $\boldsymbol{\partial}$ denoting the Dirac operator
acting on sections of the Clifford bundle of differential forms) \ a Maxwell
like equation $\boldsymbol{\partial}F=\boldsymbol{J}_{A}$ (equivalent to
$dF=0$ and $\underset{\boldsymbol{g}}{\delta}F=-\boldsymbol{J}_{A}$). The
explicit form of $\mathcal{J}_{A}$ as a function of the energy-momentum tensor
is derived \ (see Eq.(\ref{kocurr})) together with its scalar invariant. We
establish that\footnote{The symbol $\boldsymbol{\partial}$ denotes the the
Dirac operator acting on sections of the Clifford bundle $\mathcal{C\ell
}(M,\mathtt{g})$. See Appendix A.} $\boldsymbol{\partial}F=\boldsymbol{J}_{A}$
encode the contents of Einstein equation. We show moreover that even if we can
get four conserved currents given one time like and three spacelike vector
fields and thus get four scalar invariants these objects \emph{cannot} be
associated to the components of a momentum \emph{covector}\footnote{Not a
covector field.} for the system of fields producing the energy-momentum tensor
$\boldsymbol{T}$. We also give the form of $\boldsymbol{J}_{A}$ when
$\mathbf{A}$ is a Killing vector field and emphasize that even if the
Lorentzian spacetime under consideration has one time like and three spacelike
Killing vector fields we cannot find a conserved momentum covector for the
system of fields.\smallskip

This paper has several appendices necessary for a perfect intelligibility of
the results in the main text. Thus it is opportune to describe what is there
and where their contents are used in the main text\footnote{Some of the
material of the Appendices is well known, but we think that despite this fact
\ theri presentation here will be useful for most of our readers.}. To start,
in Appendix A we briefly recall the main results of the Clifford bundle
formalism used in this paper which permits one to understand how to arrive at
the equation $\boldsymbol{\partial}F\boldsymbol{=J}_{A}$ in Section
2.2.\footnote{The Clifford bundle formalism permits the representation of a
covariant Dirac spinor field as certain equivalence classes of even sections
of the Clifford bundle, called Dirac-Hestenes spinor field (\textbf{DHSF}).
These objects are a key ingredient to clarify the concept of Lie derivative of
spinor fields of and give meaningful definition for such an object , something
necessary to study conservation laws in Lorentzian spacetime structures when
spinor fields are present. Our approach to the subject is descrbed in
\cite{lrw2015} and athoughtful derivation of Dirac equation in de Sitter
structure $(M,\boldsymbol{g})$\ using \textbf{DHSFs} is given in
\cite{RWC2015}.}. Lie derivatives and variations of tensor fields is discussed
in Appendix B. In Section C1 we derive from the Lagrangian formalism conserved
currents for fields living in a general Lorentzian spacetime structure and the
corresponding generalized covariant energy-momentum \textquotedblleft
conservation\textquotedblright\ law. We compare these results in Section C2
with the analogues ones for field theories in special relativistic theories
where the Lorentzian spacetime structure is Minkowski spacetime. We show that
despite the fact that we can derive conserved quantities for fields living and
interacting in $M^{\ell DS}$ we cannot define in this structure a genuine
energy-momentum conserved covector for the system of fields as it is the case
in Minkowski spacetime. A legitimate energy-momentum covector for the system
of fields living in $(M=\mathrm{SO}(1,4)\mathrm{/SO}(1,3),\boldsymbol{g})$
exist only in the teleparallel structure $M^{dSTP}$. This is discussed in
Section 5.2 after recalling the Lie algebra and the Casimir invariants of the
Lie algebra of de Sitter group in Section 5.1. In Appendix E we derive for
completeness and to insert the Remark \ref{dtt=0} the so called covariant
energy-momentum conservation law in \textbf{GRT}. Appendix D recalls the
intrinsic definition of relative tensors and their covariant derivatives.
Appendix E present proofs of some identities used in the main text.

As we already said the main objective of this paper is to discuss the Physics
of interacting fields in de Sitter spacetime structures $M^{dSL}$ and
$M^{dSTP}$. In particular we want also to clarify some misunderstandings
concerning the roles of geodesics in the $M^{dS\ell}$. So, in section 3 we
briefly recall the conformal representation of the de Sitter
spacetime\ structure $M^{dSL}$ and prove that the one timelike and the three
spacelike \textquotedblleft translation\textquotedblright\ Killing vector
fields of $(M,\boldsymbol{g})$ defines a basis for almost all $M$. With this
result we show in Section 4 that the method using in \cite{ps2012} to obtain
the curves which extremizes the length function of timelike curves in de
Sitter spacetime with the result that these curves are not geodesics is
equivocated, since those authors use \emph{constrained variations} instead of
\emph{arbitrary variations} of the length function. Even more, the equation
obtained from the constrained variation in \cite{ps2012} is according to our
view wrongly interpreted in its mathematical (and physical) contents. Indeed,
using some of the results of Section 5.2 and the results of Section 6 which
briefly recall Papapetrou's classical results \cite{papapetrou} deriving the
equation of motion of a probe single-pole particle in \textbf{GRT},\textbf{
}we show in Section 7 that contrary to the authors statement in \cite{ps2012}
it is not true that the equation of motion of a single-pole obtained from a
method similar to Papapetrou's one\ in \cite{papapetrou} but using the
generalized energy-momentum tensor of matter fields in $M^{dSL}$ gives an
equation of motion different from the geodesic equation in $M^{dSL}$ and in
agreement with the one they derived from his constrained variation method.
Indeed, we prove that from the equation describing the motion of a
single-pole\ the geodesic equation follows automatically. Finally, in Section
8 we present our conclusions.

\section{Preliminaries}

Let $(M,\boldsymbol{g},D,\tau_{\boldsymbol{g}},\uparrow)$ be a general
Lorentzian spacetime. Let $\mathcal{U}\subseteq M$ be an open set covered by
coordinates $\{x^{\mu}\}$. Let $\{e_{\mu}=\partial_{\mu}\}$ be a basis of
$T\mathcal{U}$ and $\{\boldsymbol{\vartheta}^{\mu}=dx^{\mu}\}$ \ the basis of
$T^{\ast}\mathcal{U}$ dual to the basis $\{\partial_{\mu}\}$, i.e.,
$\boldsymbol{\vartheta}^{\mu}(\partial_{\nu})=\delta_{\nu}^{\mu}$. We denote
by $\mathtt{g}$ a metric of the cotangent bundle such that if $\boldsymbol{g}%
=g_{\mu\nu}\boldsymbol{\vartheta}^{\mu}\otimes\boldsymbol{\vartheta}^{\nu}$
then $\mathtt{g}=g^{\mu\nu}\partial_{\mu}\otimes\partial_{\nu}$ with
$g^{\mu\rho}g_{\rho\nu}=\delta_{\nu}^{\mu}$. We introduce also $\{\partial
^{\mu}\}$ and $\{\boldsymbol{\vartheta}_{\mu}\}$ respectively as the
reciprocal bases of $\{e_{\mu}\}$ and $\{\boldsymbol{\vartheta}^{\mu}\}$,
i.e., we have
\begin{equation}
\boldsymbol{g}(\partial_{\nu},\partial^{\mu})=\delta_{\nu}^{\mu}%
,~~~\mathtt{g}(\boldsymbol{\vartheta}^{\mu},\boldsymbol{\vartheta}_{\nu
})=\delta_{\nu}^{\mu} \label{p1}%
\end{equation}

Next we introduce in $T\mathcal{U}$ the tetrad basis $\{\boldsymbol{e}%
_{\alpha}=h_{\alpha}^{\mu}\partial_{\mu}\}$ and in $T^{\ast}\mathcal{U}$ the
cotetrad basis \ $\{\boldsymbol{\gamma}^{\alpha}=h_{\mu}^{\alpha}\gamma^{\mu
}\}$ which are dual basis. We introduce moreover the basis $\{\boldsymbol{e}%
^{\alpha}\}$ and $\{\boldsymbol{\gamma}_{\alpha}\}$ as the reciprocal bases of
$\{\boldsymbol{e}_{\alpha}\}$ and $\{\boldsymbol{\gamma}^{\alpha}\}$
satisfying
\begin{equation}
\boldsymbol{g}(\boldsymbol{e}_{\alpha},\boldsymbol{e}^{\beta})=\delta_{\alpha
}^{\beta},~~~\mathtt{g}(\boldsymbol{\gamma}^{\beta},\boldsymbol{\gamma
}_{\alpha})=\delta_{\alpha}^{\beta}. \label{p2}%
\end{equation}

Moreover recall that it is
\begin{align}
\boldsymbol{g}  &  =\eta_{\alpha\beta}\boldsymbol{\gamma}^{\alpha}%
\otimes\boldsymbol{\gamma}^{\beta}=\eta^{\alpha\beta}\boldsymbol{\gamma
}_{\alpha}\otimes\boldsymbol{\gamma}_{\beta},\nonumber\\
\mathtt{g}  &  =\eta^{\alpha\beta}\boldsymbol{e}_{\alpha}\otimes
\boldsymbol{e}_{\beta}=\eta_{\alpha\beta}\boldsymbol{e}^{\alpha}%
\otimes\boldsymbol{e}^{\beta}. \label{p3}%
\end{align}

\subsection{The Currents $\mathcal{J}_{V}$ and $\mathcal{J}_{K}$}

Let $\mathbf{W}=W^{\alpha\beta}\boldsymbol{e}_{\alpha}\otimes\boldsymbol{e}%
_{\beta}\in\sec T_{2}^{0}M$ with $W^{\alpha\beta}=W^{\beta\alpha}$ and
$\mathbf{\check{W}}=W_{\alpha\beta}\boldsymbol{\gamma}^{\alpha}\otimes
\boldsymbol{\gamma}^{\beta}\in\sec T_{0}^{2}M$, $W_{\alpha\beta}=\eta
_{\alpha\varsigma}\eta_{\beta\tau}W^{\varsigma\tau}$ and $\boldsymbol{W}%
=W_{\beta}^{\alpha}\boldsymbol{\gamma}^{\beta}\otimes\boldsymbol{e}_{\alpha
}\in\sec T_{1}^{1}M$. For the applications we have in mind we will say that
$\mathbf{W,\check{W}}$ and $\boldsymbol{W}$ are physically equivalent.

Note that $\boldsymbol{W}$ (an example of an extensor field\footnote{See
Chapter 4 of \cite{rc2007}.}) is such that
\begin{gather}
\boldsymbol{W}:\sec%
{\displaystyle\bigwedge\nolimits^{1}}
T^{\ast}M\rightarrow\sec%
{\displaystyle\bigwedge\nolimits^{1}}
T^{\ast}M,\nonumber\\
\boldsymbol{W}(V)=V_{\alpha}W_{\beta}^{\alpha}\boldsymbol{\gamma}^{\beta}.
\label{P33}%
\end{gather}
Define the divergence of $\boldsymbol{W}$ as the $1$-form field%
\begin{equation}
\boldsymbol{D}\bullet\boldsymbol{W}\mathbf{:}\mathcal{=(}D_{\alpha}W_{\beta
}^{\alpha})\boldsymbol{\gamma}^{\beta} \label{p4}%
\end{equation}
where%
\begin{equation}
\boldsymbol{D}_{\alpha}W_{\beta}^{\alpha}:=(\boldsymbol{D}_{\boldsymbol{e}%
_{\alpha}}\boldsymbol{W})_{\beta}^{\alpha}=\boldsymbol{e}_{\alpha}(W_{\beta
}^{\alpha})+\Gamma_{\alpha\iota}^{\alpha}W_{\beta}^{\iota}-\Gamma_{\alpha
\beta}^{\iota}W_{\iota}^{\alpha}. \label{P4A}%
\end{equation}

Moreover, introduce the $1$-form fields
\begin{equation}
\mathcal{W}^{\beta}:=W^{\alpha\beta}\boldsymbol{\gamma}_{\alpha}\in\sec%
{\displaystyle\bigwedge\nolimits^{1}}
T^{\ast}M. \label{p5}%
\end{equation}

\begin{remark}
Take notice for the developments that follows that the Hodge coderivative of
the $1$-form fields $\mathcal{W}^{\beta}$is \emph{(}see Appendix\emph{)}:%
\begin{align*}
\underset{\boldsymbol{g}}{\delta}\mathcal{W}^{\beta}  &  =-\boldsymbol{\gamma
}^{\kappa}\lrcorner\boldsymbol{D}_{\boldsymbol{e}_{\kappa}}(W^{\alpha\beta
}\boldsymbol{\gamma}_{\alpha})\\
&  =-\boldsymbol{e}_{\kappa}(W^{\alpha\beta})\boldsymbol{\gamma}^{\kappa
}\lrcorner\boldsymbol{\gamma}_{\alpha}-W^{\alpha\beta}\Gamma_{\kappa\alpha
}^{\iota}\boldsymbol{\gamma}^{\kappa}\lrcorner\boldsymbol{\gamma}_{\iota}\\
&  =-\boldsymbol{e}_{\alpha}(W^{\alpha\beta})-W^{\alpha\beta}\Gamma
_{\kappa\alpha}^{\kappa}.
\end{align*}

So, $\boldsymbol{D}\bullet\boldsymbol{W}=0$ does not implies that
$\underset{\boldsymbol{g}}{\delta}\mathcal{W}^{\beta}=0$.
\end{remark}

Now, given a vector field $\mathbf{V}=V^{\alpha}\boldsymbol{e}_{\alpha}$ and
the physically equivalent covector field $V=V^{\alpha}\boldsymbol{\gamma
}_{\alpha}$ define the current%
\begin{equation}
\mathcal{J}_{V}=V^{\alpha}\mathcal{W}_{\alpha}\in\sec%
{\displaystyle\bigwedge\nolimits^{1}}
T^{\ast}M\hookrightarrow\sec\mathcal{C\ell}(M,\mathtt{g}) \label{p6}%
\end{equation}
Of course, writing
\begin{equation}
\mathcal{J}_{V}=J_{\beta}\boldsymbol{\gamma}^{\beta}%
\end{equation}
we have$\cdot$
\begin{equation}
\mathcal{J}_{\beta}=V^{\alpha}W_{\alpha\beta}\text{.} \label{p7}%
\end{equation}

Recalling (see Appendix A) that $\star1=\tau_{\boldsymbol{g}}$ define
\begin{equation}
\mathfrak{T}=W^{\alpha\beta}\underset{\boldsymbol{g}}{\star}1=W^{\alpha\beta
}\tau_{\boldsymbol{g}}\in\sec%
{\displaystyle\bigwedge\nolimits^{4}}
T^{\ast}M\hookrightarrow\mathcal{C}\ell(M,\mathtt{g}\text{)}. \label{p8}%
\end{equation}

Then, we have, with $\boldsymbol{\partial}$ denoting the Dirac operator,
\begin{gather}
d\underset{\boldsymbol{g}}{\star}\mathcal{J}_{V}=\boldsymbol{\partial
\wedge\underset{\boldsymbol{g}}{\star}}\mathcal{J}\boldsymbol{_{V}=\gamma
}^{\alpha}\wedge(\boldsymbol{D}_{\boldsymbol{e}_{\alpha}}%
\underset{\boldsymbol{g}}{\star}\mathcal{J}_{V})\nonumber\\
=\boldsymbol{D}_{\boldsymbol{e}_{\alpha}}(\boldsymbol{\gamma}^{\alpha}%
\wedge\underset{\boldsymbol{g}}{\star}\mathcal{J}_{V})-\boldsymbol{D}%
_{\boldsymbol{e}_{\alpha}}\boldsymbol{\gamma}^{\alpha}\wedge
\underset{\boldsymbol{g}}{\star}\mathcal{J}_{V}. \label{p9}%
\end{gather}

Taking into account that $\boldsymbol{\gamma}^{\alpha}\wedge
\underset{\boldsymbol{g}}{\star}\mathcal{J}_{V}=\underset{\boldsymbol{g}%
}{\star}(\boldsymbol{\gamma}^{\alpha}\lrcorner\mathcal{J}_{V})$ and
$\boldsymbol{D}_{\boldsymbol{e}_{\alpha}}\boldsymbol{\gamma}^{\alpha}%
\wedge\underset{\boldsymbol{g}}{\star}\mathcal{J}_{V}=\underset{\boldsymbol{g}%
}{\star}(\boldsymbol{D}_{\boldsymbol{e}_{\alpha}}\boldsymbol{\gamma}^{\alpha
}\lrcorner\mathcal{J}_{V})$ we can write Eq.(\ref{p9}) as
\begin{equation}
d\underset{\boldsymbol{g}}{\star}\mathcal{J}_{V}=\left(  \boldsymbol{e}%
_{\alpha}(V^{\kappa})W_{\kappa}^{\alpha}+V^{\kappa}\boldsymbol{e}_{\alpha
}(W_{\kappa}^{\alpha})+\Gamma_{\cdot\alpha\beta}^{\alpha\cdot\cdot}V^{\kappa
}W_{\kappa}^{\beta}\right)  \tau_{\boldsymbol{g}} \label{p9a}%
\end{equation}
Also, we can easily verifiy from Eq.(\ref{p4}) that
\begin{gather}
\boldsymbol{\underset{\boldsymbol{g}}{\star}[}(\boldsymbol{D}\bullet
\boldsymbol{W})(\mathbf{V})]=[(\boldsymbol{D}\bullet\boldsymbol{W}%
)(\mathbf{V})]\tau_{\boldsymbol{g}}\nonumber\\
=\left(  V^{\kappa}\boldsymbol{e}_{\alpha}(W_{\kappa}^{\alpha})+\Gamma
_{\cdot\alpha\iota}^{\alpha\cdot\cdot}W_{\kappa}^{\iota}V^{\kappa}%
-\Gamma_{\cdot\alpha\kappa}^{\iota\cdot\cdot}W_{\iota}^{\alpha}V^{\kappa
}\right)  \tau_{\boldsymbol{g}}. \label{p9aa}%
\end{gather}

Now, let $\pounds $ be the (standard) Lie derivative operator. Let us evaluate
the produuct of $\pounds _{\mathbf{V}}\boldsymbol{g~}\mathcal{(}%
\boldsymbol{e}_{\alpha},\boldsymbol{e}_{\beta}\mathcal{)}$ by $\mathfrak{T}$,
i.e.,
\begin{equation}
(\pounds _{\mathbf{V}}\boldsymbol{g~}\mathcal{(}\boldsymbol{e}_{\alpha
},\boldsymbol{e}_{\beta}\mathcal{))~}\mathfrak{T}\in\sec%
{\displaystyle\bigwedge\nolimits^{4}}
T^{\ast}M. \label{p9A}%
\end{equation}
From Cartan magical formula we get \
\begin{gather}
\pounds _{\mathbf{V}}\boldsymbol{\gamma}^{\alpha}=d(V^{\alpha})+V\lrcorner
(\boldsymbol{\partial}\wedge\boldsymbol{\gamma}^{\alpha})\nonumber\\
=\boldsymbol{e}_{\iota}(V^{\alpha})\boldsymbol{\gamma}^{\iota}-V^{\varsigma
}\Gamma_{\cdot\varsigma\iota}^{\alpha\cdot\cdot}\boldsymbol{\gamma}^{\iota
}+V^{\varsigma}\Gamma_{\cdot\iota\varsigma}^{\alpha\cdot\cdot}%
\boldsymbol{\gamma}^{\iota}. \label{p9aaa}%
\end{gather}
Then,%
\begin{align}
\lbrack\pounds _{\mathbf{V}}\boldsymbol{g~}\mathcal{(}\boldsymbol{e}_{\alpha
},\boldsymbol{e}_{\beta}\mathcal{)]~}\mathfrak{T}  &  =[\left(  \eta
_{\iota\kappa}\pounds _{\mathbf{V}}\boldsymbol{\gamma}^{\iota}\otimes
\boldsymbol{\gamma}^{\kappa}+\eta_{\iota\kappa}\boldsymbol{\gamma}^{\iota
}\otimes\pounds _{\mathbf{V}}\boldsymbol{\gamma}^{\kappa}\right)
~\mathcal{(}\boldsymbol{e}_{\alpha},\boldsymbol{e}_{\beta}\mathcal{)]}%
\mathfrak{T}\nonumber\\
&  =[2\boldsymbol{e}_{\kappa}(V_{\iota})W^{\kappa\iota}+2V^{\varsigma}%
\Gamma_{\cdot\iota\varsigma}^{\kappa\cdot\cdot}W_{\kappa}^{\iota}]\mathfrak{T}
\label{p10a}%
\end{align}
and we get from Eqs.(\ref{p9a}), (\ref{p9aa}) and (\ref{p10a}) the important
identity \cite{bt1987}
\begin{equation}
(\pounds _{\mathbf{V}}\boldsymbol{g~}\mathcal{(}\boldsymbol{e}_{\alpha
},\boldsymbol{e}_{\beta}\mathcal{))~}\mathfrak{T}=2d\star\mathcal{J}%
_{\mathbf{V}}-2\underset{\boldsymbol{g}}{\star}[(\boldsymbol{D}\bullet
\boldsymbol{W})(\mathbf{V})]. \label{p11}%
\end{equation}
From Eq.(\ref{p11}) we see that if $\mathbf{V}$ is a conformal Killing vector
field, i.e., $\pounds _{\mathbf{V}}\boldsymbol{g~}\mathcal{(}\boldsymbol{e}%
_{\alpha},\boldsymbol{e}_{\beta}\mathcal{)}=2\lambda\eta_{\alpha\beta}$ we
have
\begin{equation}
\underset{\boldsymbol{g}}{\star}\lambda\mathrm{tr}\boldsymbol{W}%
=d\underset{\boldsymbol{g}}{\star}\mathcal{J}_{V}-\star\lbrack(\boldsymbol{D}%
\bullet\boldsymbol{W})(\mathbf{V})] \label{p12}%
\end{equation}
where $\mathrm{tr}\boldsymbol{W}$ is the trace of the matrix with entries
$W_{\beta}^{\alpha}$.

\subsection{Conserved Currents Associated to a Covariantly Conserved
$\boldsymbol{W}$}

\begin{definition}
We say that $\boldsymbol{W}$ is \textquotedblleft\emph{covariantly
conserved}\textquotedblright\ if
\begin{equation}
\boldsymbol{D}\bullet\boldsymbol{W}=0 \label{p12a}%
\end{equation}
\ 
\end{definition}

In this case, if $\mathbf{V=K}$ is a Killing vector field then
$\pounds _{\mathbf{K}}\boldsymbol{g}=0$ and we have%

\begin{equation}
d\underset{\boldsymbol{g}}{\star}\mathcal{J}_{K}=\underset{\boldsymbol{g}%
}{\star}[(\boldsymbol{D}\bullet\boldsymbol{W})(\mathbf{K})] \label{p13}%
\end{equation}
and the current $3$-form field\ $\underset{\boldsymbol{g}}{\star}%
\mathcal{J}_{K}$ is closed, i.e., $d\underset{\boldsymbol{g}}{\star
}\mathcal{J}_{K}=0$, or equivalently (taking into account the definition of
the Hodge coderivative operator $\underset{\boldsymbol{g}}{\delta}$)%
\begin{equation}
\underset{\boldsymbol{g}}{\delta}\mathcal{J}_{K}=0. \label{p14}%
\end{equation}

In resume, when we have Killing vector fields\footnote{The maximum number is
10 when $\dim M=4$ and that maximum number occurs only for spacetimes of
constant curvature.} $\mathbf{K}_{i}$, $i=1,2,...,n$\ \textquotedblleft
covariant conservation\textquotedblright\ of the tensor field $\boldsymbol{W}%
$, i.e., $\boldsymbol{D}\bullet\boldsymbol{W}=0$ implies in \emph{genuine}
conservation laws for the currents $\mathcal{J}_{K_{i}}$, i.e., from
$\underset{\boldsymbol{g}}{\delta}\mathcal{J}_{K_{I}}=0$ we can using Stokes
theorem build the \emph{scalar} conserved quantities
\begin{equation}
\mathcal{E(}K_{i}\mathcal{)}:=\frac{1}{8\pi}%
{\textstyle\int\nolimits_{\Sigma^{\prime}}}
\underset{\boldsymbol{g}}{\star}\mathcal{J}_{K_{i}} \label{p14a}%
\end{equation}
where $N$ is the region where $\mathcal{J}_{K}$ has support and $\partial
N=\Sigma+\Sigma^{\prime}+\mathbf{\Xi}$ where $\Sigma,\Sigma^{\prime}$ are
spacelike surfaces and $\mathcal{J}_{K_{i}}$ is null at $\mathbf{\Xi}$
(spatial infinity).

\subsection{Conserved Currents in \textbf{GRT} Associated to Killing Vector
Fields}

Before studying the conditions for the existence or not of genuine
energy-momentum conservation laws in \textbf{GRT}, let us recall from Appendix
C.3.4 that in Minkowski spacetime\footnote{See \cite{rc2007} for the remaning
of the notation.} $(M\simeq\mathbb{R}^{4},\boldsymbol{\eta},\mathrm{D}%
,\tau_{\boldsymbol{\eta}},\uparrow)$ we can introduce global coordinates
$\{\mathrm{x}^{\mu}\}$ (in Einstein-Lorentz-Poincar\'{e} gauge) such that
$\boldsymbol{\eta(}e_{\mu},e_{\nu}\boldsymbol{)}=\eta_{\mu\nu}$, and
$\mathrm{D}_{e_{\mu}}e_{\nu}=0$, where the $\{e_{\mu}=\partial/\partial
\mathrm{x}^{\mu}\}$ is simultaneously a global tetrad and a coordinate basis.
Also the $\{\vartheta^{\mu}=d\mathrm{x}^{\mu}\}$ is a global cotetrad and a
coordinate cobasis.

Moreover, the $e_{\mu}=\partial/\partial\mathrm{x}^{\mu}$ are also Killing
vector fields in $(M\simeq\mathbb{R}^{4},\boldsymbol{\eta)}$ \ and thus we
have for a closed physical system (consisting of particles and fields in
interaction living in Minkowski spacetime and whose equations of motion are
derived from a variational principle with a Lagrangian density invariant under
spacetime translations) that the currents\footnote{Keep in mind that in
Eq.(\ref{p15}) the\emph{ }$T_{\alpha\nu}$ are the $\nu$-component of the
current $\mathcal{T}_{\alpha}=\mathcal{J}_{\partial/\partial\mathrm{x}%
^{\alpha}}$ and moreover are here taken as symmetric. See Appendix C.3.3.}
\begin{equation}
\mathcal{T}_{\alpha}:=\mathcal{J}_{\partial/\partial\mathrm{x}^{\alpha}%
}=T_{\alpha\nu}\vartheta^{\nu}\in\sec%
{\textstyle\bigwedge\nolimits^{1}}
TM\hookrightarrow\mathcal{C\ell(}M,\mathtt{\eta}),~~~M\simeq\mathbb{R}^{4}
\label{p15}%
\end{equation}
are\ the conserved energy-momentum $1$-form fields of the physical system
under consideration, for which we know that the quantity (recall
Eq.(\ref{mm10}))
\begin{equation}
\boldsymbol{P}=P_{\alpha}\left.  \vartheta^{\alpha}\right\vert _{o}=P_{\alpha
}\boldsymbol{E}^{\alpha}, \label{P17}%
\end{equation}
with%
\begin{equation}
P_{\alpha}=%
{\displaystyle\int}
\underset{\boldsymbol{g}}{\star}\mathcal{T}_{\alpha} \label{p16}%
\end{equation}
are the components of the \emph{conserved} energy-momentum \emph{covector
}($\mathbf{CEMC}$) $\boldsymbol{P}$ of the system.\smallskip

\subsubsection{Limited Possibility to Construct a \textbf{CEMC} in
\textbf{GRT}}

Now, recall that in \textbf{GRT} a gravitational field generated by an
energy-momentum $\boldsymbol{T}$ is modelled by a Lorentzian
spacetime\footnote{In fact, by an equivalence classes of pentuples
$(M,\boldsymbol{g},\boldsymbol{D},\tau_{\boldsymbol{g}},\uparrow)$ modulo
diffeomorphisms.} $(M,\boldsymbol{g},\boldsymbol{D},\tau_{\boldsymbol{g}%
},\uparrow)$ where the relation between $\boldsymbol{g}$ and $\boldsymbol{T}$
is given by Einstein equation which using the orthonormal bases
$\{\boldsymbol{e}_{\alpha}\}$ and $\{\boldsymbol{\gamma}^{\alpha}\}$
introduced above reads%

\begin{equation}
G_{\beta}^{\alpha}=R_{\beta}^{\alpha}-\frac{1}{2}\delta_{\beta}^{\alpha
}R=-T_{\beta}^{\alpha}. \label{p18}%
\end{equation}
Moreover, defining $\boldsymbol{G}=G_{\beta}^{\alpha}\boldsymbol{\gamma
}^{\beta}\otimes\boldsymbol{e}_{\alpha}$ and recalling that $\boldsymbol{T=}%
T_{\beta}^{\alpha}\boldsymbol{\gamma}^{\beta}\otimes\boldsymbol{e}_{\alpha}$
it is%
\begin{equation}
\boldsymbol{D}\bullet\boldsymbol{G}=0,~~~~\boldsymbol{D}\bullet\boldsymbol{T}%
=0 \label{p18a}%
\end{equation}

Based \emph{only} on the contents of Section 2.1, given that $\boldsymbol{D}%
\bullet\boldsymbol{T}=0$, it may seems at first sigh\footnote{See however
Section 2.4 to learn that this naive expectation is incorrect.} that the only
possibility to construct conserved energy-momentum \emph{currents}
$\mathcal{T}_{\alpha}$ in \textbf{GRT}\ are for models of the theory where
appropriate Killing vector fields (such that one is timelike and the other
three spacelike) exist. However, an arbitrarily given Lorentzian manifold
$(M,\boldsymbol{g},D)$ in general does not have such Killing vector fields.

\begin{remark}
Moreover, even if it is the case that if a\ particular model
$(M,\boldsymbol{g},D)$ of \textbf{GRT} there exist one timelike and three
spacelike Killing vector fields we can construct \emph{the }scalar invariants
quantities $P_{\alpha}$, $\alpha=0,1,2,3$ given by \emph{Eq.(\ref{p16})}\ we
cannot\emph{ }define an energy-momentum covector $\boldsymbol{P}$ analogous to
the one given by \emph{Eq.(\ref{P17})} This is so because in this case to have
a conserved covector like $\boldsymbol{P}$ it is necessary to select a
$\boldsymbol{\gamma}^{\alpha}$ at a fixed point of the manifold. But in
general there is no physically meaningful way to do that, except if $M$ is
asymptotically flat\footnote{The concept of asymptotically flat Lorentzian
manifold can be rigorously formulated without the use of coordinates, as e.g.,
in \cite{wald}. However we will not need to enter in details here.} in which
case we can choose a chart such that at spatial infinity $\lim_{\left\vert
\vec{x}\right\vert \rightarrow\infty}\boldsymbol{\gamma}^{\alpha}(x^{0}%
,\vec{x})=\vartheta^{\alpha}=dx^{\alpha}$ and $\lim_{\left\vert \vec
{x}\right\vert \rightarrow\boldsymbol{\gamma}}\boldsymbol{g}_{\mu\nu}%
=\eta_{\mu\nu}$
\end{remark}

Thus, paroding Sachs and Wu \cite{sw} we must say that non existence of
genuine conservation laws for energy-momentum (and also angular momentum) in
\textbf{GRT} is a shame.

\begin{remark}
Despite what has been said above and the results of Section \emph{2.1} we next
show that there exists trivially an infinity of conserved currents \emph{(}the
Komar currents\emph{)} in any Lorentzian spacetime modelling a gravitational
field in $\mathbf{GRT}$. We discuss the meaning and disclose the form of these
currents, a result possible due to a notable decomposition of the square of
the Dirac operator acting on sections of the Clifford bundle $\mathcal{C\ell
}(M,\mathrm{g})$.
\end{remark}

\begin{remark}
We end this subsection recallng that in order to produce genuine conservation
laws in a field theory of gravitation with the gravitational field equations
equivalent \emph{(}in a precise sense\emph{)} to Einstein equation it is
necessary to formulate the theory in a parallelizable manifold and to dispense
the Lorentzian spacetime structure of $\mathbf{GRT}$. Details of such a theory
may be found in \emph{\cite{rod2012}}.
\end{remark}

\subsection{Komar Currents. Its Mathematical and Physical Meaning}

Let $\mathbf{A}\in\sec TM$ be the generator of a one parameter group of
diffeomorphisms of $M$ in the spacetime structure $\langle M,\boldsymbol{g}%
,\boldsymbol{D},\tau_{\boldsymbol{g}},\uparrow\rangle$ which is a model of a
gravitational field generated by $\boldsymbol{T}\in\sec T_{1}^{1}M$ (the
matter fields energy-momentum momentum tensor) in \textbf{GRT}. It is quite
obvious that if we define $F=dA$, where $A=\boldsymbol{g}(\mathbf{A,}$
$)\in\sec%
{\textstyle\bigwedge\nolimits^{1}}
T^{\ast}M\hookrightarrow\mathcal{C\ell}(M,\mathrm{g})$,\ then the current
\begin{equation}
\boldsymbol{J}_{A}=-\underset{\boldsymbol{g}}{\delta}F\label{n0}%
\end{equation}
is conserved, i.e.,
\begin{equation}
\underset{\boldsymbol{g}}{\delta}\boldsymbol{J}_{A}=0.\label{n01}%
\end{equation}
Surprisingly such a trivial \emph{mathematical} result seems to be very
important by people working in \textbf{GRT} who call $\boldsymbol{J}_{A}$ the
Komar current\footnote{Komar called a \emph{related} quantity the generalized
flux.} \cite{komar}. Komar called\footnote{$V$ denotes a spacelike
hypersurface and $S=\partial V$ \ its boundary. Usualy the integral
$\mathfrak{E}$ is calculated at a constant $x^{0}$ time hypersurface and the
limit is taken for $S$ being the boundary at infinity.}
\begin{equation}
\mathfrak{E}=%
{\textstyle\int\nolimits_{V}}
\underset{\boldsymbol{g}}{\star}\boldsymbol{J}_{A}=%
{\textstyle\int\nolimits_{\partial V}}
\underset{\boldsymbol{g}}{\star}F\label{n02}%
\end{equation}
the \emph{generalized energy.}

To understand why $\boldsymbol{J}_{A}$ is considered important in \textbf{GRT}
write the action for the gravitational plus matter and non gravitational
fields as
\begin{equation}
\mathcal{A=}%
{\textstyle\int}
\mathcal{L}_{g}+%
{\textstyle\int}
\mathcal{L}_{m}=-\frac{1}{2}%
{\textstyle\int}
R\boldsymbol{\tau}_{\boldsymbol{g}}+%
{\textstyle\int}
\mathcal{L}_{m}. \label{n022}%
\end{equation}

Now, the equations of motion for $\boldsymbol{g}$ can be obtained considering
its variation under an (infinitesimal) diffeormorphism $h:M\rightarrow M$
generated by $\mathbf{A}$. We have that $\boldsymbol{g}\mapsto\boldsymbol{g}%
^{\prime}=h^{\ast}\boldsymbol{g=g+\delta}^{0}\boldsymbol{g}$
where\footnote{Please do not confuse $\boldsymbol{\delta}^{0}$ with
$\underset{\boldsymbol{g}}{\delta}$.} the variation $\boldsymbol{\delta}%
^{0}\boldsymbol{g}=-\pounds _{\mathbf{A}}\boldsymbol{g}$. Taking into account
Cartan's magical formula ($\pounds _{\mathbf{A}}\boldsymbol{M}%
=A\underset{\boldsymbol{g}}{\boldsymbol{\lrcorner}}d\boldsymbol{M}%
+d(A\underset{\boldsymbol{g}}{\boldsymbol{\lrcorner}}\boldsymbol{M})$, for any
$\boldsymbol{M}\in\sec%
{\textstyle\bigwedge}
T^{\ast}M$) we have
\begin{align}
\boldsymbol{\delta}^{0}\mathcal{A}  &  =%
{\textstyle\int}
\boldsymbol{\delta}^{0}\mathcal{L}_{g}+%
{\textstyle\int}
\boldsymbol{\delta}^{0}\mathcal{L}_{m}\nonumber\\
&  =-%
{\textstyle\int}
\pounds _{\mathbf{A}}\mathcal{L}_{g}-%
{\textstyle\int}
\pounds _{\mathbf{A}}\mathcal{L}_{m}\nonumber\\
&  =-%
{\textstyle\int}
d(A\lrcorner\mathcal{L}_{g})-%
{\textstyle\int}
d(A\lrcorner\mathcal{L}_{m})\nonumber\\
&  :=%
{\textstyle\int}
d(\underset{\boldsymbol{g}}{\star}\mathcal{C)} \label{n03}%
\end{align}
where
\begin{equation}
\underset{\boldsymbol{g}}{\star}\mathcal{C=-}A\lrcorner\mathcal{L}%
_{g}-A\lrcorner\mathcal{L}_{m}+K \label{n03a}%
\end{equation}
with $dK=0.$

To proceed introduce a coordinate chart with coordinates $\{x^{\mu}\}$ for the
region of interest $U\subset M$.\ Recall that $\mathcal{G}^{\mu}%
=\mathcal{R}^{\mu}-\frac{1}{2}R\boldsymbol{\vartheta}^{\mu}$ are the Einstein
$1$-form fields\footnote{$\mathcal{G}^{\mu}=G_{\nu}^{\mu}\boldsymbol{\vartheta
}^{\nu}$ where $G_{\nu}^{\mu}=R_{\nu}^{\mu}-\frac{1}{2}\delta_{\nu}^{\mu}$ are
the components of the Einstein tensor. Moreover, we write $\mathcal{E}^{\mu
}=E_{\nu}^{\mu}\boldsymbol{\vartheta}^{\nu}$.}, with $\mathcal{R}^{\mu}%
=R_{\nu}^{\mu}\boldsymbol{\vartheta}^{\mu}$ the Ricci $1$-forms and $R$ the
curvature scalar. Einstein equation obtained form the variation principle
$\boldsymbol{\delta}^{0}\mathcal{A}=0$ is $\mathcal{E}^{\mu}:=\mathcal{G}%
^{\mu}+\mathcal{T}^{\mu}=0$, with $\mathcal{T}^{\mu}=T_{\nu}^{\mu
}\boldsymbol{\vartheta}^{\nu}$ the energy-momentum $1$-form fields and
moreover $D_{\mu}G^{\mu\nu}=0=D_{\mu}T^{\mu\nu}$.

Next write explicitly the action as \cite{ll}
\begin{equation}
\mathcal{A=-}\frac{1}{2}%
{\textstyle\int}
R\sqrt{-\det\boldsymbol{g}}dx^{0}dx^{1}dx^{2}dx^{3}+%
{\textstyle\int}
L_{m}\sqrt{-\det\boldsymbol{g}}dx^{0}dx^{1}dx^{2}dx^{3}. \label{n04}%
\end{equation}

We have immediately%
\begin{align}
\boldsymbol{\delta}^{0}\mathcal{A}  &  =\mathcal{-}\frac{1}{2}%
{\textstyle\int}
E^{\mu\nu}(\pounds _{\mathbf{A}}\boldsymbol{g})_{\mu\nu}\sqrt{-\det
\boldsymbol{g}}dx^{0}dx^{1}dx^{2}dx^{3}\nonumber\\
&  =-%
{\textstyle\int}
E^{\mu\nu}D_{\mu}A_{\nu}\sqrt{-\det\boldsymbol{g}}dx^{0}dx^{1}dx^{2}%
dx^{3}\nonumber\\
&  =-%
{\textstyle\int}
D_{\mu}(E^{\mu\nu}A_{\nu})\sqrt{-\det\boldsymbol{g}}dx^{0}dx^{1}dx^{2}%
dx^{3}\nonumber\\
&  =-%
{\textstyle\int}
(\boldsymbol{\partial}\underset{\boldsymbol{g}}{\boldsymbol{\lrcorner}%
}\mathcal{E}^{\nu}A_{\nu})\boldsymbol{\tau}_{\boldsymbol{g}}\nonumber\\
&  =%
{\textstyle\int}
\underset{\boldsymbol{g}}{\star}\underset{g}{\delta}(\mathcal{E}^{\nu}A_{\nu
})\nonumber\\
&  =-%
{\textstyle\int}
d(\underset{\boldsymbol{g}}{\star}\mathcal{E}^{\nu}A_{\nu}). \label{n05}%
\end{align}

From Eqs.(\ref{n03}) and (\ref{n05}) we have
\begin{equation}%
{\textstyle\int}
d(\underset{\boldsymbol{g}}{\star}\mathcal{E}^{\nu}A_{\nu}%
)+d(\underset{\boldsymbol{g}}{\star}\mathcal{C)}=0, \label{n06}%
\end{equation}
and thus
\begin{equation}
\underset{\boldsymbol{g}}{\delta}(\mathcal{E}^{\nu}A_{\nu}%
)+\underset{\boldsymbol{g}}{~\delta}\mathcal{C}=0 \label{n07}%
\end{equation}
Thus, the current $\mathcal{C}\in\sec%
{\textstyle\bigwedge\nolimits^{1}}
T^{\ast}M$ is conserved if the field equations $\mathcal{E}^{\nu}=0$ are
satisfied. An equation (in component form) equivalent to Eq.(\ref{n07})
already appears in \cite{komar} (and also previously in \cite{berg} ) who took
$\mathcal{C=E}^{\nu}K_{\nu}+N$ with $\underset{\boldsymbol{g}}{\delta}N=0$.

Here, to continue we prefer to write an identity involving only
$\boldsymbol{\delta}^{0}\mathcal{A}_{g}=%
{\textstyle\int}
\boldsymbol{\delta}^{0}\mathcal{L}_{g}$. Proceeding exactly as before we get
putting $\mathcal{G(}A\mathcal{)=G}^{\mu}A_{\mu}$ that there exists
$\boldsymbol{N}\in\sec%
{\textstyle\bigwedge\nolimits^{1}}
T^{\ast}M$ such that%
\begin{equation}
\underset{\boldsymbol{g}}{\delta}\mathcal{G(}A\mathcal{)~+~}%
\underset{\boldsymbol{g}}{\delta}\boldsymbol{N}=0. \label{n1}%
\end{equation}
and we see that we can identify%
\begin{equation}
\boldsymbol{N}:=-\mathcal{G}^{\mu}A_{\mu}+L \label{n4}%
\end{equation}
where $\underset{\boldsymbol{g}}{\delta}L=0$. Now, we claim that

\begin{proposition}
There exists a $L\in\sec%
{\textstyle\bigwedge\nolimits^{1}}
T^{\ast}M$ such that%
\begin{equation}
\boldsymbol{N}=-\mathcal{G}^{\mu}A_{\mu}+L=\underset{\boldsymbol{g}}{\delta
}dA=-\boldsymbol{J}_{A}, \label{N44}%
\end{equation}
where $\boldsymbol{J}_{A}$ was defined in \emph{Eq.(\ref{n0})}.
\end{proposition}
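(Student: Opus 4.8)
The plan is to collapse the three–term chain in (\ref{N44}) to a single pointwise identity and then verify a coclosedness condition. The rightmost equality $\underset{\boldsymbol{g}}{\delta}dA=-\boldsymbol{J}_{A}$ is merely the definition (\ref{n0}) of the Komar current together with $F=dA$, so nothing is to be done there. For the middle equality, my first move is to note that, once a $1$-form $L$ with $\underset{\boldsymbol{g}}{\delta}L=0$ has been produced such that $\underset{\boldsymbol{g}}{\delta}dA=-\mathcal{G}^{\mu}A_{\mu}+L$, the identification $\boldsymbol{N}=-\mathcal{G}^{\mu}A_{\mu}+L$ of (\ref{n4}) is \emph{consistent} with the Noether identity (\ref{n1}): that identity only fixes $\underset{\boldsymbol{g}}{\delta}\boldsymbol{N}$, and any two $1$-forms sharing this coderivative and differing by a coclosed form are admissible representatives. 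Hence the whole proposition reduces to the single claim that $\underset{\boldsymbol{g}}{\delta}dA+\mathcal{G}^{\mu}A_{\mu}$ is coclosed, whereupon one simply sets $L:=\underset{\boldsymbol{g}}{\delta}dA+\mathcal{G}^{\mu}A_{\mu}$.

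To expose the Einstein $1$-form hiding inside $\underset{\boldsymbol{g}}{\delta}dA$ I would compute $-\boldsymbol{J}_{A}=\underset{\boldsymbol{g}}{\delta}dA$ explicitly using the decomposition of the square of the Dirac operator recalled in Appendix A, i.e. from $\boldsymbol{\partial}^{2}=(d-\underset{\boldsymbol{g}}{\delta})^{2}$ and the associated Weitzenb\"{o}ck identity applied to the $1$-form $A$. Written in the orthonormal coframe $\{\boldsymbol{\gamma}^{\alpha}\}$, the outcome contains the Ricci $1$-forms $\mathcal{R}^{\mu}=R_{\nu}^{\mu}\boldsymbol{\vartheta}^{\nu}$ contracted with $A$, a scalar-curvature term, a connection-Laplacian (Bochner) term in $A$, and a gradient term $d(\underset{\boldsymbol{g}}{\delta}A)$. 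Since $\mathcal{G}^{\mu}=\mathcal{R}^{\mu}-\tfrac{1}{2}R\boldsymbol{\vartheta}^{\mu}$, the Ricci piece splits as $\mathcal{R}^{\mu}A_{\mu}=\mathcal{G}^{\mu}A_{\mu}+\tfrac{1}{2}RA$, which isolates exactly the term $-\mathcal{G}^{\mu}A_{\mu}$ demanded by (\ref{N44}) (the sign being dictated by the Clifford-bundle conventions of Appendix A); everything else is collected into $L$. This is also the route that, after inserting Einstein's equation $\mathcal{G}^{\mu}=-\mathcal{T}^{\mu}$, delivers the explicit form (\ref{kocurr}) of $\boldsymbol{J}_{A}$ in terms of the energy-momentum tensor.

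The hard part will be the verification $\underset{\boldsymbol{g}}{\delta}L=0$. Because $\underset{\boldsymbol{g}}{\delta}\,\underset{\boldsymbol{g}}{\delta}=0$ annihilates the $\underset{\boldsymbol{g}}{\delta}dA$ summand of $L$, this reduces to $\underset{\boldsymbol{g}}{\delta}(\mathcal{G}^{\mu}A_{\mu})=0$. Expanding $\underset{\boldsymbol{g}}{\delta}(\mathcal{G}^{\mu}A_{\mu})=A_{\mu}\underset{\boldsymbol{g}}{\delta}\mathcal{G}^{\mu}-(dA_{\mu})\lrcorner\mathcal{G}^{\mu}$ and invoking the contracted Bianchi identity $\boldsymbol{D}\bullet\boldsymbol{G}=0$ of (\ref{p18a}) (so that $\underset{\boldsymbol{g}}{\delta}\mathcal{G}^{\mu}=0$), this collapses to the single scalar $-\tfrac{1}{2}G^{\alpha\beta}(\pounds_{\mathbf{A}}\boldsymbol{g})_{\alpha\beta}$ — precisely the quantity that already governs (\ref{p11})--(\ref{p12}) upon setting $\boldsymbol{W}=\boldsymbol{G}$. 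This scalar vanishes identically when $\mathbf{A}$ is Killing (and, by (\ref{p12}), for conformal Killing $\mathbf{A}$ only up to $\mathrm{tr}\,\boldsymbol{G}$), but it is \emph{not} zero for a generic generator $\mathbf{A}$; this is where I expect the genuine difficulty to sit. For the fully general statement I would therefore not attempt to prove $\underset{\boldsymbol{g}}{\delta}(\mathcal{G}^{\mu}A_{\mu})=0$ in isolation, but instead take $L$ from the variational construction that produced (\ref{n1}), in which coclosedness of $L$ is built in through (\ref{n4}), and read the proposition as the assertion that this variationally defined $\boldsymbol{N}$ coincides with the definitionally coclosed current $-\boldsymbol{J}_{A}=\underset{\boldsymbol{g}}{\delta}dA$. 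The residual task is then pure bookkeeping: match the explicit Weitzenb\"{o}ck expression for $\underset{\boldsymbol{g}}{\delta}dA$ against $-\mathcal{G}^{\mu}A_{\mu}+L$ term by term.
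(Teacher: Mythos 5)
Your derivation of the identity itself is exactly the paper's own proof: the paper writes $\mathcal{G}^{\mu}A_{\mu}=\boldsymbol{\partial}\wedge\boldsymbol{\partial}A-\frac{1}{2}RA$, inserts $\pm\,\boldsymbol{\partial}\cdot\boldsymbol{\partial}A$ to trade the Ricci operator for $\boldsymbol{\partial}^{2}A=-\underset{\boldsymbol{g}}{\delta}dA-d\underset{\boldsymbol{g}}{\delta}A$, and reads off $L=-d\underset{\boldsymbol{g}}{\delta}A-\frac{1}{2}RA-\boldsymbol{\partial}\cdot\boldsymbol{\partial}A$ (Eqs.(\ref{n5})--(\ref{n6aa})). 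Your ``Weitzenb\"{o}ck'' route, with the splitting $\mathcal{R}^{\mu}A_{\mu}=\mathcal{G}^{\mu}A_{\mu}+\frac{1}{2}RA$, is the same decomposition of $\boldsymbol{\partial}^{2}$ into Ricci operator plus covariant D'Alembertian, so for the literal statement of the proposition --- existence of a $1$-form $L$ making the chain (\ref{N44}) hold --- your proposal is correct and coincides with the paper's argument.

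Where you part ways with the paper is the coclosedness of $L$, and here your diagnosis is sharper than the paper's treatment, but your repair does not close the gap. The paper disposes of $\underset{\boldsymbol{g}}{\delta}L=0$ in a footnote by asserting $\underset{\boldsymbol{g}}{\delta}(\mathcal{G}^{\mu}A_{\mu})=0$; you correctly observe that the contracted Bianchi identity only yields $\underset{\boldsymbol{g}}{\delta}(\mathcal{G}^{\mu}A_{\mu})=-\frac{1}{2}G^{\alpha\beta}(\pounds_{\mathbf{A}}\boldsymbol{g})_{\alpha\beta}$, which vanishes for Killing $\mathbf{A}$ but not for a generic generator of a one-parameter group --- so the coclosedness claim is where the real difficulty (in fact, the paper's own unproved assertion) sits. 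One wrinkle in your intermediate step: $\underset{\boldsymbol{g}}{\delta}\mathcal{G}^{\mu}\neq0$ in general, because the coderivative treats $\mu$ as a label rather than a tensor index --- precisely the trap flagged in the Remark following Eq.(\ref{p5}); only the full contraction $\mathcal{G}^{\mu}A_{\mu}$ produces the covariant scalar you quote, so your end formula is right even though the step invoking $\underset{\boldsymbol{g}}{\delta}\mathcal{G}^{\mu}=0$ is not. More importantly, your fallback --- take $L$ from the variational construction, where coclosedness is ``built in'' --- is circular: the identity (\ref{n1}) fixes only $\underset{\boldsymbol{g}}{\delta}\boldsymbol{N}$, the decomposition (\ref{n4}) with $\underset{\boldsymbol{g}}{\delta}L=0$ is a stipulation rather than a construction, and demanding that this $\boldsymbol{N}$ equal the automatically coclosed $\underset{\boldsymbol{g}}{\delta}dA$ forces $\underset{\boldsymbol{g}}{\delta}(\mathcal{G}^{\mu}A_{\mu})=0$ all over again. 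The honest conclusion, which you came close to stating, is that the chain (\ref{N44}) holds with the explicit $L$ of Eq.(\ref{n6aa}), and this $L$ is coclosed if and only if $G^{\alpha\beta}(\pounds_{\mathbf{A}}\boldsymbol{g})_{\alpha\beta}=0$ (e.g., for Killing $\mathbf{A}$, or in vacuum); neither your argument nor the paper's establishes it for arbitrary $\mathbf{A}$.
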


\begin{proof}
To prove our claim we suppose from now one that $%
{\textstyle\bigwedge}
T^{\ast}M\hookrightarrow\mathcal{C\ell(}M,\mathtt{g})$%
\footnote{$\mathcal{C\ell(}M,\mathtt{g})$ is the Clifford bundle of
differential forms, see Appendix and if more details are necessary, cosnult,
e.g., \cite{rc2007}.}. Then it is possible to write
\begin{align}
\mathcal{G}^{\mu}A_{\mu}  &  =\mathcal{R}^{\mu}A_{\mu}-\frac{1}{2}%
RA\nonumber\\
&  =\boldsymbol{\partial}\wedge\boldsymbol{\partial}A-\frac{1}{2}RA\nonumber\\
&  =\boldsymbol{\partial}\wedge\boldsymbol{\partial}A+\boldsymbol{\partial
}\cdot\boldsymbol{\partial}A-\frac{1}{2}RA-\boldsymbol{\partial}%
\cdot\boldsymbol{\partial}A\nonumber\\
&  =\boldsymbol{\partial}^{2}A-\frac{1}{2}RA-\boldsymbol{\partial}%
\cdot\boldsymbol{\partial}A\nonumber\\
&  =-\underset{\boldsymbol{g}}{\delta}dA-d\underset{\boldsymbol{g}}{\delta
}A-\frac{1}{2}RA-\boldsymbol{\partial}\cdot\boldsymbol{\partial}A \label{n5}%
\end{align}
where $\boldsymbol{\partial}\wedge\boldsymbol{\partial}$ is the Ricci operator
and $\boldsymbol{\partial}\cdot\boldsymbol{\partial=\square}$ is the
D'Alembertian operator. Then we take%
\begin{equation}
-\mathcal{G}^{\mu}A_{\mu}-d\underset{\boldsymbol{g}}{\delta}A-\frac{1}%
{2}RA-\boldsymbol{\partial}\cdot\boldsymbol{\partial}%
A=~\underset{\boldsymbol{g}}{\delta}dA \label{N6}%
\end{equation}
and of course it is\footnote{Note that since $\underset{\boldsymbol{g}%
}{\delta}(\mathcal{G}^{\mu}A_{\mu})=0$ it follows from Eq.(\ref{n6aa}) that
indeed $\underset{\boldsymbol{g}}{\delta}L=0$.} \
\begin{equation}
L=-d\underset{\boldsymbol{g}}{\delta}A-\frac{1}{2}RA-\boldsymbol{\partial
}\cdot\boldsymbol{\partial}A \label{n6aa}%
\end{equation}
proving the proposition.
\end{proof}

Now that we found a $L$ satisfying Eq.(\ref{N44}) we investigate if we can
give some nontrivial physical meaning to such $\boldsymbol{N=}-\boldsymbol{J}%
_{A}\in\sec%
{\textstyle\bigwedge\nolimits^{1}}
T^{\ast}M$.

\subsubsection{Determination of the Explicit Form of $\boldsymbol{J}_{A}$}

We recall that the extensor field $\boldsymbol{T}$ acts on $A$ as
\[
\boldsymbol{T}(A)=\mathcal{T}^{\mu}A_{\mu}%
\]
Thus, since%
\begin{equation}
\mathcal{G}^{\mu}A_{\mu}=-\boldsymbol{T}(A)
\end{equation}
we have from Eq.(\ref{N6})%

\begin{equation}
\underset{\boldsymbol{g}}{\delta}dA=\boldsymbol{T}\mathcal{(}A\mathcal{)}%
-d\underset{\boldsymbol{g}}{\delta}A-\frac{1}{2}RA-\boldsymbol{\partial}%
\cdot\boldsymbol{\partial}A. \label{n6a}%
\end{equation}
We can write Eq.(\ref{n6a}) taking into account that $R=\mathrm{tr}%
\boldsymbol{T}=T_{\mu}^{\mu}$ and putting $F:=dA$ that
\begin{equation}
\delta F=-\boldsymbol{J}_{A} \label{the equation}%
\end{equation}
where \cite{rrr2012}%
\begin{equation}
\boldsymbol{J}_{A}=-\boldsymbol{T}(A)+\frac{1}{2}\mathrm{tr}\boldsymbol{T}%
A\mathcal{+}\text{ }d\underset{\boldsymbol{g}}{\delta}A+\boldsymbol{\partial
}\cdot\boldsymbol{\partial}A \label{kocurr}%
\end{equation}
Eq.(\ref{kocurr}) gives the explicit form for the Komar
current\footnote{Something that is not given in \cite{komar}.}. Moreover,
taking into account that $\underset{\boldsymbol{g}}{\delta}%
F=\underset{\boldsymbol{g}}{\mathcal{\star}}d\underset{\boldsymbol{g}%
}{\mathcal{\star}}F$ it is
\begin{align}
d\star F  &  =\underset{\boldsymbol{g}}{\mathcal{\star}}^{-1}\left(
-\boldsymbol{T}\mathcal{(}A\mathcal{)}+\frac{1}{2}\mathrm{tr}\boldsymbol{T}%
A\mathcal{+}\text{ }d\underset{\boldsymbol{g}}{\delta}A+\boldsymbol{\partial
}\cdot\boldsymbol{\partial}A\right) \nonumber\\
&  =\underset{\boldsymbol{g}}{\mathcal{\star}}\left(  -\boldsymbol{T}%
\mathcal{(}A\mathcal{)}+\frac{1}{2}\mathrm{tr}\boldsymbol{T}A\mathcal{+}\text{
}d\underset{\boldsymbol{g}}{\delta}A+\boldsymbol{\partial}\cdot
\boldsymbol{\partial}A\right)  \label{n9}%
\end{align}
and thus taking into account Stokes theorem
\[%
{\textstyle\int\nolimits_{\mathcal{V}}}
d\underset{\boldsymbol{g}}{\star}F=%
{\textstyle\int\nolimits_{\partial\mathcal{V}}}
\underset{\boldsymbol{g}}{\star}F=%
{\textstyle\int\nolimits_{\mathcal{V}}}
\underset{\boldsymbol{g}}{\mathcal{\star}}\boldsymbol{J}_{A}%
\]
Moreover, since $\underset{\boldsymbol{g}}{d\mathcal{\star}}\boldsymbol{J}%
_{A}=0$ we have that%
\[
0=%
{\textstyle\int\nolimits_{N}}
\underset{\boldsymbol{g}}{d\mathcal{\star}}\boldsymbol{J}_{A}=%
{\displaystyle\int\nolimits_{\partial N}}
\underset{\boldsymbol{g}}{\mathcal{\star}}\boldsymbol{J}_{A}%
\]
and thus $%
{\displaystyle\int\nolimits_{\Sigma_{1}}}
\underset{\boldsymbol{g}}{\mathcal{\star}}\boldsymbol{J}_{A}$ $\ $($\partial
N=\Sigma_{1}-\Sigma_{2}+\Xi$) is a conserved quantity. We arrive at the
conclusion that Taking $\mathcal{V}\subset$ $\Sigma_{1}$ ($\partial
\mathcal{V=S}_{R}$) as a ball of radius $R$ and making $R\rightarrow\infty$
the quantity
\begin{align}
\mathcal{E}  &  \mathcal{\equiv E(}A\mathcal{)}:=\frac{1}{8\pi}\int%
_{\mathcal{S}_{R}}\underset{\boldsymbol{g}}{\star}F\label{n10}\\
&  =\frac{1}{8\pi}%
{\textstyle\int\nolimits_{V}}
\underset{\boldsymbol{g}}{\star}\left(  \boldsymbol{T}\mathcal{(}%
A\mathcal{)}-\frac{1}{2}A\mathrm{tr}\boldsymbol{T}\mathcal{-}\text{
}d\underset{\boldsymbol{g}}{\delta}A-\boldsymbol{\partial}\cdot
\boldsymbol{\partial}A\right)  \label{n11}%
\end{align}
is conserved.

\begin{remark}
It is very important to realize that quantity $\mathcal{E}$ defined by
\emph{Eq.(\ref{n10}) }is an scalar invariant, i.e., its value does not depend
on the particular reference frame $\boldsymbol{Z}$ and to the \emph{(}%
nacs$\mid Z)$\ \emph{(}the naturally adpated coordinate chart adapted to
$\boldsymbol{Z}$\emph{)\footnote{Recall that in Relativity theory (both
special and general) a reference frame is modelled by a time like vector field
$\boldsymbol{Z}$ pointing into the future. A \ naturally adapted coordinate
chart to $\boldsymbol{Z}$ (with coordinate functions $\{x^{\mu}\}$ (denoted
(nacs$\mid\boldsymbol{Z}$)) is one such that the spatial components of
$\boldsymbol{Z}$ are null. More details may be found, e.g., in Chapter 6 of
\cite{rc2007}.}. But, of course, }for each particular vector field
$\mathbf{A}\in\sec TM$ \emph{(}which generates a one parameter group of
diffeomorphisms\emph{) we have a different}.$\mathcal{E(}A\mathcal{)}$ and the
different $\mathcal{E(}A\mathcal{)}$
\'{}%
s are not related as components of a covector.
\end{remark}

\begin{remark}
As we already remarked an equation equivalent to \emph{Eq.(\ref{n10})} has
already been obtained in \emph{\cite{komar}} who called (as said above) that
quantity the conserved \emph{generalized energy.} But according to our best
knowledge \emph{Eq.(\ref{n11})} is new and appears for the first time in
\emph{\cite{rrr2012}}.

However, considering that for each $\mathbf{A}$ $\in\sec TM$ that generates a
one parameter group of diffeomorphisms of $M$ we have a conserved quantity it
is not in our opinion appropriate to think about this quantity as a
generalized energy. Indeed, why should the energy depends on terms like
$d\underset{\boldsymbol{g}}{\delta}A$ and $\boldsymbol{\partial}%
\cdot\boldsymbol{\partial}A$ if $A$ is not a dynamical field?

We know\emph{ }that \emph{\cite{R2010}} when $\mathbf{A=K}$ is a Killing
vector field it is $\underset{\boldsymbol{g}}{\delta}A=0$ and
$\boldsymbol{\partial}\cdot\boldsymbol{\partial}A=-\boldsymbol{T}%
\mathcal{(}A\mathcal{)}+\frac{1}{2}\mathrm{tr}\boldsymbol{T}A$ and thus
\emph{Eq.(\ref{n11}) reads}%

\begin{equation}
\mathcal{E}=\frac{1}{4\pi}%
{\textstyle\int\nolimits_{N}}
\underset{\boldsymbol{g}}{\star}\mathcal{(}\boldsymbol{T}(A)-\frac{1}%
{2}A\mathrm{tr}\boldsymbol{T}) \label{n12}%
\end{equation}
which is a well known conserved quantity\footnote{An equivalent formula
appears, e.g., \ as Eq.(11.2.10) in \cite{wald}. However, it is to be
emphasized here the simplicity and transparency of our approach concerning
traditional ones based on classical tensor calculus.}. For a Schwarzschild
spacetime, as well known, $\mathbf{A=\partial/\partial}t$ is a timelike
Killing vector field and in his case since the components of $\boldsymbol{T}$
are $T_{\nu}^{\mu}=\frac{8\pi}{\sqrt{-\det\boldsymbol{g}}}\rho(r)v^{\mu}%
v_{\nu}$ and $v^{i}v_{j}=0$ \emph{(}since $v^{\mu}=\frac{1}{\sqrt{g_{00}}%
}(1,0,0,0)$\emph{)} we get $\mathcal{E}=m$.
\end{remark}

\begin{remark}
Note that that the conserved quantity given by \emph{Eq.(\ref{n12})}\ differs
in general\ from the conserved quantity obtained with the current defined in
\emph{Eq.(\ref{p6}) when }$V=K$\emph{ }which holds in any structure
$(M,\boldsymbol{g},D)$ with the conditions given there. However, in the
particular case analyzed above \emph{Eq.(\ref{n12})} and \emph{Eq.(\ref{p14a}%
)} give the same result.
\end{remark}

\begin{remark}
Originally Komar obtained the same result as in \emph{Eq.(\ref{n12})} directly
from \emph{Eq.(\ref{n10}) }supposing that the generator of the one parameter
group of diffeomorphism was\emph{ }$\mathbf{A}=\partial/\partial t$. So, he
got $\mathcal{E}=m$ by pure chance. If he had picked another vector field
generator of a one parameter group of diffeomorphisms $\mathbf{A\neq
\partial/\partial}t$, he of course, would not obtained that result.
\end{remark}

\begin{remark}
The previous remark shows clearly that the construction of Komar currents does
not to solve the energy-momentum conservation problem for a system consisting
of the matter and non gravitational fields plus the gravitational field in
$\mathbf{GRT}$.

Indeed, too claim that a solution for a meaningful definition for the
energy-momentum of the total system\emph{\footnote{The total system is the
system consisting of the gravitational plus matter and non gravitational
fields.}} exist it is necessary to find a way to define a total conserved
energy-momentum covector for the total system as it is possible to do in field
theories in Minkowski spacetime (recall Section \emph{C.3.4}). This can only
be done if the spacetime structure modeling a gravitational field
\emph{(}generated by the matter fields energy-momentum tensor $\boldsymbol{T}%
$\emph{)} possess appropriate additional structure, or if we interpret the
gravitational field as a field in the Faraday sense living in Minkowski
spacetime. More details in \emph{\cite{rc2007,rod2012}}.
\end{remark}

\subsubsection{The Maxwell Like Equation $\boldsymbol{\partial}%
F=\boldsymbol{J}_{A}$ Encodes Einstein Equation}

From Eq.(\ref{the equation})\emph{ }where $F=dA\ $($A=\boldsymbol{g}%
(\mathbf{A,})$) with $\mathbf{A}$ $\in\sec TM$ an arbitrary generator of a one
parameter group of diffeomorphisms of $M$ \emph{(}part of the structure
$(M,\boldsymbol{g},D,\mathbf{\tau}_{\boldsymbol{g}}\boldsymbol{,\uparrow)}%
$\emph{)} taking into account that $dF=0$, we get the Maxwell like equation
(\textbf{MLE})%
\begin{equation}
\boldsymbol{\partial}F=\boldsymbol{J}_{A}. \label{EM}%
\end{equation}
with a well defined conserved current. Of course, as we already said, there is
an infinity of such equations. Each one \emph{encodes} Einstein equation,
i.e., given the form of $\boldsymbol{J}_{A}$ (Eq.(\ref{kocurr})) we can get
back Eq.(\ref{n5}), which gives immediately Einstein equation (\textbf{EE}).
In this sense we can claim:

\begin{center}%
\begin{tabular}
[c]{l}%
\textbf{EE} $\boldsymbol{G}=\boldsymbol{T}$ and the \textbf{MLE}
$\boldsymbol{\partial}F\boldsymbol{=J}_{A}$ are equivalent.
\end{tabular}

\end{center}

\begin{remark}
Finally it is worth to emphasize that the above results show that in
$\mathbf{GRT}$ there are infinity of conservation laws, one for each vector
field generator of a one parameter group of diffeomorphisms and so, Noether's
theorem in $\mathbf{GRT}$ which follows from the supposition that the
Lagrangian density is invariant under the diffeomorphism group gives only
identities, i.e., an infinite set of conserved currents, each one encoding as
we saw above Einstein equation.
\end{remark}

It is now time to analyze the possible generalized conservation laws and their
implications for the motions of probe single-pole particles in
\emph{Lorentzian} and \emph{teleparallel} de Sitter spacetime structures,
where these structures are \emph{not} supposed to represent models of
gravitational fields in \textbf{GRT} and compare these results with the ones
in \textbf{GRT}. This will be one in the next sections.

\section{The Lorentzian de Sitter $M^{dS\ell}$ Structure and its Conformal
Representation}

Let $SO(1,4)$ and $SO(1,3)$ be respectively the special pseudo-orthogonal
groups in $\mathbb{R}^{1,4}=\{\mathring{M}=\mathbb{R}^{5}%
,\boldsymbol{\mathring{g}}\}$ and in $\mathbb{R}^{1,3}=\{\mathbb{R}%
^{4},\boldsymbol{\eta}\}$ where $\boldsymbol{\mathring{g}}$ is a metric of
signature $(1,4)$ and $\boldsymbol{\eta}$ a metric of signature $(1,3)$.\ The
manifold $M=SO(1,4)/SO(1,3)$ will be called the \emph{de Sitter manifold}.
Since
\begin{equation}
M=SO(1,4)/SO(1,3)\approx\mathbb{R\times}S^{3} \label{0}%
\end{equation}
this manifold can be viewed as a brane (a submanifold) in the structure
$\mathbb{R}^{1,4}$. We now introduce a Lorentzian spacetime, i.e., the
structure $M^{dSL}=(M=\mathbb{R\times}S^{3},\boldsymbol{g},\boldsymbol{D}%
,\tau_{\boldsymbol{g}},\uparrow)$ which will be called\emph{ Lorentzian de
Sitter spacetime structure} where if $\boldsymbol{\iota}:\mathbb{R\times}%
S^{3}\rightarrow\mathbb{R}^{5}$, $\boldsymbol{g=\iota}^{\ast}%
\boldsymbol{\mathring{g}}$ and $\boldsymbol{D}$ is the parallel projection on
$M$ of the pseudo Euclidian metric compatible connection in $\mathbb{R}^{1,4}$
(details in \cite{rs2013}). As well known, $M^{dSL}$ is a spacetime of
constant Riemannian curvature. It has ten Killing vector fields. The Killing
vector fields are the generators of infinitesimal actions of the group
$SO(1,4)$ (called the de Sitter group) in $M=\mathbb{R\times}S^{3}\approx
SO(1,4)/SO(1,3)$. The group $SO(1,4)$ acts transitively\footnote{A group $G$
of transformations in a manifold $M$ ($\sigma:G\times M\rightarrow M$ by
$(g,x)\mapsto\sigma(g,x)$) is said to act transitively on $M$ if for
arbitraries $x,y\in M$ there exists $g\in G$ such that $\sigma(g,x)=y$.} in
$SO(1,4)/SO(1,3)$, which is thus a homogeneous space (for $SO(1,4)$).

We now recall the description of the manifold $\mathbb{R\times}S^{3}$ as a
pseudo-sphere (a submanifold) of radius $\ell$ of the pseudo Euclidean space
$\mathbb{R}^{1,4}=\{\mathbb{R}^{5},\boldsymbol{\mathring{g}}\}$. If
$(X^{0},X^{1},X^{2},X^{3},X^{4})$ are the global coordinates of $\mathbb{R}%
^{1,4}$ then the equation representing the pseudo sphere is%
\begin{equation}
(X^{0})^{2}-(X^{1})^{2}-(X^{2})^{2}-(X^{3})^{2}-(X^{4})^{2}=-\ell^{2}.
\label{ds4}%
\end{equation}

Introducing \emph{conformal} coordinates\footnote{Figure 1 appears also in
author's paper \cite{RWC2015}.} $\{x^{\mu}\}$ by projecting the points of
$\mathbb{R\times}S^{3}$ from the \textquotedblleft
north-pole\textquotedblright\ to a plane tangent to the \textquotedblleft
south pole" we see immediately that $\{x^{\mu}\}$ covers\ all $\mathbb{R\times
}S^{3}$ except the \textquotedblleft north-pole\textquotedblright. We
immediately find that%

\begin{equation}
\boldsymbol{g}=i_{\ast}\boldsymbol{\mathring{g}}=\Omega^{2}\eta_{\mu\nu
}dx^{\mu}\otimes dx^{\nu} \label{ds1}%
\end{equation}
where%
\begin{equation}
X^{\mu}=\Omega x^{\mu},~~~~X^{4}=-\ell\Omega\left(  1+\frac{\sigma^{2}}%
{4\ell^{2}}\right)  \label{ds1a}%
\end{equation}%
\begin{equation}
\Omega=\left(  1-\frac{\sigma^{2}}{4\ell^{2}}\right)  ^{-1} \label{ds2}%
\end{equation}
and%
\begin{equation}
\sigma^{2}=\eta_{\mu\nu}x^{\mu}x^{\nu}. \label{ds3}%
\end{equation}

Since the north pole of the pseudo sphere is not covered by the coordinate
functions we see that (omitting two dimensions) the region of the spacetime as
seem by an observer living the south pole is the region inside the so called
absolute of \emph{Cayley-Klein} of equation
\begin{equation}
t^{2}-x^{2}=4\ell^{2}. \label{ds5}%
\end{equation}

In Figure 1 we can see that all timelike curves (1) and (2) and lightlike (3)
starts in the \textquotedblleft past horizon\textquotedblright\ and end on the
\textquotedblleft future horizon\textquotedblright.%

\begin{figure}[ptb]%
\centering
\includegraphics[
height=3.9185in,
width=4.2557in
]%
{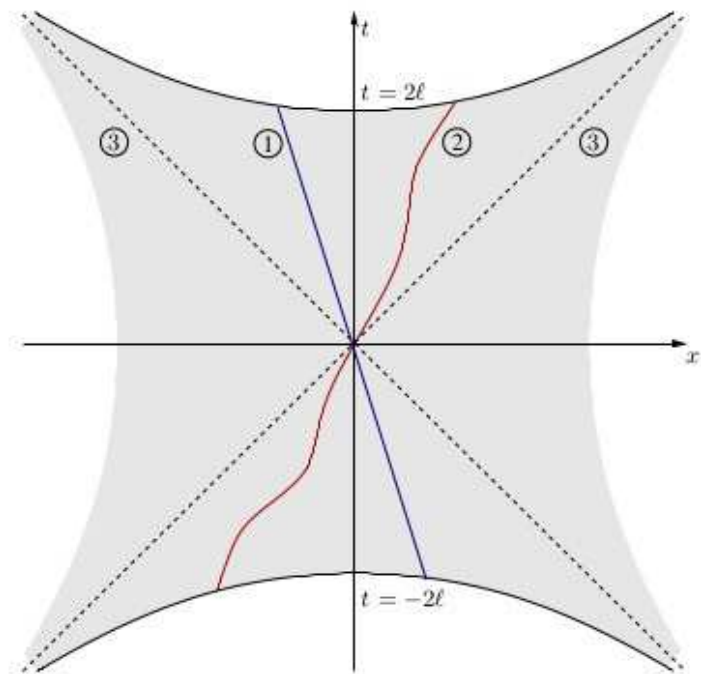}%
\caption{Conformal Representation of de Sitter Spacetime. Note that the
"observer" spacetime is interior to the Caley-Klein absolute $t^{2}-\vec
{x}^{2}=4\ell^{2}$.}%
\label{ds2.eps}%
\end{figure}

\section{On the Geodesics of the $M^{dSL}$}

In a classic book by Hawking and Ellis \cite{he} we can read on page 126 the
following statement:

\begin{quote}
de Sitter spacetime is geodesically complete; however there are points in the
space which cannot be joined to each other by any geodesic.
\end{quote}

Unfortunately many people do not realize that the points that cannot be joined
by a geodesic are some points which can be joined by a spacelike curve (living
in region 3 in Figure 1). So, these curves are never the path of any particle.
A complete and thoughtful discussion of this issue is given in an old an
excellent article by Schmidt \cite{schmidt}.

\begin{remark}
Having said that, let us recall that among the Killing vector
fields\emph{\footnote{More details in Section 5.1.}} of $(M=\mathbb{R\times
}S^{3},\boldsymbol{g})$ there are one timelike and three spacelike vector
fields \emph{(}which are called the \emph{translation} Killing vector fields
in physical literature\emph{)}. So, many people though since a long time ago
that this permits the formulation of an energy-momentum conservation law in de
Sitter spacetime $M^{dSL}$ structure . However, the fact is that what can
really be done is the obtainment of conserved quantities like the $P_{\alpha}$
in \emph{Eq.(\ref{p16})}. But we cannot obtain in $M^{dSL}$ structure an
energy-momentum covector like $\boldsymbol{P}$ for a given closed physical
system using an equation similar to \emph{Eq.(\ref{P17})}. This is because de
Sitter spacetime is \emph{not} asymptotically flat and so there is no way to
physically determine a point to fix the $\boldsymbol{\gamma}^{\alpha}$ to use
in an equation similar to \emph{Eq.(\ref{P17})}.\smallskip
\end{remark}

Now, the \textquotedblleft translation\textquotedblright\ Killing vector
fields of de Sitter spacetime are expressed in the coordinate basis
$\{\partial_{\mu}=\frac{\partial}{\partial x^{\mu}}\}$ (where $\{x^{\mu}\}$
are the conformal coordinates introduced in the last section) by\footnote{We
are using here a notation similar to the ones in \cite{ps2012} for compariosn
of some of our results with the ones obtained there.}:%
\begin{equation}
\mathbf{\Pi}_{\alpha}=\xi_{\alpha}^{\mu}\partial_{\mu} \label{K1}%
\end{equation}
with (\emph{putting} $\ell=1$, \textit{for simplicity}) are
\begin{equation}
\xi_{\alpha}^{\mu}=\delta_{\alpha}^{\mu}-(\frac{\eta_{\alpha\nu}x^{\nu}x^{\mu
}}{2}-\frac{\sigma^{2}}{4}\delta_{\alpha}^{\mu}). \label{k2}%
\end{equation}%
\begin{equation}%
\begin{array}
[c]{c}%
\xi_{\alpha}^{0}=\delta_{\alpha}^{0}-(\frac{\sigma^{2}}{4}\frac{\eta
_{\alpha\nu}x^{\nu}x^{0}}{2}-\frac{\sigma^{2}}{4}\delta_{\alpha}^{0}),\\
\xi_{\alpha}^{1}=\delta_{\alpha}^{1}-(\frac{\sigma^{2}}{4}\frac{\eta
_{\alpha\nu}x^{\nu}x^{1}}{2}-\frac{\sigma^{2}}{4}\delta_{\alpha}^{1}),\\
\xi_{\alpha}^{2}=\delta_{\alpha}^{2}-(\frac{\sigma^{2}}{4}\frac{\eta
_{\alpha\nu}x^{\nu}x^{2}}{2}-\frac{\sigma^{2}}{4}\delta_{\alpha}^{2}),\\
\xi_{\alpha}^{3}=\delta_{\alpha}^{3}-(\frac{\sigma^{2}}{4}\frac{\eta
_{\alpha\nu}x^{\nu}x^{3}}{2}-\frac{\sigma^{2}}{4}\delta_{\alpha}^{3}).
\end{array}
\label{k2a}%
\end{equation}
So, we want now to investigate if there is some region of de Sitter spacetime
where the \textquotedblleft translational\textquotedblright\ Killing vector
fields are linearly independent.

In order to proceed, we take an arbitrary vector field $\mathbf{V}=V^{\mu
}\partial_{\mu}\in\sec T\mathcal{U}$. If the $\mathbf{\Pi}_{\alpha}$ are
linearly independent in a region $\mathcal{U}^{\prime}\subset\mathcal{U}%
\subset\mathbb{R\times}S^{3}$ then we can write
\begin{equation}
\mathbf{V}=V^{\mu}\partial_{\mu}=\boldsymbol{V}^{\boldsymbol{\alpha}%
}\mathbf{\Pi}_{\boldsymbol{\alpha}}=\boldsymbol{V}^{\boldsymbol{\alpha}}%
\xi_{\boldsymbol{\alpha}}^{\mu}\partial_{\mu}. \label{k3}%
\end{equation}
So, the condition for the existence of a nontrivial solution for the
$\boldsymbol{V}^{\alpha}$ is that
\begin{equation}
\det\left(
\begin{array}
[c]{cccc}%
\xi_{0}^{0} & \xi_{0}^{1} & \xi_{0}^{2} & \xi_{0}^{3}\\
\xi_{1}^{0} & \xi_{1}^{1} & \xi_{1}^{2} & \xi_{1}^{3}\\
\xi_{2}^{0} & \xi_{\boldsymbol{1}}^{1} & \xi_{1}^{2} & \xi_{1}^{3}\\
\xi_{3}^{0} & \xi_{3}^{1} & \xi_{3}^{2} & \xi_{3}^{3}%
\end{array}
\right)  \neq0 \label{k3a}%
\end{equation}

Thus, putting
\[
\chi_{\mu}:=(\frac{\sigma^{2}}{4}\frac{(x^{\mu})^{2}}{2}-\frac{\sigma^{2}}%
{4})
\]
we need to evaluate the determinant of the matrix%

\begin{equation}
\left(
\begin{array}
[c]{cccc}%
1-\chi_{0} & (\frac{\sigma^{2}}{4}\frac{tx}{2}) & (\frac{\sigma^{2}}{4}%
\frac{ty}{2}) & (\frac{\sigma^{2}}{4}\frac{tz}{2})\\
-(\frac{\sigma^{2}}{4}\frac{xt}{2}) & 1-\chi_{1} & -(\frac{\sigma^{2}}{4}%
\frac{xy}{2}) & -(\frac{\sigma^{2}}{4}\frac{xz}{2})\\
-(\frac{\sigma^{2}}{4}\frac{yt}{2}) & -(\frac{\sigma^{2}}{4}\frac{yx}{2}) &
1-\chi_{2} & -(\frac{yz}{\frac{m^{2}}{4}2})\\
-(\frac{\sigma^{2}}{4}\frac{zt}{2}) & -(\frac{\sigma^{2}}{4}\frac{zx}{2}) &
-(\frac{\sigma^{2}}{4}\frac{zy}{2}) & 1-\chi_{3}%
\end{array}
\right)  . \label{k4}%
\end{equation}
Then%

\begin{gather}
\det[\xi_{\alpha}^{\mu}]=-\frac{1}{512}\sigma^{10}-\frac{3}{32}\sigma
^{6}+\frac{1}{16}\sigma^{6}+\sigma^{2}+1-\frac{3}{128}\sigma^{8}-\frac{1}%
{8}\sigma^{4}\nonumber\\
+\frac{1}{1024}\sigma^{8}(3t^{2}x^{2}y^{2}z^{2}+t^{2}y^{2}z^{2}-x^{2}%
y^{2}z^{2}+y^{2}z^{2})+\frac{1}{512}\sigma^{8}(t^{2}x^{2}y^{2}+t^{2}x^{2}%
z^{2})\nonumber\\
+\frac{1}{128}\sigma^{6}t^{2}x^{2}(y^{2}+z^{2})+\frac{1}{256}\sigma^{6}%
(t^{2}y^{2}z^{2}-x^{2}y^{2}z^{2}+2y^{2}z^{2})\nonumber\\
+\frac{1}{64}\sigma^{4}(t^{2}x^{2}y^{2}z^{2}+y^{2}z^{2}-x^{2}y^{2}z^{2}%
)+\frac{1}{16}\sigma^{2}(t^{2}y^{2}z^{2}-^{2}x^{2}y^{2}z^{2}-2y^{2}%
z^{2})\nonumber\\
+\frac{3}{8}\sigma^{4}-\frac{1}{4}y^{2}z^{2}+\frac{1}{256}\sigma^{8}.
\label{k5}%
\end{gather}

In order to analyze this expression we put (without loss of generality, see
the reason in \cite{schmidt}) $y=z=0$. In this case%

\begin{equation}
\det[\xi_{\alpha}^{\mu}]=\frac{1}{512}\left(  \sigma^{2}+4\right)  ^{3}\left(
-\sigma^{4}+2\sigma^{2}+8\right)  \label{k6}%
\end{equation}
which is null on the Caley-Klein absolute, i.e., at the points
\begin{equation}
t^{2}-\vec{x}^{2}=4. \label{k7}%
\end{equation}
and also on the spacelike hyperbolas given by%
\[
t^{2}-\vec{x}^{2}=-2\text{ and }t^{2}-\vec{x}^{2}=-4
\]

So, we have proved that the translational Killing vector fields are linearly
independent in all the sub region inside the Cayley-Klein absolute except in
the points of the hyperbolas $t^{2}-\vec{x}^{2}=-2$ and $t^{2}-\vec{x}^{2}%
=-4$.$\smallskip$

This result is very much important for the following reason. Timelike
geodesics in de Sitter spacetime structure $M^{dSL}$ are (as well known) the
curves $\sigma:s\rightarrow\sigma(s)$, where $s$ is propertime along $\sigma$,
that extremizes the length function \cite{oneill} , i.e., calling
$\sigma_{\ast}=\boldsymbol{u}$ the velocity of a \textquotedblleft
particle\textquotedblright\ of mass $m$ following a time like geodesic we have
that the equation of the geodesic is obtained by finding an extreme of the
action, writing here in sloop notation, as
\begin{equation}
I[\sigma]=-m\int_{\sigma}ds=-m\int_{\sigma}(g_{\mu\nu}dx^{\mu}dx^{\nu}%
)^{\frac{1}{2}}. \label{k8}%
\end{equation}
As well known, the determination of an extreme for $I[\sigma]$ is given by
evaluating the first variation\footnote{In Eq.(\ref{k88}) $\boldsymbol{Y}$ is
the deformation vector field determining the curves $\sigma(s,l)$\ necessary
to calculate the first variation of $I[\sigma].$} of $I[\sigma]$, i.e.,%
\begin{equation}
\boldsymbol{\delta}^{0}I[\sigma]=m\int\pounds _{\boldsymbol{Y}}(g_{\mu\nu
}dx^{\mu}dx^{\nu})^{\frac{1}{2}.} \label{k88}%
\end{equation}
and putting $\boldsymbol{\delta}^{0}I[\sigma]=0$. The result, as well known is
the geodesic equation
\begin{equation}
\boldsymbol{D}_{\boldsymbol{u}}\boldsymbol{u}=0. \label{k9}%
\end{equation}

Now, taking into account that the Killing vector fields determines a basis
inside the Cayley-Klein absolute we write
\begin{equation}
\boldsymbol{u}=u^{\mu}\partial_{\mu}=U^{\alpha}\mathbf{\Pi}_{\alpha}%
=U^{\alpha}\xi_{\alpha}^{\mu}\partial_{\mu}. \label{k10}%
\end{equation}
We now define the \textquotedblleft hybrid\textquotedblright\ connection
\ coefficients\footnote{Take notice that $\mathbf{\Gamma}_{\cdot
\mu\boldsymbol{\alpha}}^{\boldsymbol{\beta}\cdot\cdot}\neq\Gamma_{\cdot
\mu\alpha}^{\beta\cdot\cdot}$ where $D_{\partial_{\mu}}\partial_{\alpha
}=\Gamma_{\cdot\mu\alpha}^{\beta\cdot\cdot}\partial_{\beta}$, for otherwsise
confusion will arise.} $\mathbf{\Gamma}_{\cdot\mu\boldsymbol{\alpha}%
}^{\boldsymbol{\beta}\cdot\cdot}$ \ by:
\begin{equation}
\boldsymbol{D}_{\partial_{\mu}}\mathbf{\Pi}_{\alpha}:=\mathbf{\Gamma}%
_{\cdot\mu\boldsymbol{\alpha}}^{^{\boldsymbol{\beta}}\cdot\cdot}\Pi_{\beta
},~~~~~\boldsymbol{D}_{\partial_{\mu}}\mathbf{\Pi}^{\alpha}:=-\mathbf{\Gamma
}_{\cdot\mu\beta}^{\alpha\cdot\cdot}\mathbf{\Pi}^{\beta} \label{K11}%
\end{equation}
and write the geodesic equation as
\begin{gather}
\boldsymbol{D}_{\boldsymbol{u}}\boldsymbol{u}=u^{\mu}\boldsymbol{D}%
_{\partial\mu}(U^{\alpha}\mathbf{\Pi}_{\alpha})\nonumber\\
=(u^{\mu}\partial_{\mu}U^{\alpha})\mathbf{\Pi}_{\alpha}+u^{\mu}U^{\alpha
}\boldsymbol{D}_{\partial\mu}\mathbf{\Pi}_{\alpha}\nonumber\\
=(u^{\mu}\partial_{\mu}U^{\boldsymbol{\beta}})\mathbf{\Pi}_{\boldsymbol{\beta
}}+u^{\mu}U^{\boldsymbol{\alpha}}\mathbf{\Gamma}_{\cdot\mu\alpha}^{\beta
\cdot\cdot}\mathbf{\Pi}_{\beta}\nonumber\\
=(\frac{dU^{\beta}}{ds}+u^{\mu}U^{\alpha}\mathbf{\Gamma}_{\cdot\mu\alpha
}^{\beta\cdot\cdot})\mathbf{\Pi}_{\beta}=0. \label{k11}%
\end{gather}
or
\begin{equation}
\frac{dU^{\beta}}{ds}+u^{\mu}U^{\alpha}\mathbf{\Gamma}_{\cdot\mu\alpha}%
^{\beta\cdot\cdot}=0, \label{k12}%
\end{equation}
which on multiplying by the \textquotedblleft mass\textquotedblright\ $m$ and
calling $\pi^{\beta}=mU^{\boldsymbol{\beta}}$ and $\pi_{\rho}=g_{\rho\beta}%
\pi^{\beta}$ can be written equivalently as%
\begin{equation}
\frac{d\pi_{\rho}}{ds}-\mathbf{\Gamma}_{\cdot\mu\boldsymbol{\rho}}^{\beta
\cdot\cdot}u^{\mu}\pi_{\beta}=0, \label{k13}%
\end{equation}
which looks like Eq.(37) in \cite{ps2012}.\medskip

We want now to investigate the question: is Eq.(\ref{k13}) the same as Eq.(37)
in \cite{ps2012}?

To know the answer to the above question recall that in \cite{ps2012} authors
investigate the variation of $I[\sigma]$ under a variation of the curves
$\sigma(s)\mapsto$ $\sigma(s,\ell)$ induced by a coordinate transformation
$x^{\mu}\mapsto x^{\prime\mu}+\mathbf{\delta}_{\mathbf{\Pi}}x^{\mu}$ where
they put
\begin{equation}
\mathbf{\delta}_{\mathbf{\Pi}}x^{\mu}=\xi_{\boldsymbol{\rho}}^{\mu
}(x)\mathbf{\delta}x^{\rho} \label{k13a}%
\end{equation}
with $\xi_{\boldsymbol{\rho}}^{\mu}(x)$ being the components of the Killing
vector fields $\mathbf{\Pi}_{\alpha}$\ (recall Eq.(\ref{k2})) and where it is
said that $\mathbf{\delta}x^{\rho}$ is an \emph{ordinary} variation. However
in \cite{ps2012} we cannot find what authors mean by ordinary variation, and
so $\mathbf{\delta}x^{\rho}$ is not defined.

So, to continue our analysis we recall, that if the $\mathbf{\delta}x^{\rho
}=\varepsilon^{\alpha}$ are constants (but arbitrary) then $\boldsymbol{\delta
}_{\Pi}x^{\mu}$ corresponds to a diffeomorphism generated by a Killing vector
field $\mathbf{\Pi}=\varepsilon^{\alpha}\mathbf{\Pi}_{\alpha}$. However, if
the $\mathbf{\delta}x^{\rho}=\lambda^{\boldsymbol{\rho}}(x)$ are infinitesimal
\emph{arbitrary} functions (i.e., $\left\vert \lambda^{\boldsymbol{\rho}%
}(x)\right\vert <<1$), then the notation $\mathbf{\delta}_{\Pi}x^{\mu}$ is
misleading\ since $\xi_{\boldsymbol{\rho}}^{\mu}\lambda^{\boldsymbol{\rho}%
}(x)$ is the $\boldsymbol{\delta}^{0}I[\sigma]$ variation generated by a quite
\emph{arbitrary} vector field $\boldsymbol{Y}=\lambda^{\boldsymbol{\rho}%
}(x)\xi_{\boldsymbol{\rho}}^{\mu}(x)\partial_{\mu}=\lambda^{\boldsymbol{\rho}%
}(x)\mathbf{\Pi}_{\rho}=Y^{\mu}\partial_{\mu}$. In this case we get from
$\boldsymbol{\delta}^{0}I[\sigma]=0$ the geodesic equation.

\subsection{Curves Obtained From Constrained Variations.}

So, let us study the \emph{constrained variation} when the $\mathbf{\delta
}x^{\rho}=\varepsilon^{\alpha}$ are constants (but arbitrary). We denote the
constrained variation by $\boldsymbol{\delta}^{c}I[\sigma]$. In this case
starting from Eq.(\ref{k8})
\begin{equation}
\boldsymbol{\delta}^{c}I[\sigma]=-m\int_{\sigma}\{u^{\gamma}(\boldsymbol{D}%
_{\gamma}u_{\beta})\xi_{\boldsymbol{\rho}}^{\beta}\}\mathbf{\delta}x^{\rho}ds
\label{k13b}%
\end{equation}
where (taking account of notations already introduced) it is
\begin{align}
u^{\gamma}\boldsymbol{D}_{e_{\gamma}}(u_{\beta}e^{\beta})  &  =u^{\gamma
}(\boldsymbol{D}_{\gamma}u_{\beta})e^{\beta},\nonumber\\
\boldsymbol{D}_{\gamma}u_{\beta}  &  =\partial_{\gamma}u_{\beta}-\Gamma
_{\cdot\gamma\beta}^{\tau\cdot\cdot}u_{\tau},\nonumber\\
\boldsymbol{D}_{e_{\gamma}}e_{\beta}  &  =\Gamma_{\cdot\gamma\beta}^{\tau
\cdot\cdot}e_{\tau},~~~\boldsymbol{D}_{e_{\gamma}}e^{\beta}=-\Gamma
_{\cdot\gamma\tau}^{\beta\cdot\cdot}e^{\tau}, \label{k13c}%
\end{align}
we can write%
\begin{gather}
\boldsymbol{\delta}^{c}I[\sigma]=-m\int\{u^{\gamma}(\boldsymbol{D}_{\gamma
}u_{\beta})\xi_{\boldsymbol{\rho}}^{\beta}\}\boldsymbol{\delta}x^{\rho
}ds\nonumber\\
=-m\int\{u^{\gamma}\{(\boldsymbol{D}_{\gamma}(u^{\beta}\xi_{\boldsymbol{\rho
}\beta}))\}\boldsymbol{\delta}x^{\rho}ds+\frac{m}{2}\int\{u^{\gamma}u^{\beta
}(\boldsymbol{D}_{\gamma}\xi_{\beta\rho}+\boldsymbol{D}_{\beta}\xi_{\rho
}^{\gamma}\}\boldsymbol{\delta}x^{\rho}ds\nonumber\\
=-m\int\{u^{\gamma}(\boldsymbol{D}_{\gamma}(\boldsymbol{u\cdot}\Pi
_{\boldsymbol{\rho}}))\}\delta x^{\rho}ds=-m\int[u^{\gamma}\boldsymbol{D}%
_{\gamma}(\mathcal{U}_{\boldsymbol{\rho}})\}\boldsymbol{\delta}x^{\rho}ds.
\label{k13da}%
\end{gather}
and $\boldsymbol{\delta}^{c}I[\sigma]=0$ implies
\begin{equation}
u^{\gamma}\boldsymbol{D}_{\gamma}(mU_{\boldsymbol{\rho}})=0. \label{k13e}%
\end{equation}

Eq.(\ref{k13e}) with $\pi_{\boldsymbol{\rho}}=mU_{\boldsymbol{\rho}}$ can be
written as
\begin{equation}
\frac{d}{ds}\pi_{\boldsymbol{\rho}}-u^{\gamma}\pi_{\beta}\Gamma_{\cdot
\gamma\boldsymbol{\rho}}^{\beta\boldsymbol{\cdot\cdot}}=0 \label{k13f}%
\end{equation}
which is Eq.(37) in \ \cite{ps2012}. Note that this equations looks like the
geodesic equation written as Eq.(\ref{k13}) above, but it is in fact
\emph{different} since of course, recalling Eq.(\ref{K11}) it is
$\mathbf{\Gamma}_{\cdot\gamma\boldsymbol{\rho}}^{\beta\boldsymbol{\cdot\cdot}%
}\neq\Gamma_{\cdot\gamma\boldsymbol{\rho}}^{\beta\boldsymbol{\cdot\cdot}}$. In
\cite{ps2012} $\pi_{\boldsymbol{\rho}}=mU_{\boldsymbol{\rho}}$ is
unfortunately wrongly interpreted as the components of a covector field over
$\sigma$ supposed to be the energy-momentum covector of the particle, because
authors of \cite{ps2012} supposed to have proved that this equation could be
derived from Papapetrou's method, what is not the case as we show in Section 6.

\section{Generalized Energy-Momentum Conservation Laws in de Sitter Spacetime
Structures}

\subsection{Lie Algebra of the de Sitter Group}

Given a structure $(\mathring{M}\simeq\mathbb{R}^{5},\boldsymbol{\mathring{g}%
},)$ introduced in Section 3 define $\mathbf{J}_{AB}\in\sec T\mathring{M}$ by
\begin{equation}
\mathbf{J}_{AB}:=\eta_{AC}X^{C}\frac{\partial}{\partial X^{B}}-\eta_{BC}%
X^{C}\frac{\partial}{\partial X^{A}}. \label{em0}%
\end{equation}
These objects are generators of the Lie algebra\ $\mathrm{so}(1,4)$ of the de
Sitter group.

Using the bases $\{\partial_{\mu}\},\{dx^{\mu}\}$ introduced above the ten
Killing vector fields of de Sitter spacetime are the fields $\mathbf{J}%
_{\alpha4}\in\sec TM$ and $\mathbf{J}_{\mu\nu}\in\sec TM$ and it
is\footnote{Proofs of Eqs.(\ref{em01}) and (\ref{em1}) are in Appendix F. Of
course, in those equations, it is $\boldsymbol{P}_{\alpha}=e_{\alpha}$ (as
introduced \ is Section 2).}
\begin{align}
\mathbf{J}_{\alpha4}  &  =\eta_{\alpha C}X^{C}\frac{\partial}{\partial X^{4}%
}-\eta_{4C}X^{C}\frac{\partial}{\partial X^{\alpha}}=\ell\boldsymbol{P}%
_{\alpha}-\frac{1}{4\ell}\boldsymbol{K}_{4\alpha}\label{em01}\\
&  =\ell\partial_{\alpha}-\frac{1}{4\ell}(2\eta_{\alpha\lambda}x^{\lambda
}x^{\nu}-\sigma^{2}\delta_{\alpha}^{\nu})\partial_{\nu}.\nonumber\\
\mathbf{J}_{\mu\nu}  &  =\eta_{\mu\kappa}X^{\kappa}\frac{\partial}{\partial
X^{\nu}}-\eta_{\nu\kappa}X^{\kappa}\frac{\partial}{\partial X^{\mu}}=\eta
_{\mu\lambda}x^{\lambda}\boldsymbol{P}_{\nu}-\eta_{\nu\lambda}x^{\lambda
}\boldsymbol{P}_{\mu}. \label{em1}%
\end{align}
The $J_{\alpha4}\in\sec TM$ and $J_{\mu\nu}\in\sec TM$ satisfy the Lie algebra
$\mathrm{so}(1,4)$ of the de Sitter group, this time acting as a
transformation group acting transitively in de Sitter spacetime. We have
\begin{align}
\lbrack\mathbf{J}_{\alpha4},\mathbf{J}_{\beta4}]  &  =\mathbf{J}_{\alpha\beta
},\nonumber\\
\lbrack\mathbf{J}_{\alpha\beta},\mathbf{J}_{\lambda4}]  &  =\eta_{\lambda
\beta}\mathbf{J}_{\alpha4}-\eta_{\lambda\alpha}\mathbf{J}_{\beta4},\nonumber\\
\lbrack\mathbf{J}_{\alpha\beta},\mathbf{J}_{\lambda\tau}]  &  =\eta
_{\alpha\lambda}\mathbf{J}_{\beta\tau}+\eta_{\beta\tau}\mathbf{J}%
_{\alpha\lambda}-\eta_{\beta\lambda}\mathbf{J}_{\alpha\tau}-\eta_{\alpha\tau
}\mathbf{J}_{\beta\lambda}. \label{em2}%
\end{align}
It is usual in physical applications to define
\begin{equation}
\boldsymbol{\Pi}_{\alpha}=\mathbf{J}_{\alpha4}/\ell\label{em3}%
\end{equation}
for then we have
\begin{align}
\lbrack\boldsymbol{\Pi}_{\alpha},\boldsymbol{\Pi}_{\beta}]  &  =\frac{1}%
{\ell^{2}}\mathbf{J}_{\alpha\beta},\nonumber\\
\lbrack\mathbf{J}_{\alpha\beta},\mathbf{\Pi}_{\lambda}]  &  =\eta
_{\lambda\beta}\boldsymbol{\Pi}_{\alpha}-\eta_{\lambda\alpha}\boldsymbol{\Pi
}_{\beta},\nonumber\\
\lbrack\mathbf{J}_{\alpha\beta},\mathbf{J}_{\lambda\tau}]  &  =\eta
_{\alpha\lambda}\mathbf{J}_{\beta\tau}+\eta_{\beta\tau}\mathbf{J}%
_{\alpha\lambda}-\eta_{\beta\lambda}\mathbf{J}_{\alpha\tau}-\eta_{\alpha\tau
}\mathbf{J}_{\beta\lambda}. \label{em4}%
\end{align}

The Killing vector fields $\mathbf{J}_{\alpha\beta}$ satisfy the Lie algebra
$so(1,3)$ (of the special Lorentz group).

\begin{remark}
From \emph{Eq.(\ref{em4})} we see that when $\ell\mapsto\infty$ the Lie
algebra of $so(1,4)$ goes into the Lie algebra of the Poincar\'{e} group
$\mathfrak{P}$ which is the \ semi-direct sum of the group of translations in
$\mathbb{R}^{4}$ plus component of the special Lorentz group, i.e.,
$\mathfrak{P}=(\mathbb{R}^{4}\boxplus SO(1,3))$. This is eventually the
justification for \ physicists to call the $\boldsymbol{\Pi}_{\alpha}$ the
\textquotedblleft translation\textquotedblright\ generators of the de Sitter group.

However, it is necessary to\ have in mind that whereas the translation
subgroup $\mathbb{R}^{4}$ of $\mathfrak{P}$ acts transitively in Minkowski
spacetime manifold, the set $\{\boldsymbol{\Pi}_{\alpha}\}$ does not close in
a subalgebra of the de Sitter algebra and thus it is impossible in general to
find $\exp(\lambda^{\alpha}\boldsymbol{\Pi}_{\alpha})$ such that\ given
arbitrary $x,y\in\mathbb{R\times}S^{3}$ it is $y=\exp(\lambda^{\alpha
}\boldsymbol{\Pi}_{\alpha})x$. Only the whole group $SO(1,4)$ acts
transitively on $\mathbb{R\times}S^{3}$.\smallskip
\end{remark}

\paragraph{Casimir Invariants}

Now, if $\{\mathbf{E}^{A}=dX^{A}\}$ is an orthonormal basis for the structure
$\mathbb{R}^{1,4}=(\mathbb{R}^{5},\boldsymbol{\mathring{g}})$ define the
angular momentum operator as the Clifford algebra valued operator
\begin{equation}
\mathbf{J=}\frac{1}{2}\mathbf{E}^{A}\wedge\mathbf{E}^{B}\mathbf{J}_{AB}.
\label{MN1}%
\end{equation}

Taking into account the results of Appendix A its (Clifford) square is%
\begin{equation}
\mathbf{J}^{2}=\mathbf{J\lrcorner J+J\wedge J=-J\cdot J+J\wedge J.}
\label{MN2}%
\end{equation}

It is immediate to verify that $\mathbf{J}^{2}$ is invariant under the
transformations of the de Sitter group. $\mathbf{J\lrcorner J}$ is (a constant
appart) the first invariant Casimir operator of the de Sitter group. The
second invariant Casimir operator of the de Sitter is related to
$\mathbf{J\wedge J}$. Indeed, defining%

\begin{equation}
\mathbf{W:=}\underset{\boldsymbol{\mathring{g}}}{\star}\frac{1}{8\ell
}(\mathbf{J\wedge J})=\frac{1}{8\ell}(\mathbf{J\wedge J})\lrcorner
\tau_{\boldsymbol{\mathring{g}}}. \label{c11a}%
\end{equation}
one can easily show (details in \cite{RWC2015}) that
\begin{align}
\mathbf{W\cdot W}  &  =\mathbf{WW}=\mathbf{W}^{2}=-\frac{1}{64\ell^{2}%
}(\mathbf{J\wedge J})\lrcorner(\mathbf{J\wedge J})=-\frac{1}{64\ell^{2}%
}(\mathbf{J\wedge J})\cdot(\mathbf{J\wedge J})\nonumber\\
&  =-\frac{1}{64\ell^{2}}(\mathbf{J\wedge J})(\mathbf{J\wedge J}). \label{c12}%
\end{align}
is indeed an invariant operator.

As well known, the representations of the de Sitter group are classified by
their Casimir invariants $I_{1}$ and $I_{2}$ which here following
\cite{gursey} we take as%

\begin{align}
I_{1}  &  =-\mathbf{J\lrcorner J=}\frac{1}{2\ell^{2}}\mathbf{J}_{AB}%
\mathbf{J}^{AB}=\eta^{\alpha\beta}\boldsymbol{\Pi}_{\alpha}\boldsymbol{\Pi
}_{\beta}+\frac{1}{2\ell^{2}}\text{ }\eta^{\alpha\lambda}\eta^{\beta\tau
}\mathbf{J}_{\alpha\beta}\mathbf{J}_{\lambda\tau}=M^{2},\nonumber\\
I_{2}  &  =\mathbf{W}^{2}=\boldsymbol{W}^{A}\boldsymbol{W}_{A}=\eta
^{\alpha\beta}\boldsymbol{V}_{\alpha}\boldsymbol{V}_{\beta}+\frac{1}{\ell^{2}%
}(\boldsymbol{W}_{4})^{2}, \label{m5}%
\end{align}
where $M\in\mathbb{R}$ and the fields $\boldsymbol{W}_{A}$ and $\boldsymbol{V}%
_{\alpha}$ are defined by:%
\begin{align}
\boldsymbol{W}_{A}  &  :=\frac{1}{8\ell}\varepsilon_{ABCDE}\mathbf{J}%
^{AB}\mathbf{J}^{DE}\nonumber\\
\boldsymbol{V}_{\alpha}  &  :=-\frac{1}{2}\varepsilon_{4\alpha\lambda\mu\nu
}\eta^{\lambda\rho}\boldsymbol{\Pi}_{\rho}\mathbf{J}^{\mu\nu}, \label{m6}%
\end{align}
from where it follows that
\begin{equation}
\boldsymbol{W}_{4}=\frac{1}{8}\varepsilon_{4\mu\nu\rho\tau}\mathbf{J}^{\mu\nu
}\mathbf{J}^{\mu\nu}. \label{m7}%
\end{equation}
In the limit when $\ell\mapsto\infty$ we get the Casimir operators of the
special Lorentz group
\begin{align}
I_{1}  &  \mapsto\boldsymbol{P}_{\alpha}\boldsymbol{P}^{\alpha}=m^{2}%
\mathbf{,}\nonumber\\
I_{2}  &  \mapsto\eta^{\alpha\beta}\boldsymbol{V}_{\alpha}\boldsymbol{V}%
_{\beta}=m^{2}s(s+1) \label{m8}%
\end{align}
where $m\in\mathbb{R}$ and $s=0,1/2,1,3/2,...$\smallskip

We see that $\boldsymbol{\Pi}_{\alpha}$ looks like the components of an
\emph{energy-momentum vector\ }$\boldsymbol{P}=\left.  P_{\alpha}%
\vartheta^{\alpha}\right\vert _{o}$ \emph{of a closed physical system }(see
Eq.(\ref{P17})) in the Minkowski spacetime of Special Relativity, for which
$\boldsymbol{P}^{2}=m^{2}$, with $m$ the mass of the system. However, take
into account that whereas the $P_{\alpha}$ are simple real numbers, the
$\boldsymbol{\Pi}_{\alpha}$ are vector fields.

Moreover, take into account that $\eta^{\alpha\beta}\boldsymbol{\Pi}_{\alpha
}\boldsymbol{\Pi}_{\beta}$ is not an invariant, i.e., it does not commute with
the generators of the Lie algebra of the de Sitter group.\medskip

\subsection{Generalized Energy-Momentum\ Covector for a Closed System in the
Teleparallel de Sitter Spacetime Structure}

In this section we suppose that $(M=\mathbb{R\times}S^{3},\boldsymbol{g}%
,\tau_{\boldsymbol{g}},\uparrow)$ is the physical arena where Physics take place.

We know that $(M=\mathbb{R\times}S^{3},\boldsymbol{g}$) has ten Killing vector
fields and four of them (one timelike and three spacelike, $\boldsymbol{\Pi
}_{\alpha}$, $\alpha=0,1,2,3$ ) generated \textquotedblleft
translations\textquotedblright. Thus if we suppose that $(M=\mathbb{R\times
}S^{3},\boldsymbol{g},\tau_{\boldsymbol{g}},\uparrow)$ is populated
by\ interacting matter fields $\{\phi_{1,}...,\phi_{n}\}$ with dynamics
described by a Lagrangian formalism we can construct as described in the
Appendix C.1 the conserved currents
\begin{equation}
\mathcal{J}_{\mathbf{\Pi}_{\alpha}}=\mathcal{J}_{\alpha}^{\beta}%
\boldsymbol{\gamma}_{\beta}=\mathcal{J}_{\alpha}^{\mu}\boldsymbol{\vartheta
}_{\mu} \label{CCURR}%
\end{equation}
were taking into account that $\boldsymbol{\delta}_{\mathbf{\Pi}_{\alpha}}%
^{0}=-\pounds _{\mathbf{\Pi}_{\alpha}}$ and denoting by $\boldsymbol{\delta
}_{\mathbf{\Pi}_{\alpha}}$ each particular\emph{ local variation} generated by
$\mathbf{\Pi}_{\alpha}$ (recall Eq.(\ref{K1})) we have%
\begin{equation}
\boldsymbol{\delta}_{\mathbf{\Pi}_{\alpha}}\phi_{A}=\boldsymbol{\delta
}_{\mathbf{\Pi}_{\alpha}}^{0}\phi_{A}+\mathbf{\delta}x^{\nu}\partial_{\nu}%
\phi_{A} \label{ok}%
\end{equation}
and thus we write
\begin{align}
\mathcal{J}_{\alpha}^{\mu}  &  =\pi_{A}^{\mu}\boldsymbol{\delta}_{\mathbf{\Pi
}_{\alpha}}\phi_{A}+\Upsilon_{\nu}^{\mu}\boldsymbol{\delta}_{\mathbf{\Pi
}_{\alpha}}^{0}x^{\mu}\nonumber\\
&  =\Lambda_{\nu}^{\mu}\boldsymbol{\xi}_{\alpha}^{\nu}+\Upsilon_{\nu}^{\mu
}\boldsymbol{\xi}_{\alpha}^{\nu} \label{ok2}%
\end{align}
where
\begin{equation}
\Lambda_{\nu}^{\mu}\boldsymbol{\xi}_{\alpha}^{\nu}:=\pi_{A}^{\mu
}\boldsymbol{\delta}_{\mathbf{\Pi}_{\alpha}}\phi_{A}. \label{ok3}%
\end{equation}
In Eq.(\ref{CCURR}) $\{\boldsymbol{\gamma}^{\beta}\}$ is the dual basis of
the\ orthonormal basis $\{\boldsymbol{e}_{\alpha}\}$ defined
by\footnote{Recall that the fields $\boldsymbol{e}_{\alpha}$ are only defined
in\ subset $\{S^{3}-$north pole$\}.$}
\begin{equation}
\boldsymbol{e}_{\alpha}:=\frac{\boldsymbol{\Pi}_{\alpha}}{\boldsymbol{g}%
(\boldsymbol{\Pi}_{\alpha},\boldsymbol{\Pi}_{\alpha})}. \label{m10}%
\end{equation}
\ \ From the conserved currents $\mathcal{J}_{\mathbf{\Pi}_{\alpha}}$ we can
obtain four conserved quantities (the $P_{\alpha}$ in Eq.(\ref{p16})).

\begin{remark}
It is crucial to observe that the above results have been deduced
\emph{without} introduction of any connection in the structure
$(M=\mathbb{R\times}S^{3},\boldsymbol{g,}\tau_{\boldsymbol{g}},\uparrow)$.
However that structure is not enough for using the $P_{\alpha}$ to build a
\emph{(}generalized\emph{)} covector analogous to the energy-momentum covector
$\boldsymbol{P}$ \emph{(}see \emph{Eq.(\ref{P17}))} of special relativistic theories.
\end{remark}

If we add $\boldsymbol{D}$, the Levi-Civita connection of $\boldsymbol{g}$ to
$(M=\mathbb{R\times}S^{3},\boldsymbol{g,}\tau_{\boldsymbol{g}},\uparrow)$ we
get the Lorentzian de Sitter spacetime structure $M^{dSP}$ and defining a
generalized energy-momentum tensor
\begin{equation}
\mathbf{\Theta}=\mathcal{J}_{\mathbf{\Pi}_{\alpha}}\otimes e^{\alpha}\in\sec
T_{1}^{1}M\label{M10A}%
\end{equation}
\qquad we know that $\underset{\boldsymbol{g}}{\delta}\mathcal{J}%
_{\mathbf{\Pi}_{\alpha}}=0$ implies $\boldsymbol{D}\cdot\mathbf{\Theta}=0$, a
covariant \textquotedblleft conservation\textquotedblright\ law.

However, the introduction of $M^{dSP}$ in our game is of no help to construct
a covector like $\boldsymbol{P}$ since in $M^{dSP}$ vectors at different
spacetime points cannot be directly compared..\smallskip

So, the question arises: is it possible to define a structure where $\forall
x,y\in\mathbb{R\times}S^{3}$ we can define objects $P_{\alpha}^{\mathrm{dS}}$
such that
\begin{equation}
\boldsymbol{P}_{\mathrm{dS}}=P_{\alpha}^{\mathrm{dS}}\left.
\boldsymbol{\gamma}^{\alpha}\right\vert _{x}=P_{\alpha}^{\mathrm{dS}}\left.
\boldsymbol{\gamma}^{\alpha}\right\vert _{y} \label{m11}%
\end{equation}
defines a legitimate covector for a closed physical system living in a de
Sitter structure $(M=\mathbb{R\times}S^{3},\boldsymbol{g,}\tau_{\boldsymbol{g}%
},\uparrow)$ and for which $\boldsymbol{P}_{\mathrm{dS}}$ can be said to be a
kind of generalization of the momentum of the closed system in Minkowski
spacetime?\smallskip

We show now the answer is positive. We recall that\emph{ }$\boldsymbol{D}$ has
been introduced in our developments only as a useful mathematical device and
is quite irrelevant in the construction of legitimate conservation laws since
the conserved currents $\mathcal{J}_{\mathbf{\Pi}_{\alpha}}$ have been
obtained without the use of any connection. So, we now introduce for our goal
a \emph{teleparallel de Sitter spacetime}, i.e., the structure $M^{dSTP}%
=(M=\mathbb{R\times}S^{3},\boldsymbol{\nabla},\tau_{\boldsymbol{g}},\uparrow)$
where $\boldsymbol{\nabla}$ is a metric compatible teleparallel connection
defined by
\begin{equation}
\boldsymbol{\nabla}_{\boldsymbol{e}_{\alpha}}\boldsymbol{e}_{\beta}%
=\omega_{\cdot\alpha\beta}^{\kappa\cdot\cdot}\boldsymbol{e}_{\kappa}=0.
\label{m9}%
\end{equation}
Under this condition we know that we can identify all tangent and all
cotangent spaces. So, we have for $\forall x,y\in\mathbb{R\times}S^{3},$
\[
\left.  \boldsymbol{e}_{\alpha}\right\vert _{x}\simeq\left.  \boldsymbol{e}%
_{\alpha}\right\vert _{y}\text{~~and~~}\left.  \boldsymbol{\gamma}^{\alpha
}\right\vert _{x}\simeq\left.  \boldsymbol{\gamma}^{\alpha}\right\vert
_{y}=\boldsymbol{E}^{\alpha},
\]
where $\{\boldsymbol{E}^{\alpha}\}$ is a basis of a vector space
$\mathcal{V\simeq}\mathbb{R}^{4}$.

Thus, in the structure $M^{dSTP}$ Eq.(\ref{m11}) defines indeed a legitimate
covector in $\mathcal{V\simeq}\mathbb{R}^{4}$ and thus permits a legitimate
\emph{generalization} of the concepts of energy-momentum covector obtained for
physical theories in Minkowski spacetime. The term generalization is a good
one here because in the limit where $\ell\rightarrow\infty$, $\boldsymbol{\Pi
}_{\alpha}\mapsto\partial/\partial x^{\alpha},$ $\Lambda_{\alpha}^{\kappa}=0$
and thus $\Theta_{\alpha}^{\kappa}=\Upsilon_{\alpha}^{\kappa}$.

\subsection{The Conserved Currents $\boldsymbol{J}_{\mathbf{\Pi}_{\alpha}%
}=\boldsymbol{g}(\boldsymbol{\Pi}_{\alpha},\boldsymbol{~})$}

To proceed we show that the translational Killing vector fields of the de
Sitter structure $(M,\boldsymbol{g})$ determines trivially conserved currents%

\[
\boldsymbol{J}_{\mathbf{\Pi}_{\alpha}}=\boldsymbol{g}(\boldsymbol{\Pi}%
_{\alpha},\boldsymbol{~})
\]

Indeed \ using the $M^{dSTP}$ structure as a convenient device we recall the
result proved in \cite{R2010} that for each vector Killing $\mathbf{K}$ the
one form field $K=\boldsymbol{g}(\mathbf{K},\boldsymbol{~})$ is such that
$\underset{\boldsymbol{g}}{\delta}K=0$. So, it is
\begin{equation}
\underset{\boldsymbol{g}}{\delta}\boldsymbol{J}_{\mathbf{\Pi}_{\alpha}}=0.
\label{m9a}%
\end{equation}
and of course, also
\begin{equation}
\underset{\boldsymbol{g}}{\delta}\boldsymbol{J}_{\mathbf{\Pi}}=0 \label{m9aa}%
\end{equation}
where
\begin{equation}
\boldsymbol{J}_{\mathbf{\Pi}}:=\boldsymbol{g}(\boldsymbol{\Pi},\boldsymbol{~}%
),~~~~\boldsymbol{\Pi:=\varepsilon}^{\alpha}\boldsymbol{\Pi}_{\alpha}
\label{m10a}%
\end{equation}
with the $\boldsymbol{\varepsilon}^{\alpha}$ real constants such that
$\left\vert \boldsymbol{\varepsilon}^{\alpha}\right\vert <<1$. Recalling from
Eq.(\ref{em1}) that in projective conformal coordinate bases ($\{e_{\mu
}=\partial_{\mu}\}$ and $\{\boldsymbol{\vartheta}^{\mu}=dx^{\mu}\}$) the
components of $\boldsymbol{\Pi}_{\alpha}$ are
\begin{equation}
\xi_{\alpha}^{\mu}=\delta_{\alpha}^{\mu}-\frac{1}{4\ell^{2}}(2\eta_{\alpha
\rho}x^{\rho}x^{\mu}-\sigma^{2}\delta_{\alpha}^{\mu}) \label{newa}%
\end{equation}
we get the conserved current $\boldsymbol{J}_{\mathbf{\Pi}}$ is%
\begin{equation}
\boldsymbol{J}_{\mathbf{\Pi}}=\boldsymbol{\varepsilon}^{\alpha}\boldsymbol{J}%
_{\mathbf{\Pi}_{\alpha}}=\boldsymbol{\varepsilon}^{\alpha}\xi_{\alpha}^{\mu
}\boldsymbol{\vartheta}_{\mu}. \label{new}%
\end{equation}

\begin{remark}
Take notice that of course, this current is not the conserved current that we
found in the previous section.
\end{remark}

\begin{remark}
Take notice also that in \emph{\cite{ps2012}} authors trying to generalize the
results that follows from the canonical formalism for the case of field
theories in Minkowski spacetime suppose that they can eliminate the term
$\pi_{A}^{\mu}\boldsymbol{\delta}\phi_{A}$ from \emph{Eq.(\ref{ok2})} by
\emph{decree} postulating a \textquotedblleft new kind of \textquotedblleft
local variation\textquotedblright, call it $\boldsymbol{\delta}^{\prime}$\ for
which $\boldsymbol{\delta}^{\prime}\phi_{A}=0$. The fact is that such a
\textquotedblleft new kind of local variation\textquotedblright\ never appears
in the canonical Lagrangian formalism, only $\boldsymbol{\delta}\phi_{A}$
appears and in general it is not zero.
\end{remark}

Now, the \emph{generalized} \emph{canonical de Sitter energy-momentum tensor
is}%
\begin{equation}
\mathbf{\Theta}=\mathcal{J}_{\mathbf{\Pi}_{\alpha}}\otimes e^{\alpha}%
=\Theta_{\alpha}^{\kappa}\boldsymbol{\vartheta}_{\kappa}\otimes e^{\alpha}%
\in\sec\bigwedge\nolimits^{1}T^{\ast}M\otimes TM \label{NEW2A}%
\end{equation}
and making analogy with the case of Minkowski spacetime where the
$\Upsilon_{\alpha}^{\kappa}$ have been defined as the components of the
canonical energy-momentum tensor\footnote{The explict form of the $K_{\alpha
}^{\kappa}$ can be determined withouth difficulty if needed.}
$\boldsymbol{\Upsilon}$, we write $\Theta_{\alpha}^{\kappa},$ as%
\begin{equation}
\Theta_{\alpha}^{\kappa}=\Upsilon_{\alpha}^{\kappa}-\frac{1}{4\ell^{2}%
}K_{\alpha}^{\kappa}+\Lambda_{\alpha}^{\kappa}, \label{nwaa}%
\end{equation}
Thus each one of the genuine conservation laws $d\underset{\boldsymbol{g}%
}{\star}\mathcal{J}_{\mathbf{\Pi}_{\alpha}}=0$, $\alpha=0,1,2,3$ read in
coordinate basis components%
\begin{gather}
(\partial_{\mu}\Upsilon_{\alpha}^{\kappa}-\Gamma_{\cdot\lambda\nu}%
^{\lambda\cdot\cdot}\Upsilon_{\alpha}^{\kappa})-\frac{1}{4\ell^{2}}%
(\partial_{\mu}K_{\alpha}^{\mu}-\Gamma_{\cdot\lambda\nu}^{\lambda\cdot\cdot
}K_{\alpha}^{\kappa})-(\partial_{\mu}\Lambda_{\alpha}^{\kappa}-\Gamma
_{\cdot\lambda\nu}^{\lambda\cdot\cdot}\Lambda_{\alpha}^{\kappa})\nonumber\\
=\frac{1}{\sqrt{-\det\boldsymbol{g}}}\partial_{\mu}(\sqrt{-\det\boldsymbol{g}%
}\Upsilon_{\alpha}^{\mu})-\frac{1}{4\ell^{2}}\frac{1}{\sqrt{-\det
\boldsymbol{g}}}\partial_{\mu}(\sqrt{-\det\boldsymbol{g}}K_{\alpha}^{\mu
})\label{NEW3}\\
+\frac{1}{\sqrt{-\det\boldsymbol{g}}}\partial_{\mu}(\sqrt{-\det\boldsymbol{g}%
}\Lambda_{\alpha}^{\mu})=0\nonumber
\end{gather}
or%
\begin{equation}
\partial_{\mu}\left[  \sqrt{-\det\boldsymbol{g}}\left(  \Upsilon_{\alpha}%
^{\mu}-\frac{1}{4\ell^{2}}K_{\alpha}^{\mu}+\Lambda_{\alpha}^{\mu}\right)
\right]  =0\Leftrightarrow\boldsymbol{D}_{\mu}\Theta_{\alpha}^{\mu}=0.
\label{new4}%
\end{equation}

\begin{remark}
Now, recalling the relation of the connection coefficients of the bases
$\{\boldsymbol{e}_{\alpha}\}$ of the Levi-Civita connection $\boldsymbol{D}$
of $\boldsymbol{g}$ \emph{(denoted }$\mathbf{\Gamma}_{\cdot\alpha\beta
}^{\kappa\cdot\cdot}$\emph{) }and the coefficients of the basis
$\{\boldsymbol{e}_{\alpha}\}$ of the teleparallel connection $\nabla$ of
$\boldsymbol{g}$ \emph{(denoted }$\mathbf{\bar{\Gamma}}_{\cdot\alpha\beta
}^{\kappa\cdot\cdot}$\emph{) }are \emph{\cite{rc2007}}
\begin{equation}
\mathbf{\bar{\Gamma}}_{\cdot\alpha\beta}^{\kappa\cdot\cdot}=\mathbf{\Gamma
}_{\cdot\alpha\beta}^{\kappa\cdot\cdot}+\mathrm{\triangle}_{\cdot\alpha\beta
}^{\kappa\cdot\cdot} \label{REL}%
\end{equation}
where
\begin{equation}
\mathrm{\triangle}_{\cdot\alpha\beta}^{\kappa\cdot\cdot}:=-\frac{1}{2}\left(
\mathrm{T}_{\alpha\cdot\beta}^{\cdot\kappa\cdot}+\mathrm{T}_{\beta\cdot\alpha
}^{\cdot\kappa\cdot}-\mathrm{T}_{\cdot\alpha\beta}^{\kappa\cdot\cdot}\right)
\label{COT}%
\end{equation}
are the components of the contorsion tensor and $\mathrm{T}_{\cdot\alpha\beta
}^{\kappa\cdot\cdot}$ are the components of the torsion tensor of the
connection $\nabla$, we can write $\boldsymbol{D}\bullet\boldsymbol{\Theta}=0$
\emph{(}taking into account that $\mathbf{\bar{\Gamma}}_{\cdot\alpha\beta
}^{\kappa\cdot\cdot}=0$\emph{)} in components relative to orthonormal basis as%
\[
\boldsymbol{D}_{\alpha}\Theta_{\beta}^{\alpha}:=(\boldsymbol{D}%
_{\boldsymbol{e}_{\alpha}}\boldsymbol{\Theta})_{\beta}^{\alpha}=\boldsymbol{e}%
_{\alpha}(\Theta_{\beta}^{\alpha})+\mathrm{\triangle}_{\alpha\iota}^{\alpha
}\Theta_{\beta}^{\iota}-\mathrm{\triangle}_{\alpha\beta}^{\iota}\Theta_{\iota
}^{\alpha}=0.
\]
On the other hand since $\nabla_{\alpha}\Theta_{\beta}^{\alpha}:=(\nabla
_{\boldsymbol{e}_{\alpha}}\Theta)_{\beta}^{\alpha}=\boldsymbol{e}_{\alpha
}(\Theta_{\beta}^{\alpha})$ we have
\begin{equation}
\nabla_{\alpha}\Theta_{\beta}^{\alpha}=-\mathrm{\triangle}_{\alpha\iota
}^{\alpha}\Theta_{\beta}^{\iota}+\mathrm{\triangle}_{\alpha\beta}^{\iota
}\Theta_{\iota}^{\alpha}, \label{NEW}%
\end{equation}
which means that although $(\nabla\bullet\boldsymbol{\Theta})_{\beta}^{\alpha
}:=\nabla_{\alpha}\Theta_{\beta}^{\alpha}\neq0$ we can generate the conserved
currents $\boldsymbol{J}_{\mathbf{\Pi}_{\alpha}}$ in the teleparallel de
Sitter spacetime structure if \emph{Eq.(\ref{NEW})} is satisfied.
\end{remark}

\begin{remark}
To end this section and for completeness of the article it is necessary to
mention that in a remarkable paper \emph{(\cite{agp}) the gravitational
energy-momentum tensor \ in teleparallel gravity is discused in details. Also
related papers are (\cite{agp1,maluf}). A complete list of references can be
found in (\cite{ap2013})}
\end{remark}

\section{Equation of Motion for a Single-Pole Mass in a GRT Lorentzian
Spacetime}

In a classical paper Papapetrou derived the equations of motion of single pole
and spinning particles in \textbf{GRT}. Here we recall his derivation for the
case of a single pole-mass. We start recalling that in \textbf{GRT} the matter
fields is described by a energy-mometum tensor that satisfies the covariant
conservation law $\boldsymbol{D}\bullet\boldsymbol{T}=0$. If we introduce the
relative tensor
\begin{equation}
\mathfrak{T=}\boldsymbol{T\otimes\tau}_{\boldsymbol{g}}\in\sec T_{1}%
^{1}M\otimes%
{\textstyle\bigwedge\nolimits^{4}}
T^{\ast}M \label{eq1}%
\end{equation}
we have recalling Appendix E and that $\boldsymbol{D}\bullet\boldsymbol{T}=0$
that%
\begin{equation}
\boldsymbol{D}_{\nu}\mathfrak{T}^{\mu\nu}+\Gamma_{\cdot\nu\alpha}^{\mu
\cdot\cdot}\mathfrak{T}^{\alpha\nu}=0. \label{eq2}%
\end{equation}
From Eq.(\ref{eq2}) we have%
\begin{equation}
\partial_{\nu}(x^{\alpha}\mathfrak{T}^{\mu\nu})=\mathfrak{T}^{\mu\alpha
}-x^{\alpha}\Gamma_{\cdot\nu\alpha}^{\mu\cdot\cdot}\mathfrak{T}^{\alpha\nu}.
\label{eq3}%
\end{equation}
To continue we suppose that a single-pole mass (considered as a probe
particle) is modelled by the restriction of the energy-mometum tensor
$\boldsymbol{T}$ inside a \textquotedblleft narrow\textquotedblright\ tube in
the Lorentzian spacetime representing a given gravitational field. Let us call
$\boldsymbol{T}_{0}$ that restriction and notice that $\boldsymbol{D}%
\bullet\boldsymbol{T}_{0}=0$. Inside the tube a timelike line $\gamma$ is
chosen to represent the particle motion. We restrict our analysis to
hyperbolic Lorentzian spacetimes for which a foliation $\mathbb{R\times
}\mathcal{S}$ ($\mathcal{S}$ a $3$-dimensional manifold) exists. We choose a
parametrization for $\gamma$ such that the its coordinates are $\boldsymbol{x}%
^{\mu}(\gamma(t))=\mathrm{X}^{\mu}(t)$ where $t=x^{0}$. The probe particle is
characterized by taking the coordinates of any point in the world tube to
satisfy
\begin{equation}
\mathrm{\delta}x^{\mu}=x^{\mu}-\mathrm{X}^{\mu}<<1. \label{eq4}%
\end{equation}
According to Papapetrou a single-pole particle is one for which the
integral\footnote{The integration is to be evaluated at a $t=const$ spaceline
hypersurface $\mathcal{S}$.}
\begin{equation}
\int\mathfrak{T}_{o}^{\mu\nu}dv\neq0, \label{eq5}%
\end{equation}
and all other integrals
\begin{equation}
\int\mathrm{\delta}x^{\alpha}\mathfrak{T}_{o}^{\mu\nu}dv,~~~\int%
\mathrm{\delta}x^{\alpha}\mathrm{\delta}x^{\beta}\mathfrak{T}_{o}^{\mu\nu
}dv,... \label{eq.6}%
\end{equation}
are null. We now evaluate%
\begin{equation}
\frac{d}{dt}\int\mathfrak{T}_{o}^{\mu0}=-\int\Gamma_{\cdot\nu\alpha}^{\mu
\cdot\cdot}\mathfrak{T}_{o}^{\alpha\nu}dv \label{eq.6a}%
\end{equation}
and%
\begin{equation}
\frac{d}{dt}\int x^{\alpha}\mathfrak{T}_{o}^{\mu0}=\int\mathfrak{T}_{o}%
^{\mu\alpha}dv-\int x^{\alpha}\Gamma_{\cdot\nu\alpha}^{\mu\cdot\cdot
}\mathfrak{T}_{o}^{\alpha\nu}dv. \label{eq.7}%
\end{equation}

Inside the world tube modelling the particle we can expand the connection
coefficients as%
\begin{equation}
\Gamma_{\cdot\nu\alpha}^{\mu\cdot\cdot}=\text{~}_{o}\Gamma_{\cdot\nu\alpha
}^{\mu\cdot\cdot}+~_{o}\Gamma_{\cdot\nu\alpha,\kappa}^{\mu\cdot\cdot
}\mathrm{\delta}x^{\kappa} \label{eq.8}%
\end{equation}
with ~$_{o}\Gamma_{\cdot\nu\alpha}^{\mu\cdot\cdot}$ the components of the
connection in the worldline $\gamma$. Then according to the definition of a
single-pole particle we get from Eq.(\ref{eq.6a}) and Eq.(\ref{eq.7}) along
$\gamma$:
\begin{equation}
\frac{d}{dt}\int\mathfrak{T}_{o}^{\mu0}+\text{~}_{o}\Gamma_{\cdot\nu\alpha
}^{\mu\cdot\cdot}\int\mathfrak{T}_{o}^{\alpha\nu}=0, \label{eq.9}%
\end{equation}%
\begin{equation}
\int\mathfrak{T}_{o}^{\mu\alpha}=\frac{d\mathrm{X}^{\mu}}{dt}\int%
\mathfrak{T}_{o}^{\nu0}dv. \label{eq.10}%
\end{equation}

Now, put
\begin{equation}
\gamma_{\ast}=\boldsymbol{u}:=\frac{d\mathrm{X}^{\mu}}{ds}e_{\mu}
\label{eq.11}%
\end{equation}
where $ds$ is proper time along $\gamma$ and define
\begin{equation}
M^{\mu\alpha}=u^{0}\int\mathfrak{T}_{o}^{\mu\alpha}dv. \label{eq.12}%
\end{equation}
Eq.(\ref{eq.9}) and Eq.(\ref{eq.10}) become%
\begin{equation}
\frac{d}{ds}\left(  \frac{M^{\mu0}}{u^{0}}\right)  +\text{~}_{o}\Gamma
_{\cdot\nu\alpha}^{\mu\cdot\cdot}M^{\alpha\nu}=0 \label{eq.9a}%
\end{equation}
and
\begin{equation}
M^{\mu\alpha}=u^{\mu}\frac{M^{\alpha0}}{u^{0}}. \label{eq.10a}%
\end{equation}
So, $M^{\mu0}=u^{\mu}\frac{M^{00}}{u^{0}}$ from where it follows that putting%
\begin{equation}
m:=\frac{M^{00}}{(u^{0})^{2}} \label{eq.13}%
\end{equation}
it is%
\begin{equation}
M^{\mu\alpha}=u^{\mu}\frac{M^{\alpha0}}{u^{0}}=mu^{\mu}u^{\alpha}
\label{eq.13a}%
\end{equation}
and we get%
\begin{equation}
\frac{d}{ds}(mu^{\mu})+~_{o}\Gamma_{\cdot\nu\alpha}^{\mu\cdot\cdot}mu^{\nu
}u^{\alpha}=0. \label{eq.14}%
\end{equation}

Now, the acceleration of the probe particle is $\boldsymbol{a}:=\boldsymbol{D}%
_{\boldsymbol{u}}\boldsymbol{u}$ and thus $\boldsymbol{g}(\boldsymbol{a,u}%
)=0$, i.e.,%
\begin{equation}
u_{\mu}\frac{d}{ds}u^{\mu}+~_{o}\Gamma_{\cdot\nu\alpha}^{\mu\cdot\cdot}%
mu^{\nu}u^{\alpha}u_{\mu}=0. \label{eq.15}%
\end{equation}

Multiplying Eq.(\ref{eq.14}) by $u_{\mu}$ and using Eq.(\ref{eq.15}) gives%
\begin{equation}
\frac{d}{ds}m=0 \label{eq.16}%
\end{equation}
and then Eq.(\ref{eq.14}) says that $\gamma$ is a geodesic of the Lorentzian
spacetime structure, i.e.,
\begin{equation}
\boldsymbol{D}_{\gamma_{\ast}}\gamma_{\ast}=0 \label{eq.17}%
\end{equation}
or
\[
\frac{du^{\mu}}{ds}+~_{o}\Gamma_{\cdot\nu\alpha}^{\mu\cdot\cdot}mu^{\nu
}u^{\alpha}=0.
\]

\section{Equation of Motion for a Single-Pole Mass in a de Sitter Lorentzian
Spacetime}

In this section we suppose that the arena where physical events take place is
the de Sitter spacetime structure $M^{dS\ell}$ where fields live and interact,
without never changing the metric $\boldsymbol{g}$, which we emphasize do not
represent any gravitational field here, i.e., we do not suppose here that
$M^{dS\ell}$ is a model of a gravitational field in \textbf{GRT}. As we
learned in Section 4 since de Sitter spacetime has one timelike and three
spacelike Killing vector fields $\mathbf{\Pi}_{\alpha}$ we can construct the
conserved currents $\mathcal{J}_{\mathbf{\Pi}_{\alpha}}=\Theta_{\alpha}^{\mu
}\boldsymbol{\vartheta}_{\mu}$ (see Eq.(\ref{CCURR})) from where we get
$\boldsymbol{D}\bullet\mathbf{\Theta}=0$ with $\mathbf{\Theta}=\mathcal{J}%
_{\mathbf{\Pi}_{\alpha}}\otimes e^{\alpha}=\Theta_{\alpha}^{\kappa
}\boldsymbol{\vartheta}_{\kappa}\otimes e^{\alpha}$.

Now, if we suppose that a probe free single-pole particle (i.e., one for which
its interaction with the remaining fields can be despised) is described by a
covariant conserved tensor $\mathbf{\Theta}$ in a narrow tube like the one
introduced in the previous section, we can derive (using analog notations for
$\gamma$, etc...) an equation\ like Eq.(\ref{eq.9}), i.e.,%

\begin{equation}
\frac{d}{dt}\int\Theta_{o}^{\mu0}\sqrt{-\det\boldsymbol{g}_{o}}dv+\text{~}%
_{o}\Gamma_{\cdot\nu\alpha}^{\mu\cdot\cdot}\int\Theta_{o}^{\alpha\nu}%
\sqrt{-\det\boldsymbol{g}_{o}}dv=0. \label{eq.180}%
\end{equation}
Now, we obtain an equation analogous to Eq.(\ref{eq.9a}) with $M^{\alpha\nu}$
substituted by $N^{\alpha\nu}$ i.e.,
\begin{equation}
\frac{d}{ds}\left(  \frac{N^{\mu0}}{u^{0}}\right)  +\text{~}_{o}\Gamma
_{\cdot\nu\alpha}^{\mu\cdot\cdot}N^{\alpha\nu}=0 \label{eq.18a}%
\end{equation}
with%
\begin{equation}
N^{\alpha\nu}:=u^{0}\int\Theta_{o}^{\mu\alpha}\sqrt{-\det\boldsymbol{g}_{o}%
}dv. \label{eq.18}%
\end{equation}

Putting this time%
\begin{equation}
m:=\frac{N^{00}}{(u^{0})^{2}} \label{eq.19}%
\end{equation}
we get $N^{\mu\alpha}=u^{\mu}\frac{N^{\alpha0}}{u^{0}}=mu^{\mu}u^{\alpha}$ and%

\begin{equation}
\frac{d}{ds}(mu^{\mu})+~_{o}\Gamma_{\cdot\nu\alpha}^{\mu\cdot\cdot}mu^{\nu
}u^{\alpha}=0, \label{eq.20}%
\end{equation}
from where we get\ exactly as in the previous section that $m=const$ and
$\boldsymbol{D}_{\gamma_{\ast}}\gamma_{\ast}=0$.

\begin{conclusion}
Papapetrou method applied to the $M^{dS\ell}$ structure gives for the motion
of a free single pole particle the geodesic equation $\boldsymbol{D}%
_{\gamma_{\ast}}\gamma_{\ast}=0$. Moreover, \emph{Eq.(\ref{eq.20})} is of
course, different from \emph{Eq.(\ref{k13f}) }contrary to conclusions of
authors of \emph{\ \cite{ps2012}}. It is also to be noted here that in
\emph{\cite{ps2012}} authors inferred correctly that Papapetrou method leads
to an equation that looks like to \emph{Eq.(\ref{eq.180}) }for a single-pole
particle moving in the $M^{dS\ell}$ structure. The equation that looks like
\emph{Eq.(\ref{eq.180})} in\emph{ \cite{ps2012}} is the \emph{Eq.(37)
}there\emph{, }but where in the place of \emph{ }$\Theta_{o}^{\alpha\nu}$ they
used $\Upsilon_{o}^{\alpha\nu}-\frac{1}{4\ell^{2}}\Upsilon_{o}^{\alpha\nu}$,
because they believe to be possible to use a local variation of the fields
that results in $\Lambda_{\nu}^{\mu}=0$.
\end{conclusion}

\appendix{}

\section{Clifford Bundle Formalism}

Let $(M,\boldsymbol{g},D,\tau_{\boldsymbol{g}},\uparrow)$ be an arbitrary
Lorentzian or Riemann-Cartan spacetime structure.\ The quadruple
$(M,\boldsymbol{g},\tau_{\boldsymbol{g}},\uparrow)$ denotes a four-dimensional
time-oriented and space-oriented Lorentzian manifold \cite{rc2007,sw}. This
means that $\boldsymbol{g}\in\sec T_{2}^{0}M$ is a Lorentzian metric of
signature $(1,3)$, $\tau_{\boldsymbol{g}}\in\sec\bigwedge{}^{4}T^{\ast}M$ and
$\uparrow$ is a time-orientation (see details, e.g., in \cite{sw}). Here,
$T^{\ast}M$ [$TM$] is the cotangent [tangent] bundle. $T^{\ast}M=\cup_{x\in
M}T_{x}^{\ast}M$, $TM=\cup_{x\in M}T_{x}M$, and $T_{x}M\simeq T_{x}^{\ast
}M\simeq\mathbb{R}^{1,3}$, where $\mathbb{R}^{1,3}$ is the Minkowski vector
space\footnote{Not to be confused with Minkowski spacetime \cite{rc2007,sw}.}.
$D$ is a metric compatible connection, i.e.\/, $D\boldsymbol{g}=0$. When
$D=\boldsymbol{D}$ is the Levi-Civita connection of $\boldsymbol{g}$ ,
$\mathbf{R}^{\boldsymbol{D}}\neq0$, and $\Theta^{\boldsymbol{D}}=0$,
$\mathbf{R}^{\boldsymbol{D}}$ and $\Theta^{\boldsymbol{D}}$ being respectively
the curvature and torsion tensors of the connection. $D=\boldsymbol{\nabla}$
is is a Riemann-Cartan connection , $\mathbf{R}^{\boldsymbol{\nabla}}\neq0$,
and $\Theta^{\boldsymbol{\nabla}}\neq0$ \footnote{Minkowski spacetime is the
particular case of a Lorentzian spacetime structure for which $M\simeq
\mathbb{R}^{4}$. and the curvature and torsion tensors of the Levi-Civita
connection of Minkowski metric are null. a teleparallel \ spacetime is a
particualr Riemann-Cartan spacetime such that $\mathbf{R}^{\boldsymbol{D}}=0$,
and $\Theta^{\boldsymbol{D}}\neq0$.} Let $\mathtt{g}\in\sec T_{0}^{2}M$ be the
metric of the \textit{cotangent bundle}. The Clifford bundle of differential
forms $\mathcal{C}\ell(M,\mathtt{g})$ is the bundle of algebras, i.e.,
$\mathcal{C}\ell(M,\mathtt{g})=\cup_{x\in M}\mathcal{C}\ell(T_{x}^{\ast
}M,\mathtt{g})$, where $\forall x\in M$, $\mathcal{C}\ell(T_{x}^{\ast
}M,\mathtt{g})=\mathbb{R}_{1,3}$, the so called \emph{spacetime} \emph{algebra
}\cite{rc2007}. Recall also that $\mathcal{C}\ell(M,\mathtt{g})$ is a vector
bundle associated to the \emph{orthonormal frame bundle}, i.e., we
have\footnote{\textrm{Ad}$:$~\textrm{Spin}$_{1,3}^{e}\rightarrow
Aut(\mathbb{R}_{1,3})$\textrm{ }by \textrm{Ad}$_{u}x=uxu^{-1}$.}
$\mathcal{C}\ell(M,\mathtt{g})$ $=P_{\mathrm{SO}_{(1,3)}^{e}}(M)\times
_{\mathrm{Ad}^{\prime}}\mathbb{R}_{1,3}$ \cite{lawmi,mr2004}.\ Also, when
$(M,\boldsymbol{g)}$ is a spin manifold we can show that\footnote{Take notice
that $\mathrm{Ad}:\mathrm{Spin}_{1,3}^{e}\rightarrow\mathrm{Aut}%
(\mathbb{R}_{1,3})~$such that for any $\mathcal{C\in}\mathbb{R}_{1,3}$ it is
$\mathrm{Ad}_{u}\mathcal{C=}u\mathcal{C}u^{-1}$. Since $\mathrm{Ad}%
_{-1}=\mathrm{Ad}_{1}=$ identity, $\mathrm{Ad}$ descends to a representation
of \textrm{SO}$_{1,3}^{e}$ that we denoted by $\mathrm{Ad}^{\prime}.$}
$\mathcal{C}\ell(M,\mathtt{g})$ $=P_{\mathrm{Spin}_{(1,3)}^{e}}(M)\times
_{\mathrm{Ad}}\mathbb{R}_{1,3}$\ For any $x\in M$, $\mathcal{C}\ell
(T_{x}^{\ast}M,\left.  \mathtt{g}\right\vert _{x})$ as a linear space over the
real field $\mathbb{R}$ is isomorphic to the Cartan algebra $\bigwedge
T_{x}^{\ast}M$ of the cotangent space. We have that $\bigwedge{}_{x}^{\ast
}M=\oplus_{k=0}^{4}\bigwedge^{k}T_{x}^{\ast}M$, where $\bigwedge^{k}%
T_{x}^{\ast}M$ is the $\binom{4}{k}$-dimensional space of $k$-forms. Then,
sections of $\mathcal{C}\ell(M,\mathtt{g})$ can be represented as a sum of non
homogeneous differential forms, that will be called Clifford (multiform)
fields. In the Clifford bundle formalism, of course, arbitrary basis can be
used, but in this short review of the main ideas of the Clifford calculus we
use mainly orthonormal basis. Let then $\{\boldsymbol{e}_{\mathbf{\alpha}}\}$
be an orthonormal basis for $TU\subset TM$, i.e., $\mathtt{g}(\boldsymbol{e}%
_{\mathbf{\alpha}},\boldsymbol{e}_{\mathbf{\beta}})=\eta_{\mathbf{\alpha\beta
}}=\mathrm{diag}(1,-1,-1,-1)$. Let $\boldsymbol{\gamma}^{\mathbf{\alpha}}%
\in\sec\bigwedge^{1}T^{\ast}M\hookrightarrow\sec\mathcal{C}\ell(M,\mathtt{g})$
($\mathbf{\alpha}=0,1,2,3$) be such that the set $\{\boldsymbol{\gamma
}^{\mathbf{\alpha}}\}$ is the dual basis of $\{\boldsymbol{e}_{\mathbf{\alpha
}}\}$. Also, $\{\boldsymbol{\gamma}_{\mathbf{\alpha}}\}$ is the reciprocal
basis of $\{\boldsymbol{\gamma}^{\mathbf{\alpha}}\}$, i.e., $\mathtt{g}%
$($\boldsymbol{\gamma}^{\mathbf{\alpha}},\boldsymbol{\gamma}_{\mathbf{\beta}%
})=\delta_{\mathbf{\beta}}^{\mathbf{\alpha}}$ and $\{\boldsymbol{e}%
^{\mathbf{\alpha}}\}$ is the reciprocal basis of $\{\boldsymbol{e}%
_{\mathbf{\alpha}}\}$, i.e., $\boldsymbol{g}(\boldsymbol{e}^{\mathbf{\alpha}%
},\boldsymbol{e}_{\mathbf{\beta}})=\delta_{\mathbf{\beta}}^{\mathbf{\alpha}}$.

\subsection{Clifford Product}

The fundamental \emph{Clifford product} (in what follows to be denoted by
juxtaposition of symbols) is generated by
\begin{equation}
\boldsymbol{\gamma}^{\mathbf{\alpha}}\boldsymbol{\gamma}^{\mathbf{\beta}%
}+\boldsymbol{\gamma}^{\mathbf{\beta}}\boldsymbol{\gamma}^{\mathbf{\alpha}%
}=2\eta^{\mathbf{\alpha\beta}} \label{cl}%
\end{equation}
and if $\mathcal{C}\in\sec\mathcal{C}\ell(M,\mathtt{g})$ we have%

\begin{equation}
\mathcal{C}=s+v_{\mathbf{\alpha}}\boldsymbol{\gamma}^{\mathbf{\alpha}}%
+\frac{1}{2!}f_{\mathbf{\alpha\beta}}\boldsymbol{\gamma}^{\mathbf{\alpha}%
}\boldsymbol{\gamma}^{\mathbf{\beta}}+\frac{1}{3!}t_{\mathbf{\alpha\beta
\gamma}}\boldsymbol{\gamma}^{\mathbf{\alpha}}\boldsymbol{\gamma}%
^{\mathbf{\beta}}\boldsymbol{\gamma}^{\mathbf{\gamma}}+p\boldsymbol{\gamma
}^{5}\;, \label{cl3}%
\end{equation}
where $\tau_{\boldsymbol{g}}=\boldsymbol{\gamma}^{5}=\boldsymbol{\gamma}%
^{0}\boldsymbol{\gamma}^{\mathbf{1}}\boldsymbol{\gamma}^{\mathbf{2}%
}\boldsymbol{\gamma}^{\mathbf{3}}$ is the volume element and $s$,
$v_{\mathbf{\alpha}}$, $f_{\mathbf{\beta}}$, $t_{\mathbf{\alpha\beta\gamma}}$,
$p\in\sec\bigwedge^{0}T^{\ast}M\hookrightarrow\sec\mathcal{C}\ell
(M,\mathtt{g})$.

For $A_{r}\in\sec\bigwedge^{r}T^{\ast}M\hookrightarrow\sec\mathcal{C}%
\ell(M,\mathtt{g}),B_{s}\in\sec\bigwedge^{s}T^{\ast}M\hookrightarrow
\sec\mathcal{C}\!\ell(M,\mathtt{g})$ we define the \emph{exterior product} in
$\mathcal{C}\ell(M,\mathtt{g})$\ ($\forall r,s=0,1,2,3)$ by
\begin{equation}
A_{r}\wedge B_{s}=\langle A_{r}B_{s}\rangle_{r+s}, \label{cl5}%
\end{equation}
where $\langle\;\;\rangle_{k}$ is the component in $\bigwedge^{k}T^{\ast}M$ of
the Clifford field. Of course, $A_{r}\wedge B_{s}=(-1)^{rs}B_{s}\wedge A_{r}$,
and the exterior product is extended by linearity to all sections of
$\mathcal{C}\ell(M,\mathtt{g})$.

Let $A_{r}\in\sec\bigwedge^{r}T^{\ast}M\hookrightarrow\sec\mathcal{C}%
\ell(M,\mathtt{g}),B_{s}\in\sec\bigwedge^{s}T^{\ast}M\hookrightarrow
\sec\mathcal{C}\ell(M,\mathtt{g})$. We define a \emph{scalar product
}in\emph{\ }$\mathcal{C}\!\ell(M,\mathtt{g})$ (denoted by $\cdot$) as follows:

(i) For $a,b\in\sec\bigwedge^{1}T^{\ast}M\hookrightarrow\sec\mathcal{C}%
\ell(M,\mathtt{g}),$%
\begin{equation}
a\cdot b=\frac{1}{2}(ab+ba)=\mathtt{g}(a,b). \label{cl4}%
\end{equation}

(ii) For $A_{r}=a_{1}\wedge\cdots\wedge a_{r},B_{r}=b_{1}\wedge\cdots.\wedge
b_{r}$, $a_{i},b_{j}\in\sec\bigwedge^{1}T^{\ast}M\hookrightarrow
\sec\mathcal{C}\!\ell(M,\mathtt{g})$, $i,j=1,...,r,$
\begin{align}
A_{r}\cdot B_{r}  &  =(a_{1}\wedge\cdots\wedge a_{r})\cdot(b_{1}\wedge
\cdots\wedge b_{r})\nonumber\\
&  =\left\vert
\begin{array}
[c]{lll}%
a_{1}\cdot b_{1} & .... & a_{1}\cdot b_{r}\\
.......... & .... & ..........\\
a_{r}\cdot b_{1} & .... & a_{r}\cdot b_{r}%
\end{array}
\right\vert . \label{CL6}%
\end{align}

We agree that if $r=s=0$, the scalar product is simply the ordinary product in
the real field.

Also, if $r\neq s$, then $A_{r}\cdot B_{s}=0$. Finally, the scalar product is
extended by linearity for all sections of $\mathcal{C}\!\ell(M,\mathtt{g})$.

For $r\leq s$, $A_{r}=a_{1}\wedge\cdots\wedge a_{r}$, $B_{s}=b_{1}\wedge
\cdots\wedge b_{s\text{ }}$, we define the \textit{left contraction}
$\lrcorner:(A_{r},B_{s})\mapsto A_{r}\underset{\boldsymbol{g}}{\lrcorner}%
B_{s}$ by
\begin{equation}
A_{r}\underset{\boldsymbol{g}}{\lrcorner}B_{s}=%
{\displaystyle\sum\limits_{i_{1}\,<...\,<i_{r}}}
\epsilon^{i_{1}...i_{s}}(a_{1}\wedge\cdots\wedge a_{r})\cdot(b_{_{i_{1}}%
}\wedge...\wedge b_{i_{r}})^{\sim}b_{i_{r}+1}\wedge\cdots\wedge b_{i_{s}}
\label{CL7}%
\end{equation}
where $\sim$ is the reverse mapping (\emph{reversion}) defined by
$\symbol{126}:\sec\mathcal{C}\ell(M,\mathtt{g})\rightarrow\sec\mathcal{C}%
\ell(M,\mathtt{g})$. For any $X=%
{\displaystyle\bigoplus\nolimits_{p=0}^{4}}
X_{p}$, $X_{p}\in\sec%
{\displaystyle\bigwedge\nolimits^{p}}
T^{\ast}M\hookrightarrow\sec\mathcal{C}\ell(M,\mathtt{g})$,
\begin{equation}
\tilde{X}=%
{\displaystyle\sum\limits_{p=0}^{4}}
\text{ }\tilde{X}_{p}=%
{\displaystyle\sum\limits_{p=0}^{4}}
(-1)^{\frac{1}{2}k(k-1)}X_{p}. \label{cl8}%
\end{equation}
We agree that for $\alpha,\beta\in\sec\bigwedge^{0}T^{\ast}M$ the contraction
is the ordinary (pointwise) product in the real field and that if $\alpha
\in\sec\bigwedge^{0}T^{\ast}M$, $X_{r}\in\sec\bigwedge^{r}T^{\ast}M,Y_{s}%
\in\sec\bigwedge^{s}T^{\ast}M\hookrightarrow\sec\mathcal{C}\ell(M,\mathtt{g})$
then $(\alpha X_{r})\lrcorner B_{s}=X_{r}\lrcorner(\alpha Y_{s})$. Left
contraction is extended by linearity to all pairs of sections of
$\mathcal{C}\ell(M,\mathtt{g})$, i.e., for $X,Y\in\sec\mathcal{C}%
\ell(M,\mathtt{g})$%
\begin{equation}
X\underset{\boldsymbol{g}}{\lrcorner}Y=\sum_{r,s}\langle X\rangle
_{r}\underset{\boldsymbol{g}}{\lrcorner}\langle Y\rangle_{s},~~~~r\leq s.
\label{cl9}%
\end{equation}

It is also necessary to introduce the operator of \emph{right contraction}
denoted by $\underset{\boldsymbol{g}}{\llcorner}$ . The definition is obtained
from the one presenting the left contraction with the imposition that $r\geq
s$ and taking into account that now if $A_{r}\in\sec\bigwedge^{r}T^{\ast}M,$
$B_{s}\in\sec\bigwedge^{s}T^{\ast}M$ then $A_{r}\underset{\boldsymbol{g}%
}{\llcorner}(\alpha B_{s})=(\alpha A_{r})\underset{\boldsymbol{g}}{\llcorner
}B_{s}$. See also the third formula in Eq.(\ref{cl10}).

The main formulas used in this paper can be obtained from the following ones
\begin{align}
a\mathcal{B}_{s}  &  =a\lrcorner\mathcal{B}_{s}+a\wedge\mathcal{B}%
_{s},\;\;\mathcal{B}_{s}a=\mathcal{B}_{s}\underset{\boldsymbol{g}}{\llcorner
}a+\mathcal{B}_{s}\wedge a,\nonumber\\
a\underset{\boldsymbol{g}}{\lrcorner}\mathcal{B}_{s}  &  =\frac{1}%
{2}(a\mathcal{B}_{s}-(-1)^{s}\mathcal{B}_{s}a),\nonumber\\
\mathcal{A}_{r}\underset{\boldsymbol{g}}{\lrcorner}\mathcal{B}_{s}  &
=(-1)^{r(s-r)}\mathcal{B}_{s}\underset{\boldsymbol{g}}{\llcorner}%
\mathcal{A}_{r},\nonumber\\
a\wedge\mathcal{B}_{s}  &  =\frac{1}{2}(a\mathcal{B}_{s}+(-1)^{s}%
\mathcal{B}_{s}a),\nonumber\\
\mathcal{A}_{r}\mathcal{B}_{s}  &  =\langle\mathcal{A}_{r}\mathcal{B}%
_{s}\rangle_{|r-s|}+\langle\mathcal{A}_{r}\mathcal{B}_{s}\rangle
_{|r-s|+2}+...+\langle\mathcal{A}_{r}\mathcal{B}_{s}\rangle_{|r+s|}\nonumber\\
&  =\sum\limits_{k=0}^{m}\langle\mathcal{A}_{r}\mathcal{B}_{s}\rangle
_{|r-s|+2k}\text{ }\nonumber\\
\mathcal{A}_{r}\cdot\mathcal{B}_{r}  &  =\mathcal{B}_{r}\cdot\mathcal{A}%
_{r}=\widetilde{\mathcal{A}}_{r}\underset{\boldsymbol{g}}{\lrcorner
}\mathcal{B}_{r}=\mathcal{A}_{r}\underset{\boldsymbol{g}}{\llcorner
}\widetilde{\mathcal{B}}_{r}=\langle\widetilde{\mathcal{A}}_{r}\mathcal{B}%
_{r}\rangle_{0}=\langle\mathcal{A}_{r}\widetilde{\mathcal{B}}_{r}\rangle_{0}.
\label{cl10}%
\end{align}
Two other important identities used in the main text are:%

\begin{align}
a\underset{\boldsymbol{g}}{\lrcorner}(\mathcal{X}\wedge\mathcal{Y})  &
=(a\underset{\boldsymbol{g}}{\lrcorner}\mathcal{X})\wedge\mathcal{Y}%
+\mathcal{\hat{X}}\wedge(a\underset{\boldsymbol{g}}{\lrcorner}\mathcal{Y}%
),\label{T54}\\
\mathcal{X}\underset{\boldsymbol{g}}{\lrcorner}(\mathcal{Y}%
\underset{\boldsymbol{g}}{\lrcorner}\mathcal{Z})  &  =(\mathcal{X}%
\wedge\mathcal{Y})\underset{\boldsymbol{g}}{\lrcorner}\mathcal{Z}, \label{T55}%
\end{align}
for any $a\in\sec%
{\displaystyle\bigwedge\nolimits^{1}}
T^{\ast}M\hookrightarrow\mathcal{C}\ell(M,\mathtt{g})$ and $\mathcal{X}%
,\mathcal{Y},\mathcal{Z}\in\sec%
{\displaystyle\bigwedge}
T^{\ast}M\hookrightarrow\mathcal{C}\ell(M,\mathtt{g})$.

\subsubsection{Hodge Star Operator}

Let $\underset{\boldsymbol{g}}{\star}$ be the Hodge star operator, i.e., the
mapping $\underset{\boldsymbol{g}}{\star}:%
{\displaystyle\bigwedge\nolimits^{k}}
T^{\ast}M\rightarrow%
{\displaystyle\bigwedge\nolimits^{4-k}}
T^{\ast}M,$ $A_{k}\mapsto\underset{\boldsymbol{g}}{\star}A_{k}$. For $A_{k}%
\in\sec\bigwedge^{k}T^{\ast}M\hookrightarrow\sec\mathcal{C}\!\ell
(M,\mathtt{g})$ we have
\begin{equation}
\lbrack B_{k}\cdot A_{k}]\tau_{\mathtt{g}}=B_{k}\wedge\underset{\boldsymbol{g}%
}{\star}A_{k},\forall B_{k}\in\sec\bigwedge\nolimits^{k}T^{\ast}%
M\hookrightarrow\sec\mathcal{C}\ell(M,\mathtt{g}). \label{11a}%
\end{equation}
where $\tau_{\mathtt{g}}=\theta^{\mathbf{5}}\in\sec\bigwedge^{4}T^{\ast
}M\hookrightarrow\sec\mathcal{C}\!\ell(M,\mathtt{g})$ is a \emph{standard}
volume element. We have,
\begin{equation}
\underset{\boldsymbol{g}}{\star}A_{k}=\widetilde{A}_{k}\tau_{\mathtt{g}%
}=\widetilde{A}_{k}\underset{\boldsymbol{g}}{\lrcorner}\tau_{\mathtt{g}}.
\label{11b}%
\end{equation}
where as noted before, in this paper $\widetilde{\mathcal{A}}_{k}$ denotes the
\textit{reverse} of $\mathcal{A}_{k}$. Eq.(\ref{11b}) permits calculation of
Hodge duals very easily in an orthonormal basis for which $\tau_{\mathtt{g}%
}=\boldsymbol{\gamma}^{\mathbf{5}}$. Let $\{\vartheta^{\alpha}\}$ be the dual
basis of $\{e_{\alpha}\}$ (i.e., it is a basis for $T^{\ast}U\equiv
\bigwedge\nolimits^{1}T^{\ast}U$) which is either an \textit{orthonormal} or a
\textit{coordinate basis}. Then writing \texttt{g}$(\vartheta^{\alpha
},\vartheta^{\beta})=g^{\alpha\beta}$, with $g^{\alpha\beta}g_{\alpha\rho
}=\delta_{\rho}^{\beta}$, and $\vartheta^{\mu_{1}...\mu_{p}}=\vartheta
^{\mu_{1}}\wedge\cdots\wedge\vartheta^{\mu_{p}}$, $\vartheta^{\nu_{p+1}%
...\nu_{n}}=\vartheta^{\nu_{p+1}}\wedge\cdots\wedge\vartheta^{\nu_{n}}$ we
have from Eq.(\ref{11b})
\begin{equation}
\underset{\boldsymbol{g}}{\star}\vartheta^{\mu_{1}...\mu_{p}}=\frac{1}%
{(n-p)!}\sqrt{\left\vert \det\boldsymbol{g}\right\vert }g^{\mu_{1}\nu_{1}%
}\cdots g^{\mu_{p}\nu_{p}}\epsilon_{\nu_{1}...\nu_{n}}\vartheta^{\nu
_{p+1}...\nu_{n}}. \label{hodge dual}%
\end{equation}
where $\det\boldsymbol{g}$ denotes the determinant of the matrix with entries
$g_{\alpha\beta}=\boldsymbol{g}(e_{\alpha},e_{\beta})$, i.e., $\det
\boldsymbol{g}=\det[g_{\alpha\beta}].$ We also define the inverse
$\underset{\boldsymbol{g}}{\star}^{-1}$ of the Hodge dual operator, such that
\ $\underset{\boldsymbol{g}}{\star}^{-1}\underset{\boldsymbol{g}}{\star
}=\underset{\boldsymbol{g}}{\star}\underset{\boldsymbol{g}}{\star}^{-1}=1$. It
is given by:
\begin{align}
\underset{\boldsymbol{g}}{\star}^{-1}  &  :\sec%
{\displaystyle\bigwedge\nolimits^{r}}
T^{\ast}M\rightarrow\sec%
{\displaystyle\bigwedge\nolimits^{n-r}}
T^{\ast}M,\nonumber\\
\underset{\boldsymbol{g}}{\star}^{-1}A_{r}  &  =(-1)^{r(n-r)}\mathrm{sgn}%
\det\boldsymbol{g}\underset{\boldsymbol{g}}{\star}A_{r}, \label{h1}%
\end{align}
where \textrm{sgn }$\det\boldsymbol{g}=\det\boldsymbol{g}/|\det\boldsymbol{g}%
|$ denotes the sign of the determinant of $\boldsymbol{g}$.

Some useful identities (used in the text) involving the Hodge star operator,
the exterior product and contractions are:%

\begin{equation}%
\begin{array}
[c]{l}%
A_{r}\wedge\underset{\boldsymbol{g}}{\star}B_{s}=B_{s}\wedge
\underset{\boldsymbol{g}}{\star}A_{r};\quad r=s\\
A_{r}\cdot\underset{\boldsymbol{g}}{\star}B_{s}=B_{s}\cdot
\underset{\boldsymbol{g}}{\star}A_{r};\quad r+s=n\\
A_{r}\wedge\underset{\boldsymbol{g}}{\star}B_{s}=(-1)^{r(s-1)}%
\underset{\boldsymbol{g}}{\star}(\tilde{A}_{r}\underset{\boldsymbol{g}%
}{\lrcorner}B_{s});\quad r\leq s\\
A_{r}\underset{\boldsymbol{g}}{\lrcorner}\underset{\boldsymbol{g}}{\star}%
B_{s}=(-1)^{rs}\underset{\boldsymbol{g}}{\star}(\tilde{A}_{r}\wedge
B_{s});\quad r+s\leq n\\
\underset{\boldsymbol{g}}{\star}\tau_{g}=\mathrm{sign}\text{ }\mathbf{g}%
;~~~~~\underset{\boldsymbol{g}}{\star}1=\tau_{g}.
\end{array}
\label{440new}%
\end{equation}

\subsubsection{Dirac Operator Associated to a Levi-Civita Connection
$\boldsymbol{D}$}

Let $d$ and $\delta$ be respectively the differential and Hodge codifferential
operators acting on sections of $\mathcal{C}\!\ell(M,\mathtt{g})$. If
$A_{p}\in\sec\bigwedge^{p}T^{\ast}M\hookrightarrow\sec\mathcal{C}%
\!\ell(M,\mathtt{g})$, then $\underset{\boldsymbol{g}}{\delta}A_{p}%
=(-1)^{p}\underset{\boldsymbol{g}}{\star}^{-1}d\underset{\boldsymbol{g}%
}{\star}A_{p}$.

The Dirac operator acting on sections of $\mathcal{C}\!\ell(M,\mathtt{g})$
associated with the metric compatible connection $D$ is the invariant first
order differential operator
\begin{equation}
\boldsymbol{\partial}=\vartheta^{\alpha}\boldsymbol{D}_{e_{\alpha}},
\label{12}%
\end{equation}
where $\{e_{\alpha}\}$ is an arbitrary (coordinate or orthonormal)\emph{
basis} for $TU\subset TM$ and $\{\vartheta^{\mathbf{\alpha}}\}$ is a basis for
$T^{\ast}U\subset T^{\ast}M$ dual to the basis $\{e_{\mathbf{\alpha}}\}$,
i.e., $\vartheta^{\beta}(e_{\alpha})=\delta_{\beta}^{\alpha}$, $\alpha
,\beta=0,1,2,3$. The reciprocal basis of $\{\vartheta^{\alpha}\}$ is denoted
$\{\vartheta_{\alpha}\}$ and we have $\vartheta_{\alpha}\cdot\vartheta_{\beta
}=g_{\alpha\beta}$. Also, when $\{e_{\alpha}=\partial_{\alpha}\}$ and
$\{\vartheta^{\alpha}=dx^{\alpha}\}$ we have%
\begin{equation}
\boldsymbol{D}_{\partial_{\mathbf{\alpha}}}\partial_{\beta}=\mathbf{\Gamma
}_{\cdot\alpha\beta}^{\varepsilon\cdot\cdot}\partial_{\mathbf{\beta}%
},~~~~\boldsymbol{D}_{\partial_{\mathbf{\alpha}}}dx^{\mathbf{\beta}%
}=-\mathbf{\Gamma}_{\cdot\alpha\mathbf{\varepsilon}}^{\beta\cdot\cdot
}dx^{\varepsilon}%
\end{equation}
and when $\{e_{\alpha}=\boldsymbol{e}_{\alpha}\}$, $\{\vartheta^{\alpha
}=\boldsymbol{\gamma}^{\alpha}\}$ are orthonormal basis we have
\begin{equation}
\boldsymbol{D}_{e_{\mathbf{\alpha}}}e_{\mathbf{\beta}}=\omega_{\cdot
\alpha\mathbf{\beta}}^{\mathbf{\lambda\cdot\cdot}}e_{\mathbf{\lambda}%
}~~~~,\boldsymbol{D}_{e_{\mathbf{\alpha}}}\boldsymbol{\gamma}^{\mathbf{\beta}%
}=-\omega_{\cdot\alpha\mathbf{\lambda}}^{\mathbf{\beta\cdot\cdot}%
}\boldsymbol{\gamma}^{\lambda} \label{12n}%
\end{equation}
We define the connection\footnote{Also called \textquotedblleft spin
connection 1-forms\textquotedblright.} $1$-forms in the gauge defined by
$\{\boldsymbol{\gamma}^{\mathbf{\alpha}}\}$ as%
\begin{equation}
\omega_{\cdot\mathbf{\beta}}^{\mathbf{\alpha\cdot}}:=\omega_{\cdot
\mathbf{\lambda\beta}}^{\mathbf{\alpha\cdot\cdot}}\boldsymbol{\gamma}%
^{\lambda}. \label{12na}%
\end{equation}
Moreover, we write for an arbitrary tensor field $Y=Y_{\nu_{1}...\nu_{s}}%
^{\mu_{1}...\mu_{r}}\vartheta^{\nu_{1}}\otimes...\otimes\vartheta^{\nu_{s}%
}\otimes\partial_{\mu_{1}}\otimes...\otimes\partial_{\mu_{r}}$ in a coordinate
basis%
\begin{equation}
\boldsymbol{D}_{\mathbf{e}_{\mathbf{\alpha}}}Y:=(\boldsymbol{D}_{\alpha}%
Y_{\nu_{1}...\nu_{s}}^{\mu_{1}...\mu_{r}})\vartheta^{\nu_{1}}\otimes
...\otimes\vartheta^{\nu_{s}}\otimes\partial_{\mu_{1}}\otimes...\otimes
\partial_{\mu_{r}} \label{cd}%
\end{equation}
and also when we write $Y=Y_{\nu_{1}...\nu_{s}}^{\mu_{1}...\mu_{r}%
}\boldsymbol{\gamma}^{\nu_{1}}\otimes...\otimes\boldsymbol{\gamma}^{\nu_{s}%
}\otimes\boldsymbol{e}_{\mu_{1}}\otimes...\otimes\boldsymbol{e}_{\mu_{r}}$ we
also write%
\begin{equation}
\boldsymbol{D}_{\mathbf{e}_{\mathbf{\alpha}}}Y:=(\boldsymbol{D}_{\alpha}%
Y_{\nu_{1}...\nu_{s}}^{\mu_{1}...\mu_{r}})\boldsymbol{\gamma}^{\nu_{1}}%
\otimes...\otimes\boldsymbol{\gamma}^{\nu_{s}}\otimes\boldsymbol{e}_{\mu_{1}%
}\otimes...\otimes\boldsymbol{e}_{\mu_{r}}%
\end{equation}
so please pay attention when reading a particular formula to certificate\ the
meaning of $\boldsymbol{D}_{\alpha}Y_{\nu_{1}...\nu_{s}}^{\mu_{1}...\mu_{r}}$,
i.e., if we are using in that formula coordinate or orthonormal frames.

We have also the important results (see, e.g., \cite{rc2007}) for the Dirac
operator associated with the Levi-Civita connection $\boldsymbol{D}$ acting on
the sections of the Clifford bundle\footnote{For a general metric compatible
Riemann-Cartan connection the formula in Eq.(\ref{cl13b}) is not valid, we
have a more general relation involving the torsion tensor that will not be
used in this paper. The interested reader may consult \cite{rc2007}.}%

\begin{subequations}
\begin{align}
\boldsymbol{\partial}A_{p}  &  =\boldsymbol{\partial}\wedge A_{p\,}%
+\,\boldsymbol{\partial}\lrcorner A_{p}=dA_{p}-\underset{\boldsymbol{g}%
}{\delta}A_{p},\label{cl13a}\\
\boldsymbol{\partial}\wedge A_{p}  &  =dA_{p},~~~~~\,\boldsymbol{\partial
}\underset{\boldsymbol{g}}{\lrcorner}A_{p}=-\underset{\boldsymbol{g}}{\delta
}A_{p}. \label{cl13b}%
\end{align}
We shall need the following identity valid for any $a,b\in\sec\bigwedge
^{1}T^{\ast}M\hookrightarrow\mathcal{C}\ell(M,\mathtt{g}),$%
\end{subequations}
\begin{equation}
\boldsymbol{\partial}{(}a\cdot b)={(a\cdot}\boldsymbol{\partial}%
)b+(b\cdot\boldsymbol{\partial}{)}a+a\lrcorner(\boldsymbol{\partial}{\wedge
}b)+b\lrcorner(\boldsymbol{\partial}{\wedge}a). \label{14A}%
\end{equation}

\subsection{Covariant D' Alembertian, Hodge D'Alembertian and Ricci Operators}

The square of the Dirac operator $\Diamond=\boldsymbol{\partial}^{2}$ is
called Hodge D'Alembertian and we have the following noticeable formulas:%
\begin{equation}
\boldsymbol{\partial}^{2}=-d\underset{\boldsymbol{g}}{\delta}%
-\underset{\boldsymbol{g}}{\delta}d, \label{13bis}%
\end{equation}
and
\begin{equation}
\boldsymbol{\partial}^{2}A_{p}=\boldsymbol{\partial}{\cdot\boldsymbol{\partial
~}}A_{p}+\boldsymbol{\partial}{\wedge\boldsymbol{\partial~}}A_{p} \label{14}%
\end{equation}
where $\boldsymbol{\partial}{\cdot\boldsymbol{\partial}}$ is called the
\textit{covariant D'Alembertian} and $\boldsymbol{\partial}{\wedge
\boldsymbol{\partial}}$ is called the Ricci operator\footnote{For more details
concerning the square of Dirac (and spin-Dirac operators) on a general
Riemann-Cartan spacetime, see \cite{nra}.} If $A_{p}=\frac{1}{p!}A_{\mu
_{1}...\mu_{p}}\vartheta^{\mu_{_{1}}}\wedge\cdots\wedge\vartheta^{\mu_{p}}$,
we have
\begin{equation}
\boldsymbol{\partial}{\cdot}\boldsymbol{\partial~}A_{p}=g^{\alpha\beta
}(D_{\partial_{\alpha}}D_{\partial_{\beta}}-\Gamma_{\cdot\alpha\beta}%
^{\rho\cdot\cdot}D_{\partial_{\rho}})A_{p}=\frac{1}{p!}g^{\alpha\beta
}D_{\alpha}D_{\beta}A_{\alpha_{1}\ldots\alpha_{p}}\vartheta^{\alpha_{1}}%
\wedge\cdots\wedge\vartheta^{\alpha_{p}}, \label{15}%
\end{equation}
Also for $\boldsymbol{\partial}{\wedge}\boldsymbol{\partial}$ in an arbitrary
basis (coordinate or orthonormal)%
\begin{equation}
\boldsymbol{\partial}{\wedge}\boldsymbol{\partial~}A_{p}=\frac{1}{2}%
\vartheta^{\alpha}\wedge\vartheta^{\beta}([\boldsymbol{D}_{e_{\alpha}%
},\boldsymbol{D}_{e_{\beta}}]-(\Gamma_{\cdot\alpha\beta}^{\rho\cdot\cdot
}-\Gamma_{\cdot\beta\alpha}^{\rho\cdot\cdot})D_{e_{\rho}})A_{p}. \label{16}%
\end{equation}

In particular we have \cite{rc2007}%
\begin{equation}
\boldsymbol{\partial}{\wedge}\boldsymbol{\partial~}\vartheta^{\mu}%
=\mathcal{R}^{\mu}, \label{166}%
\end{equation}
where $\mathcal{R}^{\mu}=R_{\nu}^{\mu}\vartheta^{\nu}\in\sec\bigwedge
^{1}T^{\ast}M\hookrightarrow\sec\mathcal{C}\ell(M,\mathtt{g})$ are the Ricci
$1$-form fields, such that if $R_{\cdot\nu\sigma\mu}^{\mu\cdot\cdot\cdot}$ are
the components of the Riemann tensor we use the convention that $R_{\nu\sigma
}=R_{\cdot\nu\sigma\mu}^{\mu\cdot\cdot\cdot}$ are the components of the Ricci tensor.

Applying this operator to the 1-forms of the a $1$-form of the basis
$\{\vartheta^{\mu}\}$, we get:
\begin{equation}
\boldsymbol{\partial}\wedge\boldsymbol{\partial~}\vartheta^{\mu}=-\frac{1}%
{2}\text{ }R_{\cdot\rho\alpha\beta}^{\mu\cdot\cdot\cdot}(\vartheta^{\alpha
}\wedge\vartheta^{\beta})\vartheta^{\rho}=\mathcal{R}_{\rho}^{\mu}%
\vartheta_{\rho}.
\end{equation}

$\boldsymbol{\partial}\wedge\boldsymbol{\partial}$ is an extensor operator,
i.e., for $A\in\sec\bigwedge^{1}T^{\ast}M\hookrightarrow\sec\mathcal{C}%
\ell(M,\mathtt{g})$ it is%
\begin{equation}
\boldsymbol{\partial}\wedge\boldsymbol{\partial~}A=A_{\mu}\boldsymbol{\partial
}\wedge\boldsymbol{\partial}\vartheta^{\mu}.
\end{equation}

\begin{remark}
We remark that covariant Dirac spinor fields used in almost all Physics texts
books and research papers can be represented as certain equivalence classes of
even sections of the Clifford bundle $\mathcal{C}\ell(M,\mathtt{g})$. These
objects are now called Dirac-Hestenes spinor fields \emph{(}\textbf{DHSF}%
\emph{)} and a thoughtful theory describing them can be found in
\emph{\cite{r2004,mr2004,rc2007}}. Moreover, in \emph{\cite{lrw2015}} using
the concept of \textbf{DHSF} a new approach is given to the concept of Lie
derivative for spinor fields, which does not seems to have the objections of
previous approaches to the subject. Of course, a meaningful definition of Lie
derivative for spinor fields is a necessary condition for a formulation of
conservation laws involving bosons and fermion fields in interaction in
arbitrary manifolds. We will present the complete Lagrangian density involving
the gravitation field (interpreted as fields in the Faraday sense and
described by cotetrad fields\emph{)}, the electromagnetic and the
\textbf{DHSF} \ living in a parallelizable manifold and its variation in
another publication.
\end{remark}

\section{Lie Derivatives and Variations}

In modern field theory the physical fields are tensor and spinor fields living
on a structure $(M,\boldsymbol{g},\tau_{\boldsymbol{g}},\uparrow)$ and
interacting among themselves. Note that at this point we did not introduce any
connection in our game, since according to our view (see, e.g., Chapter 11 of
\cite{rc2007}) the introduction of a particular connection to describe Physics
is only a question of convenience. For the objective of this paper we shall
consider two structures (already introduced in the main text), a Lorentzian
spacetime $M^{dS\ell}=(M,\boldsymbol{g},\boldsymbol{D},\tau_{\boldsymbol{g}%
},\uparrow)$ where $\boldsymbol{D}$ is the Levi-Civita connection of
$\boldsymbol{g}$ and a teleparallel spacetime $M^{dSTP}=(M,\boldsymbol{g}%
,\nabla,\tau_{\boldsymbol{g}},\uparrow)$ where $\nabla$ is a metric compatible
teleparallel connection. Minkowski spacetime structure will be denoted by
$(M,\boldsymbol{\eta},\mathrm{D},\tau_{\boldsymbol{\eta}},\uparrow)$. The
equations of motion are derived from a variational principle once a given
Lagrangian density is postulated for the interacting fields of the theory.

As well known, diffeomorphism invariance is a crucial ingredient of any
physical theory. This means that if a physical phenomenon is described by
fields, say, $\phi_{1},....,\phi_{n}$ (defined in $\mathcal{U}\subset M$)
satisfying equations of motion of the theory (with appropriated initial and
boundary conditions)\ then if $\mathrm{h}:M\mapsto M$ is a diffeomorphism then
the fields $\mathrm{h}^{\ast}\phi_{1},....,\mathrm{h}^{\ast}\phi_{N}$
(where$,\mathrm{h}^{\ast}$ is the pullback mapping) describe the same physical
phenomenon\footnote{Of course, the fields $\mathrm{h}^{\ast}\phi
_{1},....,\mathrm{h}^{\ast}\phi_{N}$ must satisfy deformed initial conditions
and deformed boundary conditions.} in $\mathrm{h}\mathcal{U}$.

Suppose that fields $\phi_{1},....,$(in what follows called simply matter
fields\footnote{In truth, by matter fields we understand fields of two kinds,
fermion fields (electrons, neutrinos, quarks) and boson fields
(electromagnetic, gravitational, weak and strong fields).}) are arbitrary
differential forms. Their Lagrangian density will here be defined as the
functional mapping\footnote{A rigorous formulation needs the introduction of
jet bundles (see, e.g., \cite{F1}). We will not need such sophistication for
the goals of this paper.}
\begin{equation}
\mathcal{L}_{m}:(\phi_{1},...,\phi_{N},d\phi_{1},....,d\phi_{N})\mapsto
\mathcal{L}_{m}(\boldsymbol{\phi},d\boldsymbol{\phi})\in\sec%
{\textstyle\bigwedge\nolimits^{4}}
TM \label{A1}%
\end{equation}
where $\mathcal{L}_{m}(\phi,d\phi)$ is here supposed to be constructed using
the Hodge star operator $\underset{\boldsymbol{g}}{\star}$. The action of the
system is%
\begin{equation}
\mathcal{A}=\int\nolimits_{\mathcal{U}}\mathcal{L}_{m}(\boldsymbol{\phi
},d\boldsymbol{\phi}). \label{A2}%
\end{equation}

Choose a chart of $M$ covering $\mathcal{U}$ and $\mathrm{h}\mathcal{U}$ with
coordinate functions $\{\boldsymbol{x}^{\mu}\}$. Then under an infinitesimal
mapping $\mathrm{h}_{\varepsilon}:M\mapsto M$ , $x\mapsto x^{\prime
}=\mathrm{h}_{\varepsilon}(x)$ generated by a one parameter group of
diffeomorphisms associated to the vector field $\boldsymbol{\xi}\in\sec TM$ we
have (with $\mathrm{h}_{\varepsilon}^{\mu}$ the coordinate representative of
the mapping \textrm{h}$_{\varepsilon}$)
\begin{equation}
x^{\mu}=\boldsymbol{x}^{\mu}(x)\mapsto x^{\prime\mu}=\boldsymbol{x}^{\mu
}(\mathrm{h}_{\varepsilon}(x))=\mathrm{h}_{\varepsilon}^{\mu}(x^{\alpha
})=x^{\mu}+\varepsilon\xi^{\mu},~~~\left\vert \varepsilon\right\vert <<1
\label{A3}%
\end{equation}
In Physics textbooks given an infinitesimal diffeormorphism \textrm{h}%
$_{\varepsilon}$\textrm{ }several different kinds of \emph{variations }(for
each one of the fields $\phi_{i}$)\textrm{ }are defined.

Let $\phi$ be one of the fields\ $\phi_{1},...,\phi_{N}$ and recall that the
Lie derivative of $\phi$ in the direction of the vector field $\mathbf{\xi}$
is given by
\begin{equation}
\pounds _{\boldsymbol{\xi}}\phi=\lim_{\varepsilon\rightarrow0}\frac
{\mathrm{h}_{\varepsilon}^{\ast}\circ\phi\circ\mathrm{h}_{\varepsilon}-\phi
}{\varepsilon} \label{A5}%
\end{equation}
As an example, take $\phi$ as $1$-form. Then, in the chart introduced above
using the definition of the pullback%
\begin{equation}
\mathrm{h}_{\varepsilon}^{\ast}\phi_{\mu}~(x^{\kappa}):=[\mathrm{h}%
_{\varepsilon}^{\ast}(\phi(\mathrm{h}_{\varepsilon}(x)))]_{\mu} \label{A6}%
\end{equation}
it is
\begin{equation}
\mathrm{h}_{\varepsilon}^{\ast}\phi~(x)=\mathrm{h}_{\varepsilon}^{\ast}%
\phi_{\mu}(x^{\kappa})dx^{\mu}:=\phi_{\kappa}(x^{\prime\kappa}(x^{\kappa
}))\frac{\partial x^{\prime\kappa}}{\partial x^{\mu}}dx^{\mu}. \label{A7}%
\end{equation}
Then, Eq.(\ref{A5}) can be written in components as%

\begin{equation}
\left(  \pounds _{\boldsymbol{\xi}}\phi(x^{\kappa})\right)  _{\mu}%
=\lim_{\varepsilon\rightarrow0}\frac{\mathrm{h}_{\varepsilon}^{\ast}\phi_{\mu
}~(x^{\kappa})-\phi_{\mu}(x^{\kappa})}{\varepsilon} \label{A8}%
\end{equation}

Now, to first order in $\varepsilon$ we have
\begin{equation}
\frac{\partial x^{\prime\kappa}}{\partial x^{\mu}}=\delta_{\mu}^{\kappa
}+\varepsilon\partial_{\mu}\xi^{\kappa} \label{A9}%
\end{equation}
and
\begin{equation}
\phi_{\kappa}(x^{\prime\kappa}(x^{k}))=\phi_{\kappa}(x^{\kappa}+\varepsilon
\xi^{\kappa})=\phi_{\kappa}(x^{\kappa})+\varepsilon\xi^{\alpha}\partial
_{\alpha}\phi_{\kappa}(x^{\kappa}) \label{A10}%
\end{equation}
So,%
\begin{gather}
\mathrm{h}_{\varepsilon}^{\ast}\phi_{\mu~}(x^{\kappa})=\phi_{\kappa}%
(x^{\prime\kappa}(x^{\kappa}))\frac{\partial x^{\prime\kappa}}{\partial
x^{\mu}}\nonumber\\
=\left(  \phi_{\kappa}(x^{\kappa})+\varepsilon\xi^{\alpha}\partial_{\alpha
}\phi_{\kappa}(x^{\kappa})\right)  (\delta_{\mu}^{\kappa}+\varepsilon
\partial_{\mu}\xi^{\kappa})\nonumber\\
=\phi_{\mu}(x^{\kappa})+\varepsilon\partial_{\mu}\xi^{\kappa}\phi_{\kappa
}(x^{\kappa})+\varepsilon\xi^{\alpha}\partial_{\alpha}\phi_{\mu}(x^{\kappa})
\label{A11}%
\end{gather}

Then,%
\begin{equation}
\left(  \pounds _{\boldsymbol{\xi}}\phi(x^{\kappa})\right)  _{\mu}%
=\partial_{\mu}\xi^{\kappa}\phi_{\kappa}(x^{\kappa})+\xi^{\alpha}%
\partial_{\alpha}\phi_{\mu}(x^{\kappa}) \label{A11a}%
\end{equation}

Now, we define following Physics textbooks the \emph{horizontal variation}%
\footnote{In \cite{rc2007} the horizontal variation is denoted by
$\boldsymbol{\delta}_{\mathbf{h}}$, where a vertical variation denoted by
$\boldsymbol{\delta}_{\mathbf{v}}$ (associated with gauge transformations) is
also introduced. Moreover, let us recall that $\delta^{0}${\footnotesize \ }%
has been usely extensively after a famous paper by L. Rosenfeld
\cite{rosenfeld}, but appears also for the best of our knowledge in Section 23
of Pauli's book \cite{pauli} on Relativity theory.} by%
\begin{equation}
\boldsymbol{\delta}^{0}\phi:=-\pounds _{\boldsymbol{\xi}}\phi. \label{A4}%
\end{equation}

This definition (with the negative sign) is used by physicists because they
usually work only with the components of the fields and diffeomorphism
invariance is interpreted as invariance under choice of coordinates. Then,
they interpret Eq.(\ref{A3}) as a coordinate transformation between two charts
whose intersection of domains cover the regions $\mathcal{U}$ and
\textrm{h}$\mathcal{U}$ of interest\ with \emph{coordinate functions}
$\{\boldsymbol{x}^{\mu}\}$ and $\{\boldsymbol{x}^{\prime\mu}\}$ such that
\begin{equation}
\boldsymbol{x}^{\prime\mu}:=\boldsymbol{x}^{\mu}\circ\mathrm{h}_{\varepsilon}
\label{A4a}%
\end{equation}
and then
\begin{equation}
\boldsymbol{x}^{\prime\mu}(x)=x^{\mu}+\varepsilon\xi^{\mu} \label{A4aa}%
\end{equation}
The field $\phi$ at $x\in\mathcal{U\subset}M$ has the representations%
\[
\phi(x)=\phi_{\mu}^{\prime}(x^{\prime\iota})dx^{\prime\mu}=\phi_{\kappa
}(x^{\iota})dx^{\kappa}%
\]
and in first order in $\varepsilon$ it is%
\begin{align}
\phi_{\mu}^{\prime}(x^{\prime\iota})  &  =\frac{\partial x^{\kappa}}{\partial
x^{\prime\mu}}\phi_{\kappa}(x^{\iota})\label{A12}\\
&  =\phi_{\mu}(x^{\iota})-\varepsilon\partial_{\mu}\xi^{\kappa}\phi_{\kappa
}(x^{\iota})\nonumber
\end{align}
and on the other hand since
\begin{equation}
\phi_{\mu}^{\prime}(x^{\prime\iota})=\phi_{\mu}^{\prime}(x^{\iota}%
+\varepsilon\xi^{\iota})=\phi_{\mu}^{\prime}(x^{\iota})+\varepsilon\xi
^{\kappa}\partial_{\kappa}\phi_{\mu}^{\prime}(x^{\iota}). \label{A13}%
\end{equation}
we have in first order in $\varepsilon$ that
\begin{equation}
\phi_{\mu}^{\prime}(x^{\iota})=\phi_{\mu}(x^{\iota})-\varepsilon\partial_{\mu
}\xi^{\kappa}\phi_{\kappa}(x^{\iota})-\varepsilon\xi^{\kappa}\partial_{\kappa
}\phi_{\mu}(x^{\iota}). \label{A14}%
\end{equation}
from where we get
\begin{equation}
\boldsymbol{\delta}^{0}\phi_{\mu}(x)=\lim_{\varepsilon\rightarrow0}\frac
{\phi_{\mu}^{\prime}(x^{\kappa})-\phi_{\mu}(x^{\kappa})}{\varepsilon}=-\left(
\pounds _{\boldsymbol{\xi}}\phi(x^{\kappa})\right)  _{\mu}. \label{A15}%
\end{equation}

\begin{remark}
The above calculating can be done in a while recalling Cartan's\ `magical'
formula, which with $\xi:=\boldsymbol{g}(\boldsymbol{\xi,~})$ reads:
\begin{equation}
\pounds _{\boldsymbol{\xi}}\phi=\xi\lrcorner d\phi+d(\xi\lrcorner\phi)
\label{A16}%
\end{equation}
In components we have%

\begin{align}
\left(  \xi\lrcorner d\phi\right)  _{\alpha}  &  =\xi^{\mu}\partial_{\mu}%
\phi_{\alpha}-\xi^{\mu}\partial_{\alpha}\phi_{\mu},\nonumber\\
\left(  d(\xi\lrcorner\phi)\right)  _{\alpha}  &  =\partial_{\alpha}\xi^{\mu
}\phi_{\mu}+\xi^{\mu}\partial_{\alpha}\phi_{\mu} \label{A17}%
\end{align}
from where the substituting these results in \emph{Eq.(\ref{A16})},
\emph{Eq.(\ref{A11a})} follows immediately.
\end{remark}

\begin{remark}
If we have chosen the coordinate functions $\{\boldsymbol{x}^{\mu}\}$ and
$\{\boldsymbol{x}^{\prime\mu}\}$ related by\emph{\footnote{\emph{A}s, e.g., in
\cite{trautman}.}}
\begin{equation}
\boldsymbol{x}^{\prime\mu}:=\boldsymbol{x}^{\mu}\circ\mathrm{h}_{\varepsilon
}^{-1}%
\end{equation}
we would\ get that $\boldsymbol{\delta}^{0}\phi_{\mu}(x)=\left(
\pounds _{\boldsymbol{\xi}}\phi(x^{\kappa})\right)  _{\mu}$.
\end{remark}

\begin{remark}
Take into account for applications that for any $\mathcal{C\in}\sec%
{\textstyle\bigwedge}
T^{\ast}M$
\begin{equation}
d\pounds _{\boldsymbol{\xi}}\mathcal{C}=\pounds _{\boldsymbol{\xi}%
}d\mathcal{C} \label{A17a}%
\end{equation}

\end{remark}

Now, physicists introduce another two variations $\boldsymbol{\delta}^{a}$ and
$\boldsymbol{\delta}$ defined by
\begin{equation}
\boldsymbol{\delta}^{a}\phi_{\mu}(x):=\lim_{\varepsilon\rightarrow0}\frac
{\phi_{\mu}(x^{\prime\kappa})-\phi_{\mu}(x^{\kappa})}{\varepsilon}=\xi^{\iota
}\partial_{\iota}\phi_{\mu}(x^{\kappa}) \label{A18}%
\end{equation}
and
\[
\boldsymbol{\delta}\phi_{\mu}(x):=\lim_{\varepsilon\rightarrow0}\frac
{\phi_{\mu}^{\prime}(x^{\prime\kappa})-\phi_{\mu}(x^{\kappa})}{\varepsilon}%
\]
called, e.g., in \cite{roman} \emph{local variation\footnote{Some authors call
$\boldsymbol{\delta}\phi_{\mu}(x)$ the total variation, but we think that this
is not an appropriate name.}. We have}%

\begin{align}
\left(  \boldsymbol{\delta}\phi(x)\right)  _{\mu}  &  =\lim_{\varepsilon
\rightarrow0}\frac{\phi_{\mu}^{\prime}(x^{\kappa})+\varepsilon\xi^{\iota
}\partial_{\iota}\phi_{\mu}(x^{\kappa})-\phi_{\mu}(x^{\kappa})}{\varepsilon
}=\left(  \boldsymbol{\delta}^{0}\phi(x^{\kappa})\right)  _{\mu}+\xi^{\iota
}\partial_{\iota}\phi_{\mu}(x^{\kappa})\nonumber\\
&  =-\partial_{\mu}\xi^{\iota}\phi_{\iota}(x^{\kappa})-\xi^{\alpha}%
\partial_{\alpha}\phi_{\mu}(x^{\kappa})+\xi^{\iota}\partial_{\iota}\phi_{\mu
}(x^{\kappa})=-\partial_{\mu}\xi^{\iota}\phi_{\iota}(x^{\kappa}) \label{A19}%
\end{align}

\begin{remark}
In what follows we shall use the above terminology for the various variations
introduced above for an arbitrary tensor field. The definition of the Lie
derivative of spinor fields is is still a subject of recent researh with many
conflicting views. In \cite{lrw2015} we present a novel geoemtrical aprroach
to this subject using the theory of Clifford and spin-Clifford bundles which
seems to lead to a consistent results.
\end{remark}

\begin{remark}
Defining
\begin{equation}
\boldsymbol{\delta}^{0}\mathcal{A}:=-\int\nolimits_{\mathcal{U}}%
\pounds _{\boldsymbol{\xi}}\mathcal{L}_{m}(\phi,d\phi), \label{A19a}%
\end{equation}
we have returning to \emph{Eq.(\ref{A2})} that Stokes theorem permit us to
write
\begin{align}
\boldsymbol{\delta}^{0}\mathcal{A}  &  =-\int\nolimits_{\mathcal{U}%
}\pounds _{\boldsymbol{\xi}}\mathcal{L}_{m}(\phi,d\phi)\nonumber\\
&  =-\int\nolimits_{\mathcal{U}}d(\xi\lrcorner\mathcal{L}_{m}(\phi
,d\phi))-\int\nolimits_{\mathcal{U}}\xi\lrcorner d\mathcal{L}_{m}(\phi
,d\phi)\nonumber\\
&  =-\int\nolimits_{\partial\mathcal{U}}\xi\lrcorner\mathcal{L}_{m}(\phi
,d\phi). \label{A20}%
\end{align}

\end{remark}

\section{The Generalized Energy-Momentum Current\newline in $(M,\boldsymbol{g}%
,\tau_{\boldsymbol{g}},\uparrow)$}

\subsection{The Case of a General Lorentzian Spacetime Structure}

In this subsection $(M,\boldsymbol{g},\tau_{\boldsymbol{g}},\uparrow)$ is an
arbitrary oriented and time oriented Lorentzian manifold $(M,\boldsymbol{g})$
which will be supposed to be the arena where physical phenomena takes place.
We choose coordinate charts $(\mathcal{U}_{1},\chi_{1})$ and $(\mathcal{U}%
_{2},\chi_{2})$ with coordinate functions $\{\boldsymbol{x}^{\mu}\}$ and
$\{\boldsymbol{x}^{\prime\mu}\}$ covering $\mathcal{U}_{1}\cap\mathcal{U}%
_{2}=\mathcal{U}$.\ We call $\chi_{1}(\mathcal{U})=U$, $\chi_{2}%
(\mathcal{U})=U^{\prime}$. We take $\mathcal{U}$ such that $\partial
\mathcal{U}=\Sigma_{2}-\Sigma_{1}+\mathbf{\Xi}$\ , i.e., $\mathcal{U}$ is
bounded from above and below by spacelike surfaces $\Sigma_{1}$ and
$\Sigma_{2}$ such that $\sigma_{1}=\chi_{1}(\Sigma_{1})$ and $\sigma_{2}%
=\chi_{1}(\Sigma_{2})$ and moreover we suppose that set of the $N$ matter
fields in interaction denoted%
\begin{equation}
\boldsymbol{\phi}=\{\phi_{A}\},~~A=1,2,...,N \label{new3}%
\end{equation}
living in $\mathcal{U}$ satisfy in $\mathbf{\Xi}$ (a timelike boundary)%
\begin{equation}
\left.  \phi_{A}\right\vert _{\mathbf{\Xi}}=0. \label{new3b}%
\end{equation}
\ In what follows the action functional for the fields is written%
\begin{equation}
\mathcal{A=}\int\nolimits_{\mathcal{U}}\mathfrak{L}_{m}(\boldsymbol{\phi}%
_{\mu},\partial_{\mu}\boldsymbol{\phi})d^{4}x=\int\nolimits_{\mathcal{U}}%
L_{m}(\boldsymbol{\phi}_{\mu},\partial_{\mu}\boldsymbol{\phi})\sqrt
{-\det\boldsymbol{g}}d^{4}x \label{new3a}%
\end{equation}
Under a coordinate transformation corresponding to a diffeomorphism generated
by a one parameter group of\ diffeomorphisms,%
\begin{align}
x^{\mu}  &  \mapsto x^{\prime\mu}=x^{\mu}+\varepsilon\xi^{\mu}=x^{\mu
}+\varepsilon\boldsymbol{\xi~[}x^{\mu}]\nonumber\\
&  =x^{\mu}-\pounds _{\varepsilon\boldsymbol{\xi}}x^{\mu}=x^{\mu
}+\mathbf{\delta}x^{\mu} \label{s1}%
\end{align}
\newline we already know that the fields suffers the variation
\begin{equation}
\boldsymbol{\phi}\mapsto\boldsymbol{\phi}^{\prime}=\boldsymbol{\phi
}+\boldsymbol{\delta}^{0}\boldsymbol{\phi.} \label{s2}%
\end{equation}

We have in first order in $\varepsilon$ and recalling that $\boldsymbol{\delta
}^{0}$ and $\partial_{\mu}$ commutes that
\begin{gather}
\boldsymbol{\delta}^{0}\mathcal{A=}\int\nolimits_{\mathcal{U}}\mathfrak{L}%
_{m}(\boldsymbol{\phi}_{\mu}^{\prime},\partial_{\mu}\boldsymbol{\phi}^{\prime
})d^{4}x^{\prime}-\int\nolimits_{\mathcal{U}}\mathfrak{L}_{m}(\boldsymbol{\phi
}_{\mu},\partial_{\mu}\boldsymbol{\phi})d^{4}x\nonumber\\
=\int\nolimits_{\mathcal{U}}\left(  \frac{\partial\mathfrak{L}_{m}}%
{\partial\phi_{A}}\boldsymbol{\delta}^{0}\phi_{A}+\frac{\partial
\mathfrak{L}_{m}}{\partial\partial_{\mu}\phi_{A}}\partial_{\mu}%
\boldsymbol{\delta}^{0}\phi_{A}+\mathfrak{L}_{m}\frac{\partial\mathbf{\delta
}x^{\mu}}{\partial x^{\mu}}\right)  d^{4}x\nonumber\\
=\int\nolimits_{\mathcal{U}}\left(  \frac{\partial\mathfrak{L}_{m}}%
{\partial\phi_{A}}-\partial_{\mu}\frac{\partial\mathfrak{L}_{m}}%
{\partial\partial_{\mu}\phi_{A}}\right)  \boldsymbol{\delta}^{0}\phi
_{A}+\partial_{\mu}\left(  \frac{\partial\mathfrak{L}_{m}}{\partial
\partial_{\mu}\phi_{A}}\boldsymbol{\delta}^{0}\phi_{A}+\mathfrak{L}%
_{m}\mathbf{\delta}x^{\mu}\right)  d^{4}x \label{s3}%
\end{gather}
Putting%
\begin{equation}
\mathcal{J}^{\mu}:=\frac{\partial\mathfrak{L}_{m}}{\partial\partial_{\mu}%
\phi_{A}}\boldsymbol{\delta}^{0}\phi_{A}+\mathfrak{L}_{m}\boldsymbol{\delta
}x^{\mu} \label{s4}%
\end{equation}
the second term in Eq.(\ref{s3}) can be written using Gauss theorem as%
\begin{equation}
\int\nolimits_{\mathcal{U}}\partial_{\mu}\left(  \frac{\partial\mathfrak{L}%
_{m}}{\partial\partial_{\mu}\phi_{A}}\boldsymbol{\delta}^{0}\phi
_{A}+\mathfrak{L}_{m}\mathbf{\delta}x^{\mu}\right)  d^{4}x=\int%
\nolimits_{\sigma_{2}}J^{\mu}d\sigma_{\mu}-\int\nolimits_{\sigma_{1}}J^{\mu
}d\sigma_{\mu} \label{s5}%
\end{equation}

Recalling the concept of local variation introduced above we have%
\begin{equation}
\boldsymbol{\delta}\phi_{A}=\boldsymbol{\delta}^{0}\phi_{A}+\mathbf{\delta
}x^{\nu}\partial_{\nu}\phi_{A} \label{s6}%
\end{equation}
Putting
\begin{equation}
\pi_{A}^{\mu}=\frac{\partial\mathfrak{L}_{m}}{\partial\partial_{\mu}\phi_{A}}
\label{ss6}%
\end{equation}
we call%
\begin{equation}
\pi_{A}:=\pi_{A}^{\mu}\partial_{\mu} \label{ss6a}%
\end{equation}
the \emph{canonical momentum canonically conjugated} to the field $\phi_{A}$.
Moreover,\ putting
\begin{equation}
\Upsilon_{\nu}^{\mu}:=\pi_{A}^{\mu}\partial_{\nu}\phi_{A}-\delta_{\nu}^{\mu
}\mathfrak{L}_{m} \label{sss6}%
\end{equation}
we call
\begin{equation}
\boldsymbol{\Upsilon}:=\Upsilon_{\nu}^{\mu}dx^{\nu}\otimes\partial_{\mu}
\label{sss6a}%
\end{equation}
the \emph{canonical} \emph{energy-momentum tensor} of the closed physical
system described by the fields.

Now, we can write Eq.(\ref{s4}) as%
\begin{equation}
\mathcal{J}^{\mu}:=\pi_{A}^{\mu}\boldsymbol{\delta}\phi_{A}-\Upsilon_{\nu
}^{\mu}\mathbf{\delta}x^{\nu}. \label{s7}%
\end{equation}
Moreover, defining
\begin{equation}
F(\sigma):=\int\nolimits_{\sigma}\left(  \pi_{A}^{\mu}\boldsymbol{\delta}%
\phi_{A}-\Upsilon_{\nu}^{\mu}\mathbf{\delta}x^{\nu}\right)  d\sigma_{\mu}
\label{s8}%
\end{equation}
we can rewrite Eq.(\ref{s3})
\begin{equation}
\boldsymbol{\delta}^{0}\mathcal{A=}\int\nolimits_{U}\left(  \frac
{\partial\mathfrak{L}_{m}}{\partial\phi_{A}}-\partial_{\mu}\frac
{\partial\mathfrak{L}_{m}}{\partial\partial_{\mu}\phi_{A}}\right)
\boldsymbol{\delta}^{0}\phi_{A}d^{4}x+F(\sigma_{2})-F(\sigma_{1}) \label{s9}%
\end{equation}
Now, the action principle establishes that $\boldsymbol{\delta}^{0}%
\mathcal{A}=0$ and then we must have
\begin{equation}
\frac{\partial\mathfrak{L}_{m}}{\partial\phi_{A}}-\partial_{\mu}\frac
{\partial\mathfrak{L}_{m}}{\partial\partial_{\mu}\phi_{A}}=0, \label{s10}%
\end{equation}
which are the Euler-Lagrange equations satisfied by each one of the fields
$\phi_{A}$ and also
\begin{equation}
F(\sigma_{2})-F(\sigma_{1})=0 \label{s11}%
\end{equation}
Now, if $\tau_{\boldsymbol{g}}$ is the volume element, taking into account
that we took $\Sigma_{2}-\Sigma_{1}+\mathbf{\Xi}=\partial\mathcal{U}$ where
$\mathbf{\Xi}$ is a timelike surface such that $\left.  \mathcal{J}\right\vert
_{\mathbf{\Xi}}=0$ and introducing the current
\begin{equation}
\mathcal{J}:=\mathcal{J}_{\mu}dx^{\mu}=g_{\mu\nu}\mathcal{J}^{\nu}dx^{\mu}%
\in\sec\bigwedge\nolimits^{1}T^{\ast}M\hookrightarrow\sec\mathcal{C\ell
}(M,\mathtt{g}) \label{S12}%
\end{equation}
we can rewrite Eq.(\ref{s11}) using Stokes theorem as%
\begin{equation}
\int\nolimits_{\sigma_{2}}\mathcal{J}^{\mu}d\sigma_{\mu}-\int\nolimits_{\sigma
_{1}}\mathcal{J}^{\mu}d\sigma_{\mu}=\int_{\partial\mathcal{U}}%
\underset{\boldsymbol{g}}{\star}\mathcal{J}=\int_{U}d\underset{\boldsymbol{g}%
}{\star}\mathcal{J}\boldsymbol{.} \label{s13}%
\end{equation}

\subsection{Introducing $\boldsymbol{D}$ and the Covariant \textquotedblleft
Conservation\textquotedblright\ Law for $\mathbf{\Upsilon}$}

If we add\ $\boldsymbol{D}$, the Levi-Civita connection of $\boldsymbol{g}$ to
$(M,\mathbf{g},\tau_{\boldsymbol{g}},\uparrow)$ we get a Lorentzian spacetime
structure $(M,\boldsymbol{g},\boldsymbol{D},\tau_{\boldsymbol{g}},\uparrow)$.
Then recalling from Appendix A the definitions of the Hodge coderivative
and\ of the Dirac operator we can write:
\begin{align}
\int_{\mathcal{U}}d\star\mathcal{J} &  =\int_{\mathcal{U}}%
\underset{\boldsymbol{g}}{\star}\underset{\boldsymbol{g}}{\star}%
^{-1}d\underset{\boldsymbol{g}}{\star}\mathcal{J}\nonumber\\
&  =-\int_{\mathcal{U}}\left(  \underset{\boldsymbol{g}}{\delta}%
\mathcal{J}\right)  \tau_{\boldsymbol{g}}\nonumber\\
&  =\int_{\mathcal{U}}\boldsymbol{\partial\lrcorner}\mathcal{J}\boldsymbol{~}%
\tau_{\boldsymbol{g}}\nonumber\\
&  =\int_{U}\boldsymbol{D}_{\mu}\mathcal{J}^{\mu}~\tau_{\boldsymbol{g}%
}\label{s14}%
\end{align}
and we arrive at the conclusion that $\boldsymbol{\delta}^{0}\mathcal{A}=0$
implies that
\begin{equation}
d\underset{\boldsymbol{g}}{\star}\mathcal{J=}0~~~~\boldsymbol{\Leftrightarrow
D}_{\mu}\mathcal{J}^{\mu}=0~~\Leftrightarrow\frac{1}{\sqrt{-\det
\boldsymbol{g}}}\partial_{\mu}(\sqrt{-\det\boldsymbol{g}}\mathcal{J}^{\mu
})=0.\label{s15}%
\end{equation}
\ \ 

\begin{remark}
Recalling the definition of the canonical energy-momentum tensor
$\mathbf{\Upsilon}$ \emph{(Eq.(\ref{sss6a})}) gives a covariant
\textquotedblleft conservation\textquotedblright\ law for $\mathbf{\Upsilon}$,
i.e.,%
\begin{equation}
\boldsymbol{D}\bullet\mathbf{\Upsilon}=0 \label{s15a}%
\end{equation}
only if the term $\pi_{A}^{\mu}\boldsymbol{\delta}\phi_{A}$ in the current
$J^{\mu}$ is null. This, of course, happens if the local variation of the
fields $\boldsymbol{\delta}\phi_{A}=0$, something that cannot happens in an
arbitrary structure $(M,\boldsymbol{g},\tau_{\boldsymbol{g}},\uparrow)$. So,
we need to investigate when this occurs.
\end{remark}

\begin{remark}
We observe here that comparison of \emph{Eq.(\ref{s9}) with Eq.(\ref{A20})
permit us to write }%
\begin{equation}
\int_{\partial\mathcal{U}}\underset{\boldsymbol{g}}{\star}\mathcal{J}%
=\int_{\mathcal{U}}d\underset{\boldsymbol{g}}{\star}\mathcal{J}=-\int%
\nolimits_{\partial\mathcal{U}}\xi\lrcorner\mathcal{L}(\boldsymbol{\phi
},d\boldsymbol{\phi})=-\int\nolimits_{\mathcal{U}}d(\xi\lrcorner
\mathcal{L}(\boldsymbol{\phi},d\boldsymbol{\phi})). \label{s16}%
\end{equation}

\end{remark}

\subsection{The Case of Minkowski Spacetime}

\subsubsection{The Canonical Energy-Momentum Tensor in Minkowski Spacetime}

We now, apply the results of the last section to the case where the fields
live in Minkowski spacetime $(M,\boldsymbol{\eta},\mathrm{D},\tau
_{\boldsymbol{\eta}},\uparrow)$. In this case we can introduce global
coordinates $\{\mathrm{x}^{\mu}\}$ in Einstein-Lorentz-Poincar\'{e} gauge (see
Section 2.3).

We now construct the conserved current associated to the diffeomorphisms
generated \ by the vector fields $e_{\mu}=\partial/\partial\mathrm{x}^{\mu}$
which are Killing vector fields on $(M,\boldsymbol{\eta})$. Consider then the
Killing vector field%
\begin{equation}
\boldsymbol{\xi}:=\varepsilon^{\mu}e_{\mu} \label{m1}%
\end{equation}
where $\varepsilon^{\mu}\in\mathbb{R}$ are\emph{ constants }such that
$\left\vert \varepsilon^{\mu}\right\vert <<1$ and the coordinate
transformation%
\begin{equation}
x^{\mu}\mapsto x^{\prime\mu}=x^{\mu}+\pounds _{\boldsymbol{\xi}}x^{\mu}%
=x^{\mu}+\varepsilon^{\mu}. \label{m2}%
\end{equation}

Recalling\ the definitions of $\boldsymbol{\pi}_{A}$, the momentum canonically
conjugated to the field $\phi_{A}$ (Eq.(\ref{ss6a})) and of the
$\boldsymbol{\Upsilon}$ (Eq.(\ref{sss6a})) and recalling that in the present
case it is $\boldsymbol{\delta}\phi_{A}=0$\ we have that the conserved Noether
current is:%
\begin{equation}
\mathcal{J}^{\mu}=-\varepsilon^{\nu}\Upsilon_{\nu}^{\mu}. \label{m3}%
\end{equation}
and the \emph{canonical} energy-momentum tensor of the physical system
described by the fields $\phi_{A}$ is conserved, i.e.,
\begin{equation}
\partial_{\mu}\Upsilon_{\nu}^{\mu}=0. \label{mc1}%
\end{equation}

\begin{remark}
Of course, if we introduce an arbitrary coordinate system $\{x^{\mu}\}$
covering an open set $\mathcal{U}$ of the Minkowski spacetime
manifold\ $M\simeq\mathbb{R}^{4}$ \emph{Eq.(\ref{mc1}) }reads%
\begin{equation}
\mathrm{D}\bullet\boldsymbol{\Upsilon}=0\Leftrightarrow\mathrm{D}_{\mu
}\Upsilon_{\nu}^{\mu}=0. \label{mc2}%
\end{equation}

\end{remark}

\subsubsection{The Energy-Momentum $1$-Form Fields in Minkowski Spacetime}

Recall that the objects
\begin{equation}
\boldsymbol{\Upsilon}^{\mu}:=\left(  \pi_{A}^{\mu}\partial_{\nu}\phi
_{A}-\delta_{\nu}^{\mu}\mathfrak{\ell}\right)  d\mathrm{x}^{\nu}=\Upsilon
_{\nu}^{\mu}d\mathrm{x}^{\nu}\in\sec\bigwedge\nolimits^{1}T^{\ast
}M\hookrightarrow\sec\mathcal{C\ell}(M,\mathtt{\eta}) \label{mm8}%
\end{equation}
are conserved currents. They may be called the generalized energy momentum
$1$-form fields of the physical system described by the fields $\phi_{A}$. We
have
\begin{equation}
\underset{\boldsymbol{\eta}}{\delta}\boldsymbol{\Upsilon}^{\mu}%
=0\Leftrightarrow d\underset{\boldsymbol{\eta}}{\star}\boldsymbol{\Upsilon
}^{\mu}=0. \label{mm9}%
\end{equation}

\subsubsection{The Belinfante Energy-Momentum Tensor in Minkowski Spacetime}

It happens that given an arbitrary field theory the canonical energy-momentum
tensor is in general not symmetric, i.e.,%
\begin{equation}
\Upsilon^{\mu\nu}\neq\Upsilon^{\nu\mu}. \label{mm11}%
\end{equation}
But, of course, if $\boldsymbol{\Upsilon}^{\mu}$ is conserved, so it is
\begin{equation}
\mathcal{T}^{\mu}:=\boldsymbol{\Upsilon}^{\mu}+\boldsymbol{\digamma}^{\mu}
\label{mm12}%
\end{equation}
where
\begin{equation}
\boldsymbol{\digamma}^{\mu}:=\underset{\boldsymbol{\eta}}{\delta
}\boldsymbol{\Psi}^{\mu} \label{mm13}%
\end{equation}
with each one of the $\boldsymbol{\Psi}^{\mu}\in\sec\bigwedge\nolimits^{2}%
T^{\ast}M\hookrightarrow\sec\mathcal{C\ell}(M,\mathtt{g})$. So, it is always
possible for any field theory to find\footnote{For example, in \cite{ll} \ the
condition is fixed in such a way that the orbital angular momentum tensor of
the system defined as $\boldsymbol{M}=\frac{1}{2}M_{\mu\nu}\left.
d\mathrm{x}^{\mu}\right\vert _{o}\wedge\left.  d\mathrm{x}^{\nu}\right\vert
_{o}$ (with $M^{\mu\nu}=\int(\mathrm{x}^{\mu}T^{\nu\kappa}-\mathrm{x}^{\nu
}T^{\mu\kappa})d\sigma_{\kappa}$) be automatically conserved. However take
into account that since the fields possess in general intrinsic spin an
angular momentum conservation law can only be formulated by taking into
account the orbital and spin angular momenta. It can be shown (see Chapter 8
of \cite{rc2007} that the antisymmetric part of the canonical energy-momentum
tensor is the source of the spin tensor of the field. In \cite{barut} it is
shown how to obtain a conserved symmetrical energy-momentum tensor by studying
the conservation laws that come from a general Poincar\'{e} variation, which
involves translations and general Lorentz transformations.} a condition on the
$\Psi^{\mu}$ such that \ the components $T_{\nu}^{\mu}$ of $\mathcal{T}^{\mu
}=T_{\nu}^{\mu}d\mathrm{x}^{\nu}$ satisfy the symmetry condition.
\begin{equation}
T^{\mu\nu}=T^{\nu\mu}. \label{mm14}%
\end{equation}
When this is the case $\boldsymbol{T}=T_{\nu}^{\mu}d\mathrm{x}^{\nu}%
\otimes\partial_{\mu}$ will be called the \emph{Belinfante} energy-momentum
tensor of the system.

\subsubsection{The Energy-Momentum Covector in Minkowski Spacetime}

Since Minkowski spacetime is parallelizable we can identify all tangent and
cotangent spaces and thus define a \emph{covector} in a vector space
$\mathcal{V\simeq}\mathbb{R}^{4}$. Fixing (global) coordinates in
Einstein-Lorentz-Poincar\'{e} gauge $\{\mathrm{x}^{\mu}\}$ a vector
$\mathbf{v}_{x}\in$ $T_{x}M$ can be identified by a pair \cite{oneillo}
$(\mathbf{x},\mathbf{v})$ where $(\mathbf{x},\mathbf{v})\in\mathbb{R}%
^{4}\times\mathbb{R}^{4}$ and $\mathbf{x}=(\mathrm{x}^{0},\mathrm{x}%
^{1},\mathrm{x}^{2},\mathrm{x}^{3})$. If two vectors $\mathbf{v}_{x}\in$
$T_{x}M$, $\mathbf{v}_{y}\in$ $T_{y}M$ are such that%
\begin{equation}
\mathbf{v}_{x}=(\mathbf{x,v}),\mathbf{v}_{y}=(\mathbf{y,v}), \label{mm14a}%
\end{equation}
i.e., they have the same vector part we will say that they can be identified
as a a vector of some vector space $\mathbf{V}\mathcal{\simeq}\mathbb{R}^{4}.$
With these considerations we write $\forall x,y\in M\simeq\mathbb{R}^{4}$%

\begin{equation}
\left.  \frac{\partial}{\partial\mathrm{x}^{\mu}}\right\vert _{x}%
\approx\left.  \frac{\partial}{\partial\mathrm{x}^{\mu}}\right\vert
_{y}=\mathbf{E}_{\mu}~~\text{and ~~}\left.  d\mathrm{x}^{\mu}\right\vert
_{x}\approx\left.  d\mathrm{x}^{\mu}\right\vert _{y}=\boldsymbol{E}^{\mu}
\label{mmm14}%
\end{equation}
where $\{\mathbf{E}_{\mu}\}$ is a basis of $\mathbf{V}$ and $\{\boldsymbol{E}%
^{\mu}\}$ is a basis of a $\mathcal{V\simeq}\mathbb{R}^{4}$. Then we can write
(with $o\in M$ an \emph{arbitrary} point, taken in general, for convenience as
origin of the coordinate system)
\begin{equation}
\boldsymbol{P}=P_{\mu}\boldsymbol{E}^{\mu}=P_{\mu}\left.  d\mathrm{x}^{\mu
}\right\vert _{0}:=\left(  \int\underset{\boldsymbol{\eta}}{\star}%
\mathcal{T}_{\mu}\right)  \left.  d\mathrm{x}^{\mu}\right\vert _{0}
\label{mm10}%
\end{equation}
as the energy-momentum \emph{covector} of the closed physical system described
by the fields $\phi_{A}$.

\begin{remark}
Note that under a global \emph{(}constant\emph{)} Lorentz transformation
$\partial/\partial\mathrm{x}^{\mu}\mapsto\partial/\partial\mathrm{x}%
^{\prime\mu}=\Lambda_{\mu}^{\nu}\partial/\partial\mathrm{x}^{\nu}$we have that
$\underset{\boldsymbol{\eta}}{\star}\mathcal{T}_{\mu}\mapsto
\underset{\boldsymbol{\eta}}{\star}\mathcal{T}_{\mu}^{\prime}%
=\underset{\boldsymbol{\eta}}{\star}\mathcal{T}_{\nu}\Lambda_{\mu}^{\nu}$ and
it results $P_{\mu}\mapsto P_{\mu}^{\prime}=P_{\nu}\Lambda_{\mu}^{\nu}$, i.e.,
the $P_{\mu}$ are indeed the components of a covector under any global
\emph{(}constant\emph{)} Lorentz transformation.
\end{remark}

\section{The Energy-Momentum Tensor of Matter in GRT}

The result of the previous section shows that in a general Lorentzian
spacetime structure the canonical Lagrangian formalism does not give a
covariant \textquotedblleft conserved\textquotedblright\ energy-mometum tensor
unless the local variations of the matter fields $\boldsymbol{\delta}\phi_{A}$
are null. So, in \textbf{GRT }the matter energy momentum tensor
$\boldsymbol{T}=T_{\nu}^{\mu}d\mathrm{x}^{\nu}\otimes\frac{\partial}%
{\partial\mathrm{x}^{\mu}}$ that enters Einstein equation is symmetric (i.e.,
$T^{\mu\nu}=T^{\nu\mu}$) and it is obtained in the following way. We start
with the matter action%
\begin{equation}
\mathcal{A}_{m}=\int\nolimits_{\mathcal{U}}L_{m}(\boldsymbol{\phi}_{\mu
},\partial_{\mu}\boldsymbol{\phi})\sqrt{-\det\boldsymbol{g}}d^{4}x \label{gr1}%
\end{equation}

\begin{remark}
Consider as above, a diffeomorphism $x\mapsto x^{\prime}=\mathrm{h}%
_{\varepsilon}(x)$ generated by a one parameter group associated to a vector
field $\mathbf{\xi=\xi}^{\mu}\partial_{\mu}$ \emph{(}such that components
$\left\vert \mathbf{\xi}^{\mu}\right\vert \mathbf{<<}1$ and the $\mathbf{\xi
}^{\mu}\rightarrow0$ at $\mathbf{\Xi}$\emph{)} and a corresponding coordinate
transformation%
\begin{equation}
x^{\mu}\mapsto x^{\prime\mu}=x^{\mu}+\varepsilon\mathbf{\xi}^{\mu} \label{gr2}%
\end{equation}
with $\left\vert \varepsilon\right\vert <<1$ and study the variation of
$\mathcal{A}_{m}$ induced the variation of the \emph{(}gravitational\emph{)}
field $\boldsymbol{g}$ \emph{(}without changing the fields $\phi_{A}$\emph{)}
induced by the coordinate transformation of \emph{Eq.(\ref{gr2})}. We have
immediately that
\begin{equation}
g^{_{\mu\nu}}(x^{\kappa})\mapsto g^{\prime_{\mu\nu}}(x^{\prime\kappa
})=g^{_{\mu\nu}}(x^{\kappa})+\varepsilon g^{_{\mu\iota}}(x^{\kappa}%
)\partial_{\iota}\xi^{\nu}+\varepsilon g^{_{\nu\iota}}(x^{\kappa}%
)\partial_{\iota}\xi^{\mu}. \label{gr3}%
\end{equation}
To first order in $\varepsilon$ it is%
\begin{equation}
g^{\prime_{\mu\nu}}(x^{\kappa})=g^{_{\mu\nu}}(x^{\kappa})+\varepsilon
g^{_{\mu\iota}}(x^{\kappa})\partial_{\iota}\xi^{\nu}+\varepsilon g^{_{\nu
\iota}}(x^{\kappa})\partial_{\iota}\xi^{\mu}-\varepsilon\xi^{\iota}%
\partial_{\iota}g^{_{\mu\nu}} \label{gr4}%
\end{equation}
and
\begin{equation}
\boldsymbol{\delta}^{0}g^{_{\mu\nu}}=-\pounds _{\boldsymbol{\xi}}g^{\mu\nu
}=\boldsymbol{D}^{\nu}\xi^{\mu}+\boldsymbol{D}^{\mu}\xi^{\nu}\text{ and
}\boldsymbol{\delta}^{0}g_{\mu\nu}=-\pounds _{\boldsymbol{\xi}}g_{\mu\nu
}=-\boldsymbol{D}_{\nu}\xi_{\mu}-\boldsymbol{D}_{\mu}\xi_{\nu}. \label{gr5}%
\end{equation}
Then, under the above conditions, using Gauss theorem and supposing that
$\boldsymbol{\delta}^{0}g^{_{\mu\nu}}$ vanishes at $\mathbf{\Xi}$ it is%
\begin{align}
\boldsymbol{\delta}^{0}\mathcal{A}_{m}  &  :=\int\nolimits_{\mathcal{U}%
}\left\{  \frac{\partial L_{m}\sqrt{-\det\boldsymbol{g}}}{\partial g^{_{\mu
\nu}}}\boldsymbol{\delta}^{0}g^{_{\mu\nu}}+\frac{\partial L_{m}\sqrt
{-\det\boldsymbol{g}}}{\partial\partial_{\iota}g^{_{\mu\nu}}}%
\boldsymbol{\delta}^{0}\left(  \partial_{\iota}g^{_{\mu\nu}}\right)  \right\}
d^{4}x\nonumber\\
&  =\int\nolimits_{\mathcal{U}}\left\{  \frac{\partial L_{m}\sqrt
{-\det\boldsymbol{g}}}{\partial g^{_{\mu\nu}}}\boldsymbol{\delta}^{0}%
g^{_{\mu\nu}}+\frac{\partial L_{m}\sqrt{-\det\boldsymbol{g}}}{\partial
\partial_{\iota}g^{_{\mu\nu}}}\partial_{\iota}\boldsymbol{\delta}^{0}%
g^{_{\mu\nu}}\right\}  d^{4}x\nonumber\\
&  =\int\nolimits_{\mathcal{U}}\left\{  \frac{\partial L_{m}\sqrt
{-\det\boldsymbol{g}}}{\partial g^{_{\mu\nu}}}-\frac{\partial}{\partial
x^{\iota}}\left(  \frac{\partial L_{m}\sqrt{-\det\boldsymbol{g}}}%
{\partial\partial_{\iota}g^{_{\mu\nu}}}\right)  \right\}  \boldsymbol{\delta
}^{0}g^{_{\mu\nu}}d^{4}x\nonumber\\
&  =:\frac{1}{2}\int\nolimits_{\mathcal{U}}T_{\mu\nu}\boldsymbol{\delta}%
^{0}g^{_{\mu\nu}}\sqrt{-\det\boldsymbol{g}}d^{4}x=-\frac{1}{2}\int%
\nolimits_{\mathcal{U}}T^{\mu\nu}\boldsymbol{\delta}^{0}g_{\mu\nu}\sqrt
{-\det\boldsymbol{g}}d^{4}x. \label{gr6}%
\end{align}
with%
\begin{equation}
\frac{1}{2}\sqrt{-\det\boldsymbol{g}}T_{\mu\nu}:=\frac{\partial L_{m}%
\sqrt{-\det\boldsymbol{g}}}{\partial g^{_{\mu\nu}}}-\frac{\partial}{\partial
x^{\iota}}\left(  \frac{\partial L_{m}\sqrt{-\det\boldsymbol{g}}}%
{\partial\partial_{\iota}g^{_{\mu\nu}}}\right)  \label{gr7}%
\end{equation}
Since $T_{\mu\nu}=T_{\nu\mu}$ we can write%
\begin{align}
\boldsymbol{\delta}^{0}\mathcal{A}_{m}  &  =\frac{1}{2}\int%
\nolimits_{\mathcal{U}}T_{\mu\nu}\boldsymbol{\delta}^{0}g^{_{\mu\nu}}%
\sqrt{-\det\boldsymbol{g}}d^{4}x\nonumber\\
&  =\frac{1}{2}\int\nolimits_{\mathcal{U}}T^{\mu\nu}(\boldsymbol{D}_{\nu}%
\xi_{\mu}+\boldsymbol{D}_{\mu}\xi_{\nu})\sqrt{-\det\boldsymbol{g}}%
d^{4}x\nonumber\\
&  =\int\nolimits_{\mathcal{U}}T^{\mu\nu}\boldsymbol{D}_{\nu}\xi_{\mu}%
\sqrt{-\det\boldsymbol{g}}d^{4}x\nonumber\\
&  =\int\nolimits_{\mathcal{U}}\boldsymbol{D}_{\nu}(T_{\mu}^{\nu}\xi^{\mu}%
\mu)\sqrt{-\det\boldsymbol{g}}d^{4}x-\int\nolimits_{\mathcal{U}}%
(\boldsymbol{D}_{\nu}T_{\mu}^{\nu})\xi^{\mu}\sqrt{-\det\boldsymbol{g}}%
d^{4}x\nonumber\\
&  =\int\nolimits_{\mathcal{U}}\partial_{\nu}(\sqrt{-\det\boldsymbol{g}}%
T_{\mu}^{\nu}\xi^{\mu}\mu)d^{4}x-\int\nolimits_{\mathcal{U}}(\boldsymbol{D}%
_{\nu}T_{\mu}^{\nu})\xi^{\mu}\sqrt{-\det\boldsymbol{g}}d^{4}x\nonumber\\
&  =-\int\nolimits_{\mathcal{U}}(\boldsymbol{D}_{\nu}T_{\mu}^{\nu})\xi^{\mu
}\sqrt{-\det\boldsymbol{g}}d^{4}x \label{gr8}%
\end{align}

\end{remark}

\begin{remark}
\label{dtt=0}Contrary to what is stated in many textbooks in \textbf{GRT} we
cannot conclude with the ingredients introduced in this section that
$\boldsymbol{\delta}^{0}\mathcal{A}_{m}=0$.

However, if we take into account that in \textbf{GRT} the total action
describing the mater fields and the gravitational field is
\[
\mathcal{A=}%
{\textstyle\int}
\mathcal{L}_{g}+%
{\textstyle\int}
\mathcal{L}_{m}=-\frac{1}{2}%
{\textstyle\int}
R\boldsymbol{\tau}_{\boldsymbol{g}}+%
{\textstyle\int}
\mathcal{L}_{m}%
\]
we get from the variation $\boldsymbol{\delta}^{0}\mathcal{A}$ induced by the
variation of the \emph{(}gravitational\emph{)} field $\boldsymbol{g}$
\emph{(}without changing the fields $\phi_{A}$\emph{)} and induced by the
coordinate transformation given by \emph{Eq.(\ref{gr2})} the Einstein field
equation $\boldsymbol{G=-T}$ which reads in components as
\begin{equation}
G_{\nu}^{\mu}=R_{\nu}^{\mu}-\frac{1}{2}R\delta_{\nu}^{\mu}=-T_{\nu}^{\mu}.
\label{gr9}%
\end{equation}
Since it is $\boldsymbol{D\bullet G}=0$ it follows that in GRT we have
\begin{equation}
\boldsymbol{D\bullet T}=0. \label{gr1010}%
\end{equation}

\end{remark}

\begin{remark}
It is opportune to recall that as observed, e.g., by Weinberg
\emph{\cite{weinberg}} that for the case of Minkowski spacetime the symmetric
energy-momentum tensor obtained by the above method is always equal to a
convenient symmetrization of the canonical energy-momentum tensor. But it is
necessary to have in mind that the $\mathbf{GRT}$ procedure eliminates a
legitimate conserved current $\mathcal{J}$ introducing a covariant
\textquotedblleft conserved\textquotedblright\ energy-momentum tensor that
does not give any legitimate energy-momentum conserved current for the matter
fields, except for the particular Lorentzian spacetimes containing appropriate
Killing vector fields. And even in this case no energy-momentum covector as it
exists in special relativistic theories can be defined. Moreover, at this
point we cannot forget the existence of the quantum structure of matter fields
which experimentally says the Minkowskian concept of energy and momentum
\emph{(}in general quantized\emph{)} being carried by field excitations that
one calls particles. This strongly suggests that parodying \ (again) Sachs and
Wu \emph{\cite{sw}} it is really a shame to loose the special relativistic
conservations laws in $\mathbf{GRT}$.
\end{remark}

\section{Relative Tensors and their Covariant Derivatives}

Now, recall that given arbitrary coordinates $\{x^{\alpha}\}$ covering
$U\subset$ $M$ \ and $\{x^{\prime\alpha}\}$ covering covering $V\subset$ $M$
($U\cap V\neq\varnothing$)\ a relative tensor $\mathfrak{A}$ of type $(r,s)$
and weight\footnote{The number $w$ is an integer. Of course, if $w=0$ we are
back to tensor fields.} $w$ is a section of the bundle\footnote{The notation
$(%
{\textstyle\bigwedge\nolimits^{4}}
T^{\ast}M)^{\otimes w}$ means the $w$-fold tensor product of $%
{\textstyle\bigwedge\nolimits^{4}}
T^{\ast}M$ with itself.} $T_{q}^{p}M\otimes(%
{\textstyle\bigwedge\nolimits^{4}}
T^{\ast}M)^{\otimes w}$.

We have
\[
\mathfrak{A=A}_{\nu_{1}...\nu_{s}}^{\mu_{1}...\mu_{r}}(x^{\alpha}%
)\partial_{\mu_{1}}\otimes\cdots\otimes\partial_{\mu_{r}}\otimes dx^{\nu_{1}%
}\otimes\cdots\otimes dx^{\nu_{s}}\otimes(\tau)^{\otimes w},
\]
with $\tau:=dx^{0}\wedge\cdots\wedge dx^{3}$. The set of functions%
\[
\mathfrak{A}_{\nu_{1}...\nu_{s}}^{\mu_{1}...\mu_{r}}(x^{\alpha})=\left(
\sqrt{-\det\boldsymbol{g}}\right)  ^{w}A_{\nu_{1}...\nu_{s}}^{\mu_{1}%
...\mu_{r}}(x^{\alpha})
\]
is said to be the components of the relative tensor field $\mathfrak{A\in
\sec(}T_{s}^{r}M\otimes(%
{\textstyle\bigwedge\nolimits^{4}}
T^{\ast}M)^{w})$ and under a coordinate transformation $x^{\alpha}\mapsto
x^{\prime\beta}$ with Jacobian $J=\det\left(  \frac{\partial x^{\prime\alpha}%
}{\partial x^{\beta}}\right)  $ these functions transform as
\cite{lovrund,tiee}%
\begin{equation}
\mathfrak{A}_{\kappa_{1}...\kappa_{s}}^{\prime\lambda_{1}...\lambda_{r}%
}(x^{\prime\beta})=J^{w}\frac{\partial x^{\prime\lambda_{1}}}{\partial
x^{\mu_{1}}}\cdots\frac{\partial x^{\prime\lambda_{1}}}{\partial x^{\mu_{1}}%
}\frac{\partial x^{\nu_{1}}}{\partial x^{\kappa_{1}}}...\frac{\partial
x^{\nu_{s}}}{\partial x^{\kappa_{s}}}\mathfrak{A}_{\nu_{1}...\nu_{s}}^{\mu
_{1}...\mu_{r}}(x^{\alpha}). \label{relative tensor}%
\end{equation}
On a manifold $M$ equipped with a metric tensor field $\boldsymbol{g}$ we can
write $\mathfrak{A}_{\nu_{1}...\nu_{s}}^{\mu_{1}...\mu_{r}}(x^{\alpha
})=\left(  \sqrt{-\det\boldsymbol{g}}\right)  ^{w}A_{\nu_{1}...\nu_{s}}%
^{\mu_{1}...\mu_{r}}(x^{\alpha})$ where the $A_{\nu_{1}...\nu_{s}}^{\mu
_{1}...\mu_{r}}(x^{\alpha})$ are the components of a tensor field $A\in\sec
T_{s}^{r}M$.

The \textit{covariant derivative of a relative tensor field} relative to a
given arbitrary connection $\nabla$ defined on $M$ such that $\nabla
_{\frac{\partial}{\partial x^{\nu}}}dx^{\mu}=-\ell_{\cdot\nu\alpha}^{\mu
\cdot\cdot}dx^{\alpha}$ is given (as the reader may easily find) by%
\begin{equation}
\nabla_{\partial_{\kappa}}\mathfrak{A:=(}\nabla_{\kappa}\mathfrak{A}_{\nu
_{1}...\nu_{s}}^{\mu_{1}...\mu_{r}})\partial_{\mu_{1}}\otimes\cdots
\otimes\partial_{\mu_{r}}\otimes dx^{\nu_{1}}\otimes\cdots\otimes dx^{\nu_{s}%
}\otimes(\tau)^{\otimes w}, \label{cdrti}%
\end{equation}
where%
\begin{align}
\nabla_{\kappa}\mathfrak{A}_{\nu_{1}...\nu_{s}}^{\mu_{1}...\mu_{r}}  &
=\frac{\partial}{\partial x^{\kappa}}\mathfrak{A}_{\nu_{1}...\nu_{s}}^{\mu
_{1}...\mu_{r}}+\ell_{\cdot\iota\kappa}^{\mu_{p\cdot\cdot}}\mathfrak{A}%
_{\nu_{1}....................\nu_{s}}^{\mu_{1}...\mu_{p-1}\iota\mu_{p+1}%
...\mu_{r}}\nonumber\\
&  -\ell_{\cdot\nu_{q}\kappa}^{\iota\cdot\cdot}\mathfrak{A}_{\nu_{1}%
...\nu_{q-1}\iota\nu_{q+1}...\nu_{s}}^{\mu_{1}....................\mu_{r}%
}-w\ell_{\cdot\kappa\sigma}^{\sigma\cdot\cdot}\mathfrak{A}_{\nu_{1}...\nu_{s}%
}^{\mu_{1}...\mu_{r}}. \label{cdrt}%
\end{align}

In particular for the Levi-Civita connection $\boldsymbol{D}$ of
$\boldsymbol{g}$ we have for the relative tensor
\begin{equation}
\tau_{\boldsymbol{g}}=\sqrt{-\det\boldsymbol{g}}\otimes dx^{0}\wedge
\cdots\wedge dx^{3}%
\end{equation}
that:
\begin{align}
\boldsymbol{D}_{\alpha}\left(  \sqrt{-\det\boldsymbol{g}}\right)   &
=\partial_{\gamma}\left(  \sqrt{\det\boldsymbol{g}}\right)  -\mathbf{\Gamma
}_{\cdot\gamma\rho}^{\rho\cdot\cdot}\sqrt{\det\boldsymbol{g}}=0,\nonumber\\
\boldsymbol{D}_{\alpha}\left(  \frac{1}{\sqrt{-\det\boldsymbol{g}}}\right)
&  =\partial_{\gamma}\left(  \frac{1}{\sqrt{-\det\boldsymbol{g}}}\right)
+\mathbf{\Gamma}_{\cdot\gamma\rho}^{\rho\cdot\cdot}\frac{1}{\sqrt
{-\det\boldsymbol{g}}}=0. \label{particular}%
\end{align}

\section{Explicit Formulas for $\mathbf{J}_{\mu\nu}$ and $\mathbf{J}_{\mu4}$
in Terms of Projective Conformal Coordinates}

We have taking into account Eqs.(\ref{ds1a}), (\ref{ds2}) and (\ref{ds3}) the
following identities%

\begin{align}
\frac{\partial X^{\kappa}}{\partial x^{\alpha}}  &  =\frac{\Omega^{2}}%
{2\ell^{2}}x_{\alpha}x^{\kappa}+\Omega\delta_{\alpha}^{\kappa},\nonumber\\
\frac{\partial X^{4}}{\partial x^{\alpha}}  &  =-\frac{\Omega^{2}}{\ell
}x_{\alpha},~~~X^{\mu}=\Omega x^{\mu}.\nonumber
\end{align}
where $x_{\mu}:=\eta_{\mu\nu}x^{\nu}$ and $X_{\mu}:=\mathring{\eta}_{\mu\nu
}X^{\nu}$. We want to prove that:%
\begin{align}
\mathbf{(a)}\text{~~~}\mathbf{J}_{\mu\nu}  &  =\eta_{\mu\beta}x^{\beta}%
\frac{\partial}{\partial x^{\nu}}-\eta_{\nu\beta}x^{\beta}\frac{\partial
}{\partial x^{\mu}}=\mathring{\eta}_{\mu\beta}X^{\beta}\frac{\partial
}{\partial X^{\nu}}-\mathring{\eta}_{\nu\beta}X^{\beta}\frac{\partial
}{\partial X^{\mu}},\label{3L}\\
\mathbf{(b)}~~~\mathbf{J}_{\mu4}  &  =\ell\frac{\partial}{\partial x^{\mu}%
}-\frac{1}{4\ell}\left(  2\eta_{\mu\nu}x^{\nu}x^{\lambda}-\sigma^{2}%
\delta_{\mu}^{\lambda}\right)  \frac{\partial}{\partial x^{\lambda}}%
=-X^{4}\frac{\partial}{\partial X^{\mu}}+X_{\mu}\frac{\partial}{\partial
X^{4}}. \label{4L}%
\end{align}
\medskip

\textbf{Proof of (a):}%

\begin{gather*}
\mathbf{J}_{\mu\nu}=\eta_{\mu\beta}x^{\beta}\frac{\partial X^{\kappa}%
}{\partial x^{\nu}}\frac{\partial}{\partial X^{\kappa}}+\eta_{\mu\beta
}x^{\beta}\frac{\partial X^{4}}{\partial x^{\nu}}\frac{\partial}{\partial
X^{4}}-\eta_{\nu\beta}x^{\beta}\frac{\partial X^{\kappa}}{\partial x^{\mu}%
}\frac{\partial}{\partial X^{\kappa}}-\eta_{\nu\beta}x^{\beta}\frac{\partial
X^{4}}{\partial x^{\mu}}\frac{\partial}{\partial X^{4}}\\
=x_{\mu}\frac{\partial X^{\kappa}}{\partial x^{\nu}}\frac{\partial}{\partial
X^{\kappa}}-x_{\nu}\frac{\partial X^{\kappa}}{\partial x^{\mu}}\frac{\partial
}{\partial X^{\kappa}}+x_{\mu}\frac{\partial X^{4}}{\partial x^{\nu}}%
\frac{\partial}{\partial X^{4}}-x_{\nu}\frac{\partial X^{4}}{\partial x^{\mu}%
}\frac{\partial}{\partial X^{4}}\\
=x_{\mu}\left(  -\frac{\Omega^{2}}{2\ell^{2}}x_{\nu}x^{\kappa}+\Omega
\delta_{\nu}^{\kappa}\right)  \frac{\partial}{\partial X^{\kappa}}-x_{\nu
}\left(  -\frac{\Omega^{2}}{2\ell^{2}}x_{\mu}x^{\kappa}+\Omega\delta_{\mu
}^{\kappa}\right)  \frac{\partial}{\partial X^{\kappa}}\\
+\left(  \frac{\Omega^{2}}{\ell}x_{\nu}x_{\mu}-\frac{\Omega^{2}}{\ell}x_{\mu
}x_{\nu}\right)  \frac{\partial}{\partial X^{4}}\\
=X_{\mu}\frac{\partial}{\partial X^{\nu}}-X_{\nu}\frac{\partial}{\partial
X^{\mu}}-\frac{\Omega^{2}}{2\ell^{2}}x_{\nu}x_{\mu}x^{\kappa}\frac{\partial
}{\partial X^{\kappa}}+\frac{\Omega^{2}}{2\ell^{2}}x_{\nu}x_{\mu}x^{\kappa
}\frac{\partial}{\partial X^{\kappa}}\\
=X_{\mu}\frac{\partial}{\partial X^{\nu}}-X_{\nu}\frac{\partial}{\partial
X^{\mu}}=\mathring{\eta}_{\mu\beta}X^{\beta}\frac{\partial}{\partial X^{\nu}%
}-\mathring{\eta}_{\nu\beta}X^{\beta}\frac{\partial}{\partial X^{\mu}%
}.\blacksquare
\end{gather*}

\textbf{Proof of (b):}%

\begin{align*}
\mathbf{J}_{\mu4}  &  =\ell\frac{\partial}{\partial x^{\mu}}-\frac{1}{4\ell
}\left(  2\eta_{\mu\nu}x^{\nu}x^{\lambda}-\sigma^{2}\delta_{\mu}^{\lambda
}\right)  \frac{\partial}{\partial x^{\lambda}}\\
&  =\ell\frac{\partial}{\partial x^{\mu}}+\frac{1}{4\ell}\sigma^{2}%
\frac{\partial}{\partial x^{\mu}}-\frac{1}{4\ell}2\eta_{\mu\nu}x^{\nu
}x^{\lambda}\frac{\partial}{\partial x^{\lambda}}=-\frac{1}{\Omega}X^{4}%
\frac{\partial}{\partial x^{\mu}}-\frac{1}{2\ell}\eta_{\mu\nu}x^{\nu
}x^{\lambda}\frac{\partial}{\partial x^{\lambda}}\\
&  =-\frac{1}{\Omega}X^{4}\left(  \frac{\Omega^{2}}{2\ell^{2}}x_{\mu}%
x^{\kappa}+\Omega\delta_{\mu}^{\kappa}\right)  \frac{\partial}{\partial
X^{\kappa}}-\frac{1}{\Omega}X^{4}\frac{\partial X^{4}}{\partial x^{\mu}}%
\frac{\partial}{\partial X^{4}}\\
&  -\frac{1}{2\ell}x_{\mu}x^{\lambda}\left(  \frac{\Omega^{2}}{2\ell^{2}%
}x_{\lambda}x^{\kappa}+\Omega\delta_{\lambda}^{\kappa}\right)  \frac{\partial
}{\partial X^{\kappa}}-\frac{1}{2\ell}\eta_{\mu\nu}x^{\nu}x^{\lambda}%
\frac{\partial X^{4}}{\partial x^{\lambda}}\frac{\partial}{\partial X^{4}}\\
&  =-X^{4}\frac{\partial}{\partial X^{\mu}}-\frac{\Omega}{2\ell^{2}}%
X^{4}x_{\mu}x^{\kappa}\frac{\partial}{\partial X^{\kappa}}-\frac{1}{2\ell
}x_{\mu}x^{\lambda}\Omega\frac{\partial}{\partial X^{\lambda}}-\frac{1}%
{4\ell^{3}}\Omega^{2}\sigma^{2}x_{\mu}x^{\kappa}\frac{\partial}{\partial
X^{\lambda}}\\
&  +X^{4}\Omega x_{\mu}\frac{\partial}{\partial X^{4}}+\frac{1}{2\ell^{2}%
}x_{\mu}\Omega^{2}x^{\lambda}\frac{\partial X^{4}}{\partial x^{\lambda}}%
\frac{\partial}{\partial X^{4}}\\
&  =-X^{4}\frac{\partial}{\partial X^{\mu}}+\left\{  X^{4}\Omega+\frac
{1}{2\ell^{2}}\Omega^{2}\sigma^{2}\right\}  x_{\mu}\frac{\partial}{\partial
X^{4}}-\left\{  \frac{1}{\ell}X^{4}+1+\frac{1}{2\ell^{2}}\Omega\sigma
^{2}\right\}  \frac{1}{2\ell}\Omega x_{\mu}x^{\lambda}\frac{\partial}{\partial
X^{\lambda}}\\
&  =-X^{4}\frac{\partial}{\partial X^{\mu}}+X_{\mu}\frac{\partial}{\partial
X^{4}}-\left\{  -\left(  1+\frac{\sigma^{2}}{4\ell^{2}}\right)  +\frac
{1}{\Omega}+\frac{1}{2\ell^{2}}\sigma^{2}\right\}  \frac{1}{2\ell}\Omega
^{2}x_{\mu}x^{\lambda}\frac{\partial}{\partial X^{\lambda}}\\
&  =-X^{4}\frac{\partial}{\partial X^{\mu}}+X_{\mu}\frac{\partial}{\partial
X^{4}}-\left\{  -1+\frac{1}{4\ell^{2}}\sigma^{2}+1-\frac{1}{4\ell^{2}}%
\sigma^{2}\right\}  \frac{1}{2\ell}\Omega^{2}x_{\mu}x^{\lambda}\frac{\partial
}{\partial X^{\lambda}}\\
&  =-X_{4}\frac{\partial}{\partial X^{\mu}}+X_{\mu}\frac{\partial}{\partial
X^{4}}.\blacksquare
\end{align*}

where we used that:%

\begin{align*}
&  \left\{  X^{4}\Omega\frac{1}{\ell}+\frac{1}{2\ell^{2}}\Omega^{2}\sigma
^{2}\right\}  x_{\mu}\frac{\partial}{\partial X^{4}}\\
&  =\left\{  -\left(  1+\frac{1}{4\ell^{2}}\sigma^{2}\right)  +\frac{2}%
{4\ell^{2}}\sigma^{2}\right\}  \Omega^{2}x_{\mu}\frac{\partial}{\partial
X^{4}}\\
&  \left\{  -\left(  1+\frac{1}{4\ell^{2}}\sigma^{2}\right)  +\frac{2}%
{4\ell^{2}}\sigma^{2}\right\}  \Omega^{2}x_{\mu}\frac{\partial}{\partial
X^{4}}\\
&  =-\frac{1}{\Omega}\Omega^{2}x_{\mu}\frac{\partial}{\partial X^{4}}=-X_{\mu
}\frac{\partial}{\partial X^{4}}.
\end{align*}

\end{document}